\documentclass[onecolumn,a4paper,accepted=2026-03-03]{quantumarticle}
\pdfoutput=1
\usepackage[T1]{fontenc}
\usepackage[utf8]{inputenc}
\usepackage[english]{babel}
\usepackage{amsmath}
\usepackage{amsfonts}
\usepackage{amssymb}
\usepackage{amsthm}
\usepackage{expl3}
\usepackage{xparse}
\usepackage{calc}
\usepackage{array}
\usepackage{mathtools}
\usepackage[numbers,sort&compress]{natbib}
\usepackage{graphicx}
\usepackage{microtype}
\usepackage{hyperref}
\usepackage{slantsc}
\usepackage{xcolor}
\usepackage[shortlabels]{enumitem}
\usepackage{qcircuit}
\usepackage{breqn}
\usepackage{apptools}
\usepackage{aligned-overset}
\usepackage{subcaption}
\usepackage{bbm}
\usepackage{qcircuit}
\usepackage{algorithm}
\usepackage{algpseudocode}
\usepackage{varwidth}

\def\et al.{\textit{et~al.}}
\def\viz.{\textit{viz.}}

\makeatletter
\def\DontBreakHereButBreakBetweenItems{\@beginparpenalty=\@highpenalty\@itempenalty=-\@lowpenalty\@endparpenalty=-\@highpenalty}
\makeatother

\def\medglue{\vskip 6pt plus 2pt minus 6pt}
\def\bigglue{\vskip 12pt plus 4pt minus 12pt}

\NewDocumentCommand{\I}{}{\mathbf{i}}

\DeclarePairedDelimiter{\bra}{\langle}{\rvert}
\DeclarePairedDelimiter{\ket}{\lvert}{\rangle}
\NewDocumentCommand{\braket}{mm}{\langle#1\vert#2\rangle}
\NewDocumentCommand{\dyad}{m}{{\lvert#1\rangle\mathchoice{\mkern-1mu}{\mkern-1mu}{\mkern-2mu}{\mkern-3mu}\langle#1\rvert}}
\DeclarePairedDelimiterX{\ExpVal}[2]{\langle#1\vert}{\vert#1\rangle}{#2}
\DeclarePairedDelimiter{\abs}{\lvert}{\rvert}
\DeclarePairedDelimiter{\norm}{\lVert}{\rVert}
\DeclareMathOperator{\Tr}{Tr}
\DeclareMathOperator{\Rank}{Rank}

\undef{\Re}
\undef{\Im}
\DeclareMathOperator{\Re}{Re}
\DeclareMathOperator{\Im}{Im}

\NewDocumentCommand{\Qubits}{m}{\mathfrak{#1}}
\NewDocumentCommand{\QubitsN}{}{\Qubits{n}}
\NewDocumentCommand{\QubitsM}{}{\Qubits{m}}
\NewDocumentCommand{\StabNorm}{}{\mathbf{D}}

\DeclareMathOperator{\Query}{\mathbf{q}}
\NewDocumentCommand{\AbsQuery}{E{_}{{}}}{\mathbf{q}#1^{abs}}
\NewDocumentCommand{\RelQuery}{E{_}{{}}}{\mathbf{q}#1^{rel}}
\DeclareMathOperator{\Sample}{\mathbf{s}}
\DeclareMathOperator{\QNorm}{\mathbf{n}}

\NewDocumentCommand{\SQ}{o}{\IfValueTF{#1}{\ensuremath{\text{SQ}_{#1}}}{\text{SQ}}}
\NewDocumentCommand{\ASQ}{o}{\IfValueTF{#1}{\ensuremath{\text{ASQ}_{#1}}}{\text{ASQ}}}
\NewDocumentCommand{\AS}{o}{\IfValueTF{#1}{\ensuremath{\text{AS}_{#1}}}{\text{AS}}}

\NewDocumentCommand{\Probability}{}{\mathbb{P}}
\NewCommandCopy{\Prob}{\Probability}
\NewDocumentCommand{\Distribution}{t'}{\mathrm{\IfBooleanTF{#1}{p'}{p}}}
\NewDocumentCommand{\AltDistribution}{t'}{\mathrm{\IfBooleanTF{#1}{q'}{q}}}
\DeclareMathOperator*{\Expectation}{\mathbb{E}}
\DeclareMathOperator*{\Variance}{\textsc{var}}
\NewCommandCopy{\Var}{\Variance}

\NewDocumentCommand{\NegativeBinom}{s}{\mathrm{NB}\IfBooleanF{#1}{\qty}}

\NewDocumentCommand{\Complex}{}{\mathbb{C}}
\NewDocumentCommand{\Real}{}{\mathbb{R}}
\NewDocumentCommand{\Natural}{}{\mathbb{N}}

\NewDocumentCommand{\Defined}{}{\coloneq}
\NewDocumentCommand{\RDefined}{}{\eqcolon}

\DeclareMathOperator*{\BigO}{O}
\DeclareMathOperator{\Poly}{poly}
\DeclareMathOperator{\Polylog}{polylog}
\NewDocumentCommand{\Identity}{}{\mathrm{I}}

\NewDocumentCommand{\AbsStrut}{O{1pt}}{\rule[-#1]{0pt}{1em + #1}}

\NewDocumentCommand{\liff}{O{}}{\llap{$\overset{#1}{\iff}$ \hskip 1em}}%
\NewDocumentCommand{\limplies}{O{}}{\llap{$\overset{#1}{\implies}$ \hskip 1em}}%
\NewDocumentCommand{\limpliedby}{O{}}{\llap{$\overset{#1}{\impliedby}$ \hskip 1em}}%
\NewDocumentCommand{\lforall}{O{}}{{\llap{$\forall_{\IfValueTF{#1}{#1}{}}$ \hskip 1em}}}
\NewDocumentCommand{\rforall}{O{}}{{\hskip 1em\rlap{$\forall_{\IfValueTF{#1}{#1}{}}$}}}

\NewDocumentCommand{\TXDist}{}{\Distribution_{\tilde{x}}}

\NewDocumentCommand{\TheoremName}{m}{\emph{(#1)}}

\AtAppendix{\counterwithin{thm}{section}}
\NewDocumentCommand{\NewTheorem}{smm}{%
    \IfBooleanTF{#1}%
        {\newtheorem*{#2}{#3}}%
        {\newtheorem{#2}[thm]{#3}}%
    }
\NewTheorem{theorem}{Theorem}
\NewTheorem{lemma}{Lemma}
\NewTheorem{corollary}{Corollary}
\NewTheorem{fact}{Fact}
\NewTheorem{remark}{Remark}

\makeatletter
\NewTheorem{thm@definition}{Definition}
\NewDocumentEnvironment{definition}{m}{\begin{thm@definition}\TheoremName{#1}}{\end{thm@definition}}
\makeatother

\NewDocumentEnvironment{comment}{+b}{}{}

%\allowdisplaybreaks

\begin{document}

\title{An access model for quantum encoded data}
\author{Miguel Mur\c{c}a}
\orcid{https://orcid.org/0000-0003-0651-7847}
\email{miguel.murca@tecnico.ulisboa.pt}
\affiliation{Physics of Information and Quantum Technologies group\char`,{} Centro de F\'{i}sica e Engenharia de Materiais Avan\c{c}ados (CeFEMA)\char`,{} Lisbon\char`,{} Portugal}
\affiliation{Instituto Superior T\'{e}cnico\char`,{} Universidade de Lisboa\char`,{} Lisbon\char`,{} Portugal}
\affiliation{PQI --- Portuguese Quantum Institute\char`,{} Lisbon\char`,{} Portugal}

\author{Paul K.\ Faehrmann}
\orcid{https://orcid.org/0000-0002-8706-1732}
\affiliation{Dahlem Center for Complex Quantum Systems\char`,{} Freie Universit\"{a}t Berlin\char`,{} 14195 Berlin\char`,{} Germany}

\author{Yasser Omar}
\orcid{https://orcid.org/0000-0003-2333-5762}
\affiliation{Physics of Information and Quantum Technologies group\char`,{} Centro de F\'{i}sica e Engenharia de Materiais Avan\c{c}ados (CeFEMA)\char`,{} Lisbon\char`,{} Portugal}
\affiliation{Instituto Superior T\'{e}cnico\char`,{} Universidade de Lisboa\char`,{} Lisbon\char`,{} Portugal}
\affiliation{PQI --- Portuguese Quantum Institute\char`,{} Lisbon\char`,{} Portugal}

\begin{abstract}
    We introduce and investigate a data access model (\kern-1pt\textit{approximate sample and
    query}) that is satisfiable by the preparation and measurement of block encoded
    states, as well as in contexts such as classical quantum circuit simulation or Pauli
    sampling. We illustrate that this abstraction is compositional and has some
    computational power. We then apply these results to obtain polynomial
    improvements over the state of the art in the sample and computational complexity of
    distributed inner product estimation. By doing so, we provide a new interpretation for
    why Pauli sampling is useful for this task. Our results partially characterize the
    power of time-limited fault-tolerant quantum circuits aided by classical computation.
    They are a first step towards extending the classical data Quantum Singular Value Transform
    dequantization results to a quantum setting.
\end{abstract}

\maketitle

\section{Introduction}

Quantum computation subsumes classical computation. This is well-known, but it is stated
under the assumption that the quantum computer is operating fault-tolerantly, i.e., that
noise in the computation can be effectively suppressed. Despite ongoing and advancing
efforts towards reaching this regime \cite{GoogleAI2023, DeCross2024, Gidney2024,
Brock2024, Acharya2024, Wills2024, Zhou2024}, to do so seems practically challenging.
Thus, the current reality of quantum and classical computation is at odds with the first
statement. This motivated a pragmatic exploration of the power of the existing limited
computers \cite{Kandala2017, Cerezo2022, Bharti2022}, typically designated Noisy
Intermediate-Scale Quantum (NISQ) devices, seeking to use them to go beyond classically
tractable problems (or ``achieve quantum advantage''). Careful analysis of these
approaches revealed subtle caveats on the power of NISQ computers and algorithms: they
must contend with optimization issues \cite{McClean2018, Larocca2024}, ingenious
classical simulation techniques that exploit the presence of noise \cite{Hner2017,
Huang2020Sim, Markov2020, Tomislav2023}, and questions regarding the role of loading in
and reading out of data, and whether that data is intrinsically quantum or not
\cite{Huang2020Shadow, Huang2021, Huang2022Quant, Cerezo2022, Chen2023}.

These considerations motivate a related, if more theoretical, question: even if adopting
a fault-tolerant quantum circuit setting, how powerful are hybrid algorithms: those that
conjugate limited quantum computation (for some notion of limited) with classical
computation? How strong can these limitations be before the model becomes at most as
powerful as fully classical computation? Answers to these questions have the potential to
impact the practical reality of NISQ: if a hybrid algorithm that does not have to contend
with noisy operations can be proven to be unable to achieve advantage in some task, then surely a
hybrid algorithm that furthermore must deal with noise cannot perform that same task. An
instance of this argument is found in Ref.~\cite{Chen2023}. Another example is found in
quantum query complexity results: a significant argument for the usefulness of quantum
computation is the proven separation between classical and quantum computation under an
oracle model \cite{Bernstein1997}. If this separation were to not hold under some
limitation on the number of queries performed coherently, this would argue against the
usefulness of NISQ algorithms and computation. (This is not the case:
Ref.~\cite{Aaronson2015} shows the existence of a problem for which a single quantum query
is sufficient for an exponential query quantum advantage, and Ref.~\cite{Sun2019} shows
that this separation also holds for total functions.)

\subsection{Prior art}
\label{subsec:prior-art}

The line of work started by Van den Nest \cite{vdNest2011} and developed by Tang, Chia
\et al.~\cite{Tang2019, Chia2020, LeGall2023} arguably addressed the power of quantum
computation from this perspective. Ref.~\cite{vdNest2011} showed that incorporating the
readout and classical post-processing into the access model of a quantum algorithm resulted in a model that could be
(efficiently) satisfied classically for some nontrivial cases. These included highly
entangled states, classes of states beyond Clifford \cite{Aaronson2004} and matchgate
\cite{Valiant2002} states, the Quantum Fourier Transform, and even quantum subroutines of
Simon's \cite{Simon1997} and Shor's \cite{Shor1994} algorithms. Therefore Ref.~\cite{vdNest2011} evidenced, early on, that
having access to limited fault-tolerant quantum computation with classical
post-processing \emph{could} be no better than purely classical processing: most
strikingly illustrated by the results on Shor's and Simon's algorithms, this foreshadowed
the relevance of data loading and readout in computational advantage.\footnotemark

\footnotetext{It also provided
further evidence that the power of quantum algorithms (with respect to classical
algorithms) was rooted beyond entanglement or an exponentially large representation
space, although this was already known from results such as those in
Refs.~\cite{Valiant2002, Aaronson2004}.}

Refs.~\cite{Tang2019, Chia2020} leveraged this line of thought (combining it with
the classical approach of Ref.~\cite{DKLR1995}) to address a complementary question:
generalizing the access model of Ref.~\cite{vdNest2011} to matrices, what was the computational power of this kind of
access with post-processing? Given that this access model could be taken to reflect
measurements on a quantum RAM, this complementary question arguably addressed the
power of a classical agent with access to limited fault-tolerant quantum computers.\footnotemark\ Their
results (where Ref.~\cite{Chia2020} strengthened and extended those of
Ref.~\cite{Tang2019}) were surprisingly strong: such access to classical data, together with
classical post-processing, could be enough to simulate a Quantum Singular
Value Transformation (QSVT) \cite{Gilyn2019} on the data, for sufficiently well behaved (for example, of low rank)
data. Note that the QSVT framework is known to encompass a
large number of significant quantum algorithms, including amplitude estimation
\cite{Brassard2002} and matrix inversion \cite{Harrow2009}, justifying the strength of
this statement.

\footnotetext{It is worth noting that the authors of Ref.~\cite{Chia2020} take a different
    point of view: given that there is not a known efficient way to prepare
    a quantum RAM, the authors interpret their results as evidence that a classical
    agent has no use for quantum power in the studied setting. This is because if the
    agent is willing to be inefficient in preparing a quantum RAM with their data, then
    there are classical (inefficient) procedures that suffice to satisfy the access model
    proposed by the authors.}

To better understand our contributions, it is useful to discuss in more detail how Chia
\et al.~\cite{Chia2020} obtained this result. Informally speaking, these authors showed
the dequantization of QSVT on classical data by first abstracting the power of
a quantum RAM (plus classical access to the loaded data) into a small set of promises on
how this data could be accessed: the \textit{sample and query} (\SQ) access model.
For a vector, these promises were that:
\begin{definition}{\textit{Sample and query} access to vectors --- Informal statement of Def.~\ref{def:sq}}
    \begin{enumerate}[nosep]
        \item One may sample an index by the corresponding entry's $\ell_2$ weight (suggesting
            a quantum measurement), taking time $\Sample$.
        \item One may read exactly the $\ell_2$ norm of the vector, taking time $\QNorm$.
        \item One may read exactly any given entry of the vector, taking time $\Query$.
    \end{enumerate}
\end{definition}
Chia \et al.\ then showed that \SQ\space access \textit{composed}, in the sense that if one had
\SQ\space access to vectors $x$ and $y$, one could compose this into \SQ\space access to a linear
combination of $x$ and $y$; or, extending the \SQ\space access model to matrices, that one
could compose \SQ\space access to $X$ and $Y$ into \SQ\space access to the inner or outer product,
or to the Kronecker product, of $X$ and $Y$. This sufficed to show that one could start
with \SQ\space access to, for example, matrix $A$ and vector $x$, and ``build up'' to a QSVT
$f(A^\dagger A) x$. It is worth noting that, to show composability, the authors had to also show
that \SQ\space \textit{computed}, in the sense that, for example, having \SQ\space access to
$x$ and $y$ sufficed to approximate the inner product of $x$ and $y$ (a perspective more
prevalent in Ref.~\cite{Tang2019}). 

More rigorously, one finds that speaking only of the \textit{sample and query} model is
insufficient to make the above argument, and rather that the notion of \SQ\space access needs to be
relaxed to a more general \textit{$\phi$-oversample and query} ($\SQ[\phi]$) access model. (It
suffices to think of $\phi$ as some relaxation parameter, where the greater $\phi$ is
than $1$, the more imperfect the access is.) Under this relaxation, Ref.~\cite{Chia2020}
showed that $\SQ[\phi]$ is approximately closed under arithmetic operations.

Since the SQ model was defined to abstract quantum RAM access to data, one might expect
that the ability to prepare and measure a given state (possibly performing some auxiliary
computation before measurement) would suffice to satisfy \SQ\space access to the state's
representation in the computational basis (for example). But the consequences of this
would be remarkable: because the results discussed above rely only on the data access promises of
\textit{sample and query}, all of the results of Ref.~\cite{Chia2020} would immediately apply to the scenario of
preparation and measurement. The composability of the \textit{sample and query} model would
immediately imply that something akin to a linear combination of unitaries
\cite{Childs2012} could be attained at a known overhead, without having to prepare the
various states coherently; and the dequantization results would imply that QSVTs could be
carried out classically for sufficiently well conditioned states. Indeed, as noted by the
authors of Ref.~\cite{Chia2020} themselves, and later in Ref.~\cite{Colter2021}, the
\textit{sample and query} access model is \emph{not} compatible with the case
where one is given the ability to prepare and measure a general quantum state; in other words,
when the input data is not classical. Thus, while applying to the particular case of
quantum RAM with classical computation, the results of \textit{sample and query} access
do not immediately apply to all limited fault-tolerant quantum computation aided by classical
post-processing.

\begin{figure*}
    \begin{subfigure}[b]{.5\linewidth}
        \includegraphics[width=\linewidth]{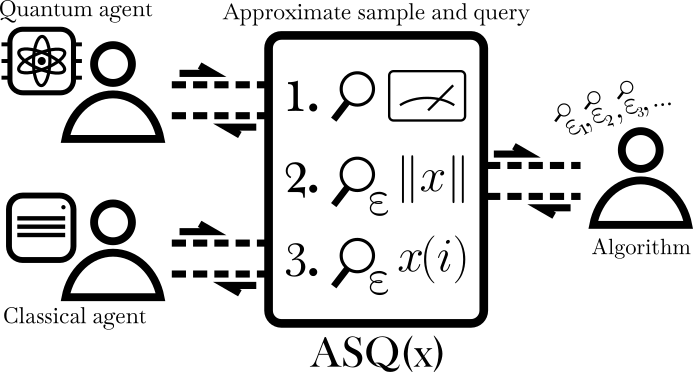}
        \caption{The \textit{approximate sample and query} model.} \label{subfig:asq}
    \end{subfigure}%
    \begin{subfigure}[b]{.5\linewidth}
        \centering
        \includegraphics[width=.83\linewidth]{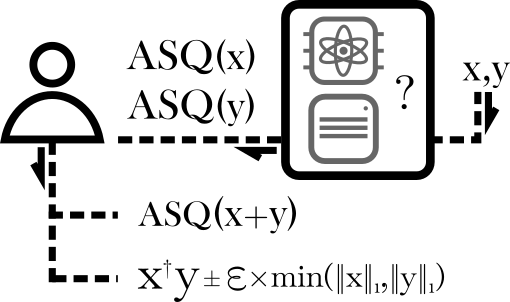}
        \vskip 3.5pt
        \caption{The power of the \textit{approximate sample and query} model.} \label{subfig:power}
    \end{subfigure}
    \caption{The \textit{approximate sample and query} (\ASQ) access model restricts access to a
        vector to three rules (Fig.~\ref{subfig:asq}). These can be satisfied by
        quantum or classical agents in the right conditions. By developing algorithms that
        operate over this abstraction, their runtime can be described in terms of the time
        necessary to carry out each access operation. If a classical agent can satisfy
        each operation fast enough, a dequantization result follows. In this paper, we
        establish some of the power of \ASQ{} (Fig.~\ref{subfig:power}): we show that an
        agent can \emph{compose} \ASQ\space into linear combinations, and that they can use
        \ASQ\space access to \emph{compute} the inner product.}
    \label{fig:summary}
\end{figure*}

\subsection{Our contributions}

In this work, we adapt the \textit{sample and query} access model, such that it \emph{can} be
satisfied by preparation and measurement of a quantum state (with some auxiliary
computation before measurement). We call this modified access model \textit{approximate sample
and query} (or \ASQ). This modified model addresses the following points, which the
original \SQ\space model cannot accommodate:
\begin{enumerate}[nosep]
    \item Computing amplitudes or norms of quantum states to exponential precision by
        preparation and measurement is expected to be inefficient \cite{Harrow2009};
    \item Estimating a certain amplitude to varying precision results in varying runtime
        (and, conversely, a given task may only require a certain degree of precision);
    \item Quantum operations may succeed only probabilistically, and this success may not
        necessarily be signaled.
\end{enumerate}
The \textit{approximate sample and query} model addresses these points by incorporating
the notion of precision (and its relationship with runtime), and admitting that the
promises are satisfied only with some probability. Thus, informally, the promises of
\ASQ\space for a certain vector $x$ become:
\begin{definition}{\kern-.5pt\textit{Approximate sample and query} for a vector $x$ --- Informal statement of Def.~\ref{def:asq}}
\begin{enumerate}[nosep]
    \item One may sample an index of $x$ by the corresponding entry's $\ell_2$ weight (suggesting a quantum measurement), or
        fail with probability at most $1/3$ by raising an error flag. This action is
        assumed to take some time $\Sample$.
    \item One may estimate the $\ell_2$ norm of $x$ to some absolute error $\epsilon$, or
        fail undetectably with probability at most $1/3$. This action takes time that
        varies with the choice of $\epsilon$, and so is denoted $\QNorm(\epsilon)$.
    \item One may estimate a certain entry of $x$ to some absolute error $\epsilon$, or
        fail undetectably with probability at most $1/3$. This action also takes time
        varying with the choice of $\epsilon$, thus denoted $\Query(\epsilon)$.
\end{enumerate}
\end{definition}

Note that the model considers different admissible modes of failure for each operation.
This is because if one can produce an estimation of a value to error $\epsilon$ with some
probability bounded away from $1/2$, then one may show (by a Hoeffding bound) that a
logarithmic overhead suffices to ``boost'' the correctness probability.\footnotemark\ However, the same
cannot be done for the index sampling procedure, and so, in that case, undetectable
failures become problematic.

\footnotetext{This is achieved by taking the median of several estimates logarithmic
on the desired failure probability, a procedure typically referred to as ``median of
means''.}

We find that, nonetheless, these demands on how data may be accessed nicely conform to
access to a quantum state: sampling may fail if, for example, post-selection for the
sampled state fails. But, in that case, the failure is evident. Contrarily, the estimation of
an amplitude by, for example, quantum phase estimation returns a value that is correct
with high probability, but without any indication of success. Leveraging existing results
on quantum tomography \cite{vanApeldoorn2023}, we characterize the runtimes $\Sample$,
$\QNorm(\epsilon)$, and $\Query(\epsilon)$ for the case where one may prepare quantum
states that encode a (subnormalized) vector in a given subspace of their computational
basis representation, without performing much more quantum computation before measuring.
[In particular, for an $\QubitsN$ qubits state, we show how to satisfy \ASQ\space access to the
subnormalized vector with $\BigO(\QubitsN)$ quantum gates over $\BigO(\log\QubitsN)$ depth
between the state preparation and measurement.] This is given in
Remark~\ref{remark:asq-prepmeasure}. We show that these runtimes do not scale, for
example, more than logarithmically with the dimension of the vector.

Perhaps surprisingly, we show that \ASQ\space access can also be satisfied in other
contexts. In particular, we characterize the runtimes $\Sample$, $\QNorm(\epsilon)$, and
$\Query(\epsilon)$ for two setups other than preparation and measurement. First, we find
that the promises of \ASQ\space may be satisfied by a classical agent for a given state $x$ if
they are given a description of $x$ by means of a of a quantum circuit that prepares it,
and that circuit has few $T$ gates and otherwise uses Clifford gates. This leverages
existing results on the classical simulation of such circuits \cite{Bravyi2016}, and is
given in Remark~\ref{remark:asq-simt}. Secondly, we find that if an agent has the ability
to perform \textit{Pauli sampling} of a certain quantum state $x$, as well as the ability
to prepare and measure $x$, then one can satisfy \ASQ\space access to the \emph{representation
of $x$ in a Pauli basis}. (Pauli sampling here means sampling a Pauli string with
probability proportional to its expectation value.) This is given in
Remark~\ref{remark:asq-pauli}.

We summarize these results in Fig.~\ref{subfig:asq}.

Our motivation for investigating \textit{approximate sample and query} is threefold.
First: by defining a model that describes (to some extent) the ability to measure a
quantum state, results on the computational power of this model (plus classical
computation) establish results on the computational power of time-limited fault-tolerant
quantum computation, when aided by classical computation. For example: a result on the
complexity of computing the inner product of $x$ and $y$ given \ASQ\space access to $x$ and $y$
establishes an upper bound to the computational complexity of the distributed computation
of the inner product of two states. This is because if Alice and Bob can prepare, locally
and respectively, states $\ket{\psi}$ and $\ket{\phi}$, then Alice and Bob have \ASQ\
access to the representation of $\ket{\psi}$ and $\ket{\phi}$ in the computational basis
(as stated above). Thus, these can be taken to be $x$ and $y$, from whence using the \ASQ\
result, Alice and Bob can compute the inner product between the two states with local
operations and classical communication between them. Secondly: the compositional aspect of
\SQ\space is conceptually attractive, and we wish to find a similar property for the case of
quantum data. As mentioned in the previous section, the power to prepare a linear
combination of states is a significant quantum operation \cite{Childs2012}. If one admits
that the \ASQ\space model captures (to some extent, and as established with computational
results) the power of being able to prepare and measure a state, then a composition result
on linear combinations shows that it is possible to (and effectively characterizes the
overhead of) ``faking'' access to a linear combination of states, when given access to
each state individually and complementing this with classical computation. Thirdly: given
that Chia \et al.~\cite{Chia2020} could dequantize QSVT based on the language of
\textit{sample and query}, this raises the question of whether an analogous result can be
obtained for a sufficiently similar model that nonetheless reflects quantum access. Beyond
the evident practical significance of such a quantization, as argued in the previous
section, such an analogous result would provide much insight into the question: ``in
what conditions can a quantum computation be halted, and the remaining computation be
carried out classically, without too much overhead?''

We address the first two points directly, and, with this, take a first step in addressing
the third point. Namely, we find that the modifications of \ASQ\space (with respect to \SQ\space and
as discussed above) mean the results of Chia~\et al.\ no longer apply. Thus, we first
show that the \ASQ\space model remains compositional, not going as far as showing that it is
closed under arithmetic, but rather taking a first step in this direction by focusing on
composition by linear combinations (motivated by linear combinations of unitaries, as
stated). This requires us to define a relaxed access model, $\ASQ[\phi]$ (in line with
$\AS[\phi]$), after which we show the following:
\begin{lemma} \TheoremName{Linear combinations of \ASQ\space --- Informal statement of Lemmas~\ref{lemma:lin-comb}~and~\ref{lemma:lin-comb-rand}.}
    Given $\ASQ[\phi]$ access to complex vectors $x_1, \ldots, x_\tau$, and complex coefficients
    $\lambda_1, \ldots, \lambda_\tau$, one may obtain $\ASQ[\phi']$ access (with certainty)
    to the linear combination $\lambda_1 x_1 + \cdots + \lambda_\tau x_\tau$, for suitable
    $\phi$. If the vectors are approximately orthogonal and of constant and same size,
    then $\phi' \sim \phi\tau^2$. If, instead, one is willing to fail the construction with
    some probability, then $\phi' \sim \phi\tau$, matching that of Chia~\et al.'s Lemma~3.6.
\end{lemma}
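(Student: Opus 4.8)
The plan is to realize $\ASQ[\phi']$ access to $y \Defined \lambda_1 x_1 + \cdots + \lambda_\tau x_\tau$ by implementing each of its three primitives from calls to the primitives of the $x_k$, and then to track how precision, failure probability, and the oversampling parameter propagate through these reductions. The base factor of $\tau$ will come from a single Cauchy--Schwarz step, exactly as in the classical construction; the whole question is how the \emph{approximate} and sometimes \emph{undetectably failing} character of the $\ASQ$ primitives interacts with that step, and this is where the two regimes will split.

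First, the entry query $\Query$ for $y$. Since $y_i = \sum_k \lambda_k x_{k,i}$, I would query each $x_{k,i}$ to absolute error $\epsilon_k \Defined \epsilon/(\tau\abs{\lambda_k})$ and return $\sum_k \lambda_k \hat{x}_{k,i}$; the triangle inequality then gives total error at most $\epsilon$. Each component query fails undetectably with probability $\le 1/3$, but this is precisely the failure mode that can be boosted: by the median-of-means amplification of the footnote I would drive each to failure probability $\le \delta/\tau$ at $\BigO(\log(\tau/\delta))$ multiplicative cost and union bound, so that $\Query(\epsilon)$ for $y$ costs only a logarithmic overhead. Next, the sampler $\Sample$ for $y$, for which I would use the two-stage importance sampler: draw a component $k$ with probability proportional to a weight $w_k$ built from $\abs{\lambda_k}^2$ and an overestimate $\bar{s}_k \ge \norm{x_k}$ from the component norm primitive, then draw an index $i$ from the $\ASQ[\phi]$ sampler of $x_k$. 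If the inner sampler raises its flag the composite raises its flag, so the composite failure stays $\le 1/3$; conditioned on no flag, $i$ follows a mixture of the component samplers, up to a bounded index-dependent reweighting from differing component failure rates, which I fold into a constant factor of $\phi'$. The key estimate is that this mixture oversamples the target: the bound $\abs{y_i}^2 = \abs{\sum_k \lambda_k x_{k,i}}^2 \le \tau \sum_k \abs{\lambda_k}^2 \abs{x_{k,i}}^2$ produces the base factor of $\tau$.

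The delicate step is the surrogate-norm primitive $\QNorm$ for $y$, which must certify an overestimate $\bar{s}_y \ge \norm{y}$ controlling the oversampling ratio $\phi' = \bar{s}_y^2/\norm{y}^2$. Here the ``approximately orthogonal and of constant and same size'' hypothesis does double duty: it lower bounds $\norm{y}^2 \approx \sum_k \abs{\lambda_k}^2\norm{x_k}^2$, ruling out destructive interference that would otherwise make $\phi'$ blow up, and it makes the weights $w_k$ essentially determined by the exactly-known $\lambda_k$. To build $\bar{s}_y$ \emph{with certainty}, I would treat the imprecision and possible failure of the $\tau$ component norm estimates adversarially when certifying $\bar{s}_y^2 \ge \norm{y}^2$; this conservative padding compounds with the Cauchy--Schwarz factor and is what inflates the bound to $\phi' \sim \phi\tau^2$. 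Alternatively --- the randomized construction --- I would tighten the surrogate by rejection sampling against queried entries of $y$, thinning the excess mass of the mixture so the effective oversampling ratio drops to $\phi' \sim \phi\tau$; the price is that an atypical run (too many rejections) aborts, so the construction succeeds only with probability bounded away from failure, recovering the overhead of Chia \et al.'s Lemma~3.9.

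I expect the main obstacle to be exactly this $\QNorm$ step: forcing the oversampling domination to hold \emph{with certainty} rather than merely in expectation or with high probability, while the only available norm data are imprecise and fail undetectably (and so, unlike the sample flag, cannot be guarded by a clean conditioning argument). Closing the deterministic argument is what costs the extra factor of $\tau$, and the rejection-sampling comparison is the cleanest way to see both that this factor is removable and that removing it is precisely what introduces a construction-failure probability.
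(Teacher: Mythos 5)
Your skeleton (per-primitive reduction, Cauchy--Schwarz for the base factor of $\tau$, median boosting for the two-sided-error queries) matches the paper's, but you misidentify the crux of both regimes, and the mechanisms you propose for them do not work in this model. First, the deterministic case. You want the mixture weights to involve certified overestimates $\bar{s}_k \geq \norm{x_k}$ obtained from the component norm primitive, and you propose to ``treat the imprecision and possible failure adversarially'' to certify domination with certainty. But the \ASQ{} norm primitive fails \emph{undetectably} with probability up to $1/3$ and can then return an arbitrary value, so no bounded padding certifies anything; this is exactly the obstacle you name in your last paragraph, and it is not closable along the route you describe. The paper's Lemma~\ref{lemma:lin-comb} sidesteps it entirely: the component $j$ is drawn with probability $\abs{\lambda_j}^2/\norm{\lambda}^2$, which is \emph{exactly known}, and the per-component normalization $\norm{\tilde{x}_k}^2$ in the dominating vector is replaced by the identically valid uniform bound $\sum_j \norm{\tilde{x}_j}^2$ --- no norm estimate enters the sampler at all, which is why the construction succeeds with certainty. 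The price of that worst-case replacement, together with possible destructive interference lowering $\norm{u}$, is $\phi' \leq \phi\tau\norm{X}_F^2\norm{\lambda}^2/\norm{X\lambda}^2 \leq \phi\tau^2\kappa^2$, where $\kappa$ is the condition number of the matrix of columns $x_1,\ldots,x_\tau$; your account never produces this $\kappa$, and the informal ``$\phi'\sim\phi\tau^2$ for approximately orthogonal, same-size vectors'' is precisely the $\kappa=\BigO(1)$ specialization. Relatedly, $\QNorm$ for $u$ does not have to ``certify'' anything controlling $\phi'$: the oversampling factor is fixed by the entrywise-dominating vector $\tilde{u}$, and the norm primitive only needs $\norm{\tilde{u}}^2$ to additive error with $1/3$ two-sided failure, which follows from summing (boosted) additive-error estimates of the $\norm{\tilde{x}_j}^2$.

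Second, the randomized case. Your mechanism --- rejection sampling against queried entries of $u$ to thin the excess mass --- requires exact knowledge of $u(i)$ and of the target probabilities, which is exactly what \ASQ{} withholds; the paper's own footnote identifies this (exact queries used to cancel terms) as the reason Chia~\et al.'s arguments do not port, and with only $\epsilon$-approximate, $1/3$-failing queries the acceptance ratio cannot be computed. Lemma~\ref{lemma:lin-comb-rand} instead recovers $\phi'\sim\phi\tau$ by a one-time preliminary step: converting the absolute-error norm primitive into relative-error-$1/2$ estimates $\hat{n}_j^2$ of each $\norm{\tilde{x}_j}^2$ (Lemma~\ref{lemma:rel-err}), then using $\abs{\lambda_j}^2\hat{n}_j^2$ as mixture weights so that the per-component normalization is nearly tight. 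The construction's failure probability is exactly the probability that one of these $\tau$ estimates is wrong --- not the probability of ``too many rejections'' --- and a side benefit is that $\QNorm$ for $u$ becomes a lookup. So while your high-level picture of where the two regimes differ (exactly known weights versus estimated weights) is directionally right, the concrete constructions you give for both halves would need to be replaced.
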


We also show that the \ASQ\space model still holds some computational power. Again the
results of Chia~\et al. no longer apply, so we focus on the task of inner product
estimation. We show the following:
\begin{lemma} \TheoremName{Inner product estimation with \ASQ\space --- Informal statement of Lemmas~\ref{lemma:inprod-asym}~and~\ref{lemma:inprod-sym}.}
    \label{lemma:inform-inprod}
    Suppose (without loss of generality, normalized) complex vectors $x$ and $y$ of
    exponentially large dimension. Given efficient $\ASQ[\phi]$ access to $x$ and the ability to
    compute entries of $y$, one may compute $x^\dagger y$ to error $\epsilon \norm{y}_1$
    efficiently with a probability of at least $2/3$. If one is given efficient $\ASQ[\phi]$ access to
    \emph{both} $x$ and $y$, then $x^\dagger y$ is efficiently computable to error
    $\epsilon \min(\norm{x}_1, \norm{y}_1)$ with probability at least $2/3$.
\end{lemma}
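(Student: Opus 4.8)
The plan is to reduce inner-product estimation to the estimation of the mean of a bounded-variance complex random variable, realized by importance sampling against the $\ell_2$ distribution supplied by the $\Sample$ operation. In the asymmetric case I would draw an index $i$ with probability proportional to $\abs{x_i}^2$ (more precisely, from the over-sampling distribution underlying $\ASQ[\phi]$ access, whose total weight is $\phi\norm{x}_2^2$), and form the estimator $Z = \norm{x}_2^2\, \bar{x}_i\, y_i / \abs{x_i}^2 = \norm{x}_2^2\, y_i / x_i$, taking $y_i$ from the given oracle and $\norm{x}_2$, $x_i$ from the $\QNorm$ and $\Query$ operations. A direct computation gives $\Expectation[Z] = x^\dagger y$ and $\Expectation[\abs{Z}^2] \le \phi\,\norm{x}_2^2\norm{y}_2^2$, so that $\BigO(\phi/\epsilon^2)$ samples reach additive error $\epsilon$ (in the normalized setting) by Chebyshev, with the failure probability pushed below any $\eta$ at a $\BigO(\log(1/\eta))$ overhead via a median-of-means boost.

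The crux of the argument, and the origin of the $\ell_1$-norm error, is that $\ASQ$ only provides \emph{approximate} queries, so the reciprocal $1/x_i$ cannot be evaluated reliably when $\abs{x_i}$ is small; this is exactly where the analysis departs from the exact-query estimators available in the classical \SQ\ model, where division is performed in closed form. I would handle this by \emph{truncating}: fix a threshold $T \sim \epsilon$ and zero out the contribution of any sampled index whose queried magnitude falls below $T$. Every discarded index satisfies $\abs{x_i} \lesssim T$, so the total bias is at most $\sum_{i:\abs{x_i}<T}\abs{x_i}\abs{y_i} \le T\norm{y}_1$, which is the claimed $\epsilon\norm{y}_1$ error; meanwhile, on the retained indices the entry is bounded away from zero, so querying $x_i$ to absolute precision $\delta \ll T$ makes $y_i/x_i$ computable to negligible error, and truncation can only decrease the second moment, preserving the variance bound above. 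Since $\norm{y}_1 \ge \norm{y}_2$, the sampling error $\epsilon$ is dominated by this truncation bias, so the final guarantee is $\epsilon\norm{y}_1$.

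For the symmetric case, the key observation is that $\ASQ$ access to both $x$ and $y$ lets us choose which vector to sample from: sampling from $x$ and querying $y$ gives error $\epsilon\norm{y}_1$ by the above, whereas sampling from $y$ and querying $x$ gives, by the identical argument, error $\epsilon\norm{x}_1$, so running the cheaper of the two achieves $\epsilon\min(\norm{x}_1,\norm{y}_1)$. The only new feature is that the queried vector is now itself approximate; but on retained indices $\abs{x_i}\ge T$, so an absolute error $\delta'$ in the queried entry perturbs $Z$ by at most $\norm{x}_2^2\delta'/T$ deterministically, and taking $\delta'$ polynomially small in $\epsilon$ keeps this below the target at only logarithmic cost in the $\Query$ time.

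I expect the main obstacle to be the truncation analysis, where one must simultaneously (i) bound the discarded-mass bias by $\norm{y}_1$, (ii) verify that misclassification of near-threshold indices under approximate $\Query$ (whose queried magnitude may sit on the wrong side of $T$) breaks neither the bias nor the variance bound, since such terms contribute $\BigO(T\abs{y_i})$ regardless, and (iii) track how the precisions of $\QNorm(\epsilon)$ and $\Query(\epsilon)$ — each with its own boostable, undetectable failure — together with the detectable failure of $\Sample$, must be set so that the combined estimate succeeds with probability at least $2/3$ while keeping every operation's runtime, and the sample count $\BigO(\phi/\epsilon^2)$, polylogarithmic in the dimension and hence efficient. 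The remaining steps — the unbiasedness and second-moment computations, the Chebyshev-plus-median concentration, and the union bound over all oracle calls — are routine.
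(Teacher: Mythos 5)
There is a genuine gap at the heart of the asymmetric estimator, and it is precisely the point where the \ASQ{} model departs from \SQ. Your estimator $Z = \norm{x}^2\,\bar{x}_i\,y_i/\abs{x_i}^2$ is the standard importance-sampling estimator of Refs.~\cite{Tang2019,Chia2020}, and it is unbiased only when the index $i$ is drawn from $\Distribution_x$ itself, or when the weight in the denominator is the \emph{actual} sampling probability. Under $\ASQ[\phi]$ access you sample from $\Distribution_{\tilde{x}}$, so the unbiased weight is $\Distribution_{\tilde{x}}(i) = \abs{\tilde{x}(i)}^2/\norm{\tilde{x}}^2$; with your weight the expectation is $(\norm{x}^2/\norm{\tilde{x}}^2)\sum_i (\abs{\tilde{x}(i)}^2/\abs{x(i)}^2)\,\bar{x}(i)y(i)$, which is not $x^\dagger y$, and the per-index ratio $\abs{\tilde{x}(i)}^2/\abs{x(i)}^2$ is only controlled in aggregate by $\phi$, not entrywise. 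Crucially, Definition~\ref{def:aosq} deliberately grants only \emph{sample} access to $\tilde{x}$ --- no query access --- so the correct weight cannot be read off either; this is exactly the obstruction the paper flags in the footnote about proofs that ``rely on exactly querying a given entry or the norm of the vector to cancel out undesirable terms.'' The paper's proof of Lemma~\ref{lemma:inprod-asym} resolves this by first spending $\BigO(\gamma^{-2}\log d)$ samples to build an empirical histogram of $\Distribution_{\tilde{x}}$ (Theorem~\ref{thm:smpl-estim}), then using the estimated probability $\hat{p}_i$ both as the importance weight [dividing by $\hat{p}_i + \gamma/8$] and as the rejection criterion [discard when $\hat{p}_i \leq \tfrac{3}{2}\gamma$]. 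That histogram step --- which also accounts for the extra $\phi^2\epsilon^{-4}\log d\,\Sample$ term in the stated runtime --- is the missing idea in your proposal; without it the scheme is not computable from the promised access.

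Your truncation analysis is otherwise in the right spirit and closely parallels the paper's: discarded indices satisfy $\abs{x(i)} \leq \abs{\tilde{x}(i)} = \norm{\tilde{x}}\sqrt{\smash[b]{\Distribution_{\tilde{x}}(i)}} \lesssim \sqrt{\gamma}\,\norm{\tilde{x}}$, so the discarded mass is bounded by $\BigO(\sqrt{\gamma}\,\norm{\tilde{x}}\norm{y}_1)$, which is where the $\epsilon\norm{y}_1$ error comes from in the paper as well; the handling of approximate queries on retained indices and the Chebyshev-plus-median concentration are also as in the paper. Two smaller points on the symmetric case: the paper samples from the mixture $(\Distribution_{\tilde{x}}+\Distribution_{\tilde{y}})/2$ rather than choosing one side, which yields the $\min(\norm{x}_1,\norm{y}_1)$ dependence without needing to know in advance which $1$-norm is smaller (your ``run the cheaper of the two'' presumes that knowledge); and the paper's symmetric bound carries normalization factors, reading $\epsilon[1+\min(\norm{x}_1/\norm{x},\norm{y}_1/\norm{y})]$, with the query precision scaling as $\Query(\epsilon^2\norm{x}\norm{y})$, consistent with your $\delta'\sim\epsilon^2$ estimate.
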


The reader may be alarmed by the presence of the $1$-norm: it may be of the order of the
dimension of the vectors, which is taken to be exponentially large. However, we later find
this dependence to be sufficient. We furthermore interpret this quantity as a concrete
measure of ``peakedness'', which may be controlled by a suitable choice of representation
basis.

We summarize these results in Fig.~\ref{subfig:power}.

Finally, we show that the \ASQ\space model has explanatory power and that the promises encoded
by this model suffice to obtain improved results on the topic of distributed inner product
estimation. We show that, under the light of the \ASQ\space model, Pauli sampling is useful to
compute the distributed inner product \cite{Anshu2022,Hinsche2024} because, for
sufficiently well-conditioned states, their representation in the Pauli basis has
reasonably small $1$-norm. We do so by first specializing Lemma~\ref{lemma:inform-inprod}:
\begin{lemma} \TheoremName{Inner product estimation with exactly \ASQ\space for real vectors --- Informal statement of Lemma~\ref{lemma:inprod-real}.}
    If $x$ and $y$ are (without loss of generality, normalized) \emph{real} vectors, and
    one has \ASQ\space access to them (\emph{not} $\ASQ[\phi]$ access), and if $\kappa$ is
    an upper bound to $\max(\norm{x}_1, \norm{y}_1)$, then the dependence on $\kappa$ of
    Lemma~\ref{lemma:inform-inprod} may be improved. This requires that the runtime of
    estimating an entry [$\Query(\epsilon)$] is not too large.
\end{lemma}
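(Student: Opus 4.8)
The plan is to reuse the importance-sampling estimator behind Lemma~\ref{lemma:inform-inprod}, but to exploit the two simplifications afforded here --- that the vectors are real and that the access is exact \ASQ\ rather than $\ASQ[\phi]$ --- to move the entire $\kappa$ dependence out of the sample count and into the entry-query precision alone. Normalizing so that $\norm{x}_2 = \norm{y}_2 = 1$ and writing $x^\dagger y = \sum_i x_i y_i$, I would draw an index $i$ from the exact distribution $p_i = x_i^2$ using $\Sample$ (re-drawing whenever the error flag is raised, which costs only a constant factor since it fires with probability at most $1/3$) and form the single-shot estimator $Z = \Sign(\hat x_i)\,\hat y_i / \abs{\hat x_i}$, where $\hat x_i$ and $\hat y_i$ are entry estimates from $\Query$ (in the asymmetric case $y_i$ is available exactly). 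In the idealized case $\hat x_i = x_i$, $\hat y_i = y_i$ this is unbiased, $\Expectation_{i\sim p}[Z] = \sum_i x_i^2\, y_i/x_i = x^\dagger y$, and --- because the access is exact, so there is no $\phi$-inflation of the sampling law, and because the vectors are real, so no complex phase enters --- its second moment is \emph{exactly} $\sum_i x_i^2\, y_i^2/x_i^2 = \norm{y}_2^2 = 1$. Hence $\Variance[Z] \le 1$ with no dependence on $\kappa$, and averaging $\BigO(1/\delta^2)$ independent copies (boosted to success probability $2/3$ by a median-of-means over $\BigO(1)$ batches and Chebyshev) controls the statistical error by $\delta$.

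Next I would isolate where $\kappa$ enters. Replacing the exact entries by estimates of absolute precision $\epsilon$ perturbs the estimator, and a direct calculation bounds the resulting bias by $\Expectation_{i\sim p}\bigl[\abs{Z - Z_{\mathrm{ideal}}}\bigr] \lesssim \epsilon\,\norm{y}_1 \le \epsilon\,\kappa$, using that the sampling weight $x_i^2$ cancels the $1/x_i^2$ produced by differentiating $1/x_i$. Choosing $\epsilon = \Theta(\delta/\kappa)$ makes this bias at most $\delta/2$, at a per-query cost $\Query(\Theta(\delta/\kappa))$. Since $\norm{x}_2$ is itself known only through $\QNorm$, I would estimate it to relative error $\BigO(\delta)$ --- at cost $\QNorm(\Theta(\delta\norm{x}_2))$ --- and fold the induced multiplicative error into the same budget. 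The upshot is a total runtime of the schematic form $\BigO(1/\delta^2)\,[\Sample + \Query(\Theta(\delta/\kappa))] + \QNorm(\Theta(\delta))$, in which $\kappa$ appears \emph{only} inside the argument of $\Query$; this is precisely the improvement over Lemma~\ref{lemma:inform-inprod}, and it is genuine exactly when $\Query(\epsilon)$ grows slowly (e.g.\ polylogarithmically) as $\epsilon\to 0$, which is the stated hypothesis.

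The main obstacle is the ill-conditioning of the estimator on the rare events where the sampled entry is small, $\abs{x_i}\lesssim\epsilon$: there $1/\abs{\hat x_i}$ can blow up (or even change sign), so the clean bias bound above, which implicitly assumed $\abs{\hat x_i}\asymp\abs{x_i}$, can fail, and in exponential dimension the aggregate probability $\sum_{\abs{x_i}\lesssim\epsilon} x_i^2$ of landing on such an entry need not be negligible. I expect the fix to be a clipping of $Z$ at a magnitude $C$ chosen just large enough that honest samples are untouched, together with a separate accounting of the clipped tail: the tail's contribution to $x^\dagger y$ is controlled through $\norm{y}_1$ exactly as above, while clipping caps the variance so that the $\kappa$-free concentration survives. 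Making these two thresholds ($\epsilon$ and $C$) mutually consistent --- so that neither the clipping bias nor the small-entry tail reintroduces a polynomial factor of $\kappa$ into the sample count --- is the delicate step, and it is here that both the reality of the vectors (a single robust sign rather than a phase to resolve) and the exactness of the sampling law are essential.
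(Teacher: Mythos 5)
Your overall strategy matches the paper's in its essential mechanism: because the vectors are real and the sample access is exact (no $\phi$-oversampling), the sampling probability of an index is a known function of the queried entries, so the probabilities can be computed ``for free'' from the entry queries rather than from an $\BigO(\gamma^{-2}\log d)$ histogram; the small-entry tail is cut off at a threshold of order $\epsilon/\kappa$ and charged to the $1$-norm; and $\kappa$ ends up only inside the argument of $\Query(\cdot)$ while the variance stays $\BigO(1)$. This is exactly the plan of Lemma~\ref{lemma:inprod-real}.

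The genuine gap is the one you flag yourself and then leave unresolved. Your single-shot estimator $\hat y_i/\hat x_i$ under $i\sim\Distribution_x$ is unbounded; its second moment $\sum_i x(i)^2\,\hat y_i^2/\hat x_i^2$ equals $\norm{y}^2$ only when $\hat x_i = x(i)$ exactly, and with $\epsilon$-precise queries it can blow up on entries with $\abs{x(i)}\lesssim\epsilon$. The clipping-plus-tail-accounting you sketch is precisely ``the delicate step,'' and you do not carry it out: one must check that the accept/reject decision based on the noisy $\hat x_i$ rather than on $x(i)$, and the contribution of order $\epsilon_Q^2\cdot\abs{\{i:\abs{x(i)}\geq\gamma\}}$ to the second moment, do not reintroduce a factor of $\kappa$ or of $d$. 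The paper sidesteps this entirely by choosing a different estimator: it samples $i$ from the symmetric mixture $[\Distribution_x(i)+\Distribution_y(i)]/2$ and uses $\hat x_i\hat y_i/[(\hat x_i^2+\hat y_i^2)/2]$, which by the AM--GM inequality is bounded by $1$ in absolute value for \emph{any} values of the noisy queries, so the variance bound is automatic and no clipping is needed; the bias from query noise is then controlled via the Lipschitz constant $\BigO(1/\gamma)$ of $f(a,b)=ab/(a^2+b^2)$ away from the origin, which is also why the paper asks for query precision $\Theta(\epsilon^2/\kappa)$ rather than the $\Theta(\epsilon/\kappa)$ your first-order cancellation optimistically suggests. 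Your route can likely be completed, but as written the bias and variance control near the threshold --- the only place where the two simplifying hypotheses actually have to do work --- is missing.
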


From here, we combine three observations: the first is that the inner product of two
states may be computed in a Pauli basis (where the representations are real). The second
is that the $1$-norm of a state's representation in the Pauli basis is directly connected
to certain measures of ``stabilizerness'' \cite{Campbell2011,Leone2022}. The third
observation is that prior art on Pauli sampling \cite{Hinsche2024} gives explicit
procedures and runtimes for Pauli sampling, and thus gives $\Sample$, $\QNorm(\epsilon)$
and $\Query(\epsilon)$ for \ASQ\space access to the states' representation in the Pauli basis.
By adapting this prior art somewhat (Corollary~\ref{corollary:pauli-samp}) and directly
computing overalps between Pauli amplitudes, we establish
\ASQ\space runtimes for Pauli sampling under a constraint in the ``stabilizerness'' measure
corresponding to the $1$-norm of the Pauli basis representation of the states.

This allows us to state a distributed inner product estimation algorithm that polynomially
improves over the runtime and sample complexity of the state of the art
(Theorem~\ref{thm:pauli-inprod}).

\section{Notation and preliminaries}

\subsection{Notation}

For some integer $n$, $[n]$ denotes the set $\{1, 2, \ldots, n\}$. $\I$ denotes the
imaginary unit. We denote vectors by lowercase letters. The $i$th entry of some vector
$x$ is denoted $x(i)$. $\norm{\cdot}_p$ refers to the $p$-norm for vectors, and to the
induced $p$-norm for matrices (thus, for some matrix $A$, $\norm{A}_2$ refers to the
induced two norm, or the Schatten $\infty$-norm). $\norm{A}_F$ refers to the Frobenius
norm (or the Schatten $2$-norm) of $A$. When unspecified, $\norm{\cdot}$ refers to
$\norm{\cdot}_2$. We denote probability distributions by upright lowercase letters, e.g.,
$\Distribution$. We denote the probability of a certain event (for example, $X=x$) as
$\Prob[X=x]$. Given a probability distribution $\Distribution$, we denote that a
random variable $X$ follows that distribution as $X \sim \Distribution$. We likewise
denote the probability of observing value $x$ under this distribution as
$\Distribution(x)$ (i.e., this is equivalent to denoting $\Prob[X=x]$ for $X \sim
\Distribution$). We will conflate the random variable with the value it takes where it is
clear from context. The expectation of $X$ is denoted $\Expectation X$, and its variance
$\Var X$. For an arbitrary vector $x$ of dimension $d$, we often refer to a corresponding
distribution, denoted $\Distribution_x$, to be a probability distribution with support
$[d]$, satisfying ${\Distribution_x}(i) \equiv \abs{x(i)}^2/\norm{x}^2$ (i.e.,
$\Distribution_x$ is the $\ell_2$ sampling distribution of $x$). $\Identity_d$ denotes
the $d$-dimensional identity matrix. Approximate quantities or estimates are generally
denoted with a hat. All logarithms ($\log$) are natural unless otherwise noted.
$\Poly(x)$ denotes some polynomial over $x$, and likewise $\Polylog(x)$ denotes some
polynomial over $\log x$. $\log^k x$ denotes the $k$th power of $\log x$. The Fraktur font is used to make clear
when referring to a number of qubits (thus, the base-$2$ logarithm of the dimension of
the underlying state vector). E.g., an $\QubitsN$ qubits pure
quantum state has a $d \equiv 2^\QubitsN$-dimension underlying quantum state vector. ``$x$
is defined as $y$'' is denoted $x \Defined y$, and ``$x$ taken to be the definition of $y$''
is denoted $x \RDefined y$. Moreover, $x \equiv y$ emphasizes a coincidence by
the definitions of $x$ and $y$. ``Big O notation'' [$\BigO(\cdot)$] is used with its
usual definition, with the use of ``big omega'' [$\Omega(\cdot)$] and ``big theta''
[$\Theta(\cdot)$] as popularized by Knuth \cite{Knuth1976}. The notation
$\tilde\BigO(\cdot)$ indicates that logarithmic terms on the quantities involved in the
expression are suppressed. E.g., $\tilde\BigO(xy + z)$ indicates terms of $\Polylog(x)$,
$\Polylog(y)$, and $\Polylog(z)$ are suppressed in the argument.

\subsection{Block encodings and projective encodings}
\label{sec:block-enc}

Quantum circuits encode, forcibly, unitary maps. However, many mappings of interest are
non-norm preserving. To deal with this, as well as to more generally encode non-unit norm
vectors and matrices, one possible strategy is to encode the object or map in a suitable
subspace, i.e., perform a block encoding:
\begin{definition}{Block encoding \cite{Gilyn2019}}
    For $A$ an $\QubitsN$ qubit operator, if $\alpha,\epsilon > 0$ and $\QubitsM \in
        \Natural$, then an $(\QubitsN + \QubitsM)$-qubits unitary $U$ is an $(\alpha,
        \QubitsM, \epsilon)$-block encoding of $A$ if
    \begin{equation}
        \norm{A - \alpha ( \bra{0}^{\otimes\QubitsM} \otimes \Identity_{2^{\QubitsN}}) U (\ket{0}^{\otimes\QubitsM} \otimes \Identity_{2^{\QubitsN}}) } \leq \epsilon.
    \end{equation}
\end{definition}

Informally, $U$ is a block encoding of $A$ because the block of $U$ spanned by the
$\ket{0}^{\otimes\QubitsM}$ subspace of the first $\QubitsM$ qubits is approximately
proportional to $A$. This scaling is necessary because a block of $U$ can never have norm
greater than $U$ itself, which cannot have norm greater than $1$. Therefore, it
is only possible to block encode $A$ if $\alpha \geq \norm{A}$.

It is possible to define an analogous notion for vectors:
\begin{definition}{Block encoding --- vectors}
    For $v$ a $2^\QubitsN$ dimensional complex vector, defined up to a global phase, and
    if $\alpha, \epsilon > 0$ and $\QubitsM \in \Natural$, then an $(\QubitsN +
        \QubitsM)$-qubits quantum state $\ket{\psi}$ is an $(\alpha, \QubitsM,
        \epsilon)$-block encoding of $v$ if
    \begin{equation}
        \norm{v - \alpha \, (\bra{0}^{\otimes\QubitsM} \otimes \Identity_{2^\QubitsN}) \ket{\psi}} \leq \epsilon.
    \end{equation}

    An $\QubitsM$-block encoding of $v$ refers to a $(1, \QubitsM, 0)$-block encoding of
    $v$. Likewise an $(\QubitsM, \epsilon)$-block encoding of $v$ refers to a $(1,
    \QubitsM, \epsilon)$-block encoding of $v$.
\end{definition}

Block encodings may be further generalized to the notion of ``projected encodings'',
whereby there is a more generic projection operator, such that the operator or vector of
interest is defined in the projector's image:
\begin{definition}{Projected encoding \cite{Gilyn2019}}
    Let $\Pi$ and $\tilde{\Pi}$ be projection operators, and $U$ a unitary operator. Then
    $(U, \Pi, \tilde{\Pi})$ form a projected encoding of an operator $A$ if $\tilde{\Pi} U
        \Pi \equiv A$.
\end{definition}
This is straightforwardly generalized to approximations and operators of norm greater
than one.

\subsection{One-sided and two-sided error}
\label{sec:sided-err}

A probabilistic algorithm (as are usually quantum algorithms) may fail in multiple,
operationally different ways. For this reason, we wish to differentiate these failure modes in the analysis below. Borrowing from the language of probabilistic Turing
machines and decision problems \cite{Arora2009}, we will refer to ``one-sided'' and
``two-sided'' types of error:
\begin{definition}{One-sided error}
    A probabilistic decision algorithm fails with one-sided error $p$ if with probability
    $1-p$ the algorithm correctly accepts or rejects the input and with probability $p$
    rejects valid input. An algorithm with one-sided error never accepts invalid input.
\end{definition}

For a non-decision algorithm, suppose that the algorithm outputs the result and a flag
indicating if the result is valid. A one-sided algorithm may falsely report that the
provided output is wrong, but will never report bogus output as correct. For this
reason, we refer to probabilistic algorithms that may detectably fail with some
probability as ``one-sided''. For example: state post-selection in block encodings
(cf.~section~\ref{sec:block-enc}).

\begin{definition}{Two-sided error}
    A probabilistic decision algorithm fails with two-sided error $p$ if with probability
    $1-p$ the algorithm correctly accepts or rejects the input and with probability $p$
    randomly accepts or rejects the input.
\end{definition}

Again we extend this definition to non-decision algorithms in the same fashion: if a
probabilistic algorithm fails in an undetectable way, i.e., returns bogus output with
some error probability, then we refer to them as ``two-sided''. For example: estimating a
coin's bias from sampling, since one may be particularly unlucky and draw only heads,
regardless of the coin's weight.

\subsection{Sample and Query}
\label{sec:sq}

As discussed in Section~\ref{subsec:prior-art}, the \textit{sample and query} access model
is a data access model defined to investigate the role of a quantum RAM in obtaining
quantum advantage with classical data. The \textit{sample and query} (\SQ) access model is
defined as follows:
\begin{definition}{Sample and query access \cite{vdNest2011,Chia2020}} \label{def:sq}
    For a vector $v \in \Complex^d$, one has \textit{sample and query} access to $v$ if one can:
    \begin{enumerate}[label*={SQ\arabic*.}, ref={SQ\arabic*}, align=left, labelindent=3pt, leftmargin=!, labelwidth=!]
        \item read, for fixed $i\in[d]$, $v(i)$ to arbitrary precision, in time $\Query$;
              \label{def:sq-query}
        \item obtain independent samples $i\in[d]$, following a distribution with
              probability mass function $\Distribution(i) = \abs{v(i)}^2/\norm{v}^2$, in
              time $\Sample$; \label{def:sq-sample}
        \item read $\norm{v}$ to arbitrary precision, in time $\QNorm$.
              \label{def:sq-norm}
    \end{enumerate}
\end{definition}

In the context where $v$ is known classically and can be loaded into a quantum RAM (i.e.,
if one may prepare the state $\sum_{i\in[d]} [v(i)/\norm{v}] \ket{i}$), it is clear that
the requirements of \SQ\space access are satisfied. Furthermore, because every entry can be
known to arbitrary precision (or, at least, to exponential precision in runtime
\cite{vdNest2011}), the probability of any $i\in[d]$ can be known exactly (or to
exponential precision) in time $\BigO(\Query + \QNorm)$. These requirements may be split
into query access (\ref{def:sq-query}) and sample access (\ref{def:sq-sample} and
\ref{def:sq-norm}).

Ref.~\cite{Chia2020} introduces also the notion of $\phi$-oversampling. Say that, for $d
\in \Natural$, $\Distribution$ and $\AltDistribution$ are distributions supported on
$[d]$. For $\phi \geq 1$, we say that $\Distribution$ $\phi$-oversamples
$\AltDistribution$ if, for all $i \in [d]$,
\begin{equation}
    \Distribution(i) \geq \frac{1}{\phi} \AltDistribution(i).
\end{equation}
If, for some $i \in [d]$, $\Distribution(i)$, $\AltDistribution(i)$ and $\phi$ are known
exactly, then a sample of $\Distribution$ can be transformed into a sample of
$\AltDistribution$ by rejection sampling. This motivates the definition of \textit{$\phi$-oversample and
query access} ($\SQ[\phi]$) to a vector:

\begin{definition}{\textit{$\phi$-Oversample and query} access} \label{def:osq}
    For a vector $v \in \Complex^d$ and $\phi \geq 1$, one has $\phi$-oversample and query
    access to $v$ if one has query access to $v$ and sample and query access to a vector
    $\tilde{v} \in \Complex^d$, satisfying:
    \begin{enumerate}
        \item for all $i \in [d]$, $\abs{\tilde{v}(i)}^2 \geq \abs{v(i)}^2$;
        \item $\norm{\tilde{v}}^2 \leq \phi \norm{v}^2$.
    \end{enumerate}
\end{definition}
Note that, with these conditions, $\Distribution_v$ is $\phi$-oversampled by
$\Distribution_{\tilde{v}}$.

Ref.~\cite{Chia2020} generalizes the \SQ\space model to matrices. Denoting by $A(i,*)$ the
$i$th row of $A$:
\begin{definition}{\textit{$\phi$-Oversample and query access} --- matrices \cite{Chia2020}}
    For a matrix $A \in \Complex^{m \times n}$, one has \SQ\space access to $A$ if one has \SQ\
    access to every row of $A$, and \SQ\space access to the vector of row norms, i.e., \SQ\
    access to the vector $a \in \Real^m$, defined by its entries as $a(i) \Defined
        \norm{A(i, *)}$. One has $\ASQ[\phi]$ access to $A$ if one has query access to every
    row of $A$ and $\SQ[\phi]$ to $\tilde{A} \in \Complex^{m \times n}$, satisfying
    \begin{enumerate}
        \item for all $(i,j) \in [m]\times[n]$, $\abs{\smash{\tilde{A}(i,j)}}^2 \geq
              \abs{A(i,j)}^2$;
        \item $\norm{\smash{\tilde{A}}}_F^2 \leq \phi \norm{A}_F^2$.
    \end{enumerate}
\end{definition}

The model of \textit{oversample and query} is approximately closed under arithmetic operations
\cite{Chia2020}. The inclusion of \textit{oversampling and querying} --- rather than just \textit{sampling
and querying} --- is important to ensure this property. We refer to Ref.~\cite{Chia2020}
for details, but note that these closures include mapping from $\SQ[\phi]$ access to
matrices into $\SQ[\phi']$ access to approximately linear combinations thereof,
approximately the inner product thereof, or approximately general Lipchitz functions
(and, in particular, polynomials) thereof.

\subsection{Singly coherent copy tomography}
\label{sec:single-copy-tom}

In Ref.~\cite{vanApeldoorn2023}, van Apeldoorn \et al.\ investigate the feasibility of
tomography of a pure quantum state (say, $\ket{\psi}$) given the ability to prepare and
measure copies of $\ket{\psi}$, without computing more than one copy of $\ket{\psi}$ in
coherence. Contrary to what one might expect, it is possible to obtain an estimate of
$\ket{\psi}$'s state vector in the computational basis (up to a global phase) to
$\epsilon$ error in $\infty$-norm distance (i.e., up to $\epsilon$ error in each entry of
the tomography) in an amount of time and space that do not depend on the dimension of the
state. We restate here Ref.~\cite{vanApeldoorn2023}'s results significant to this work. We
rephrase the results to include the runtime requirements for each task (since the original
statements are concerned only with the necessary number of state preparations and measurements,
but provide explicit constructions).

\begin{theorem} \TheoremName{Estimation of state vector absolute values \cite[Proposition 13]{vanApeldoorn2023}}
    \label{thm:smpl-estim}
    Let $\epsilon > 0$, and $\delta \in [1/2, 1)$. Let $\ket{\psi} = \sum_{j \in [d]}
        \alpha_j \ket{j}$ be a quantum state, with $\alpha_j\in\Complex$, and let
    $\Distribution$, defined by its probability mass function as $\Distribution(j) =
        \abs{\alpha_j}^2$, $j \in [d]$, be the probability distribution of the outcomes of a
    computational basis measurement. Let $T$ be the time necessary to prepare and measure
    $\ket{\psi}$. Then, $\BigO[\log(d/\delta)\epsilon^{-2}T]$ time [and in particular
            $8\epsilon^{-2}\log(2d/\delta)$ measurements] are sufficient to learn the absolute
    values of $\alpha$ up to $\epsilon$ $\infty$-norm error, with success probability at
    least $1-\delta$, and using only $T$\negthinspace\space quantum computation time between measurements.
    These values are stored in $\BigO[\log(d/\delta)\epsilon^{-2}]$ space, and each entry
    can be read in $\BigO(\log d)$ time.
\end{theorem}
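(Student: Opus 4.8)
The plan is to estimate each amplitude modulus by the empirical measurement frequency of the corresponding outcome, and then control the resulting error through a \emph{variance-sensitive} concentration bound. Concretely, I would prepare and measure $\ket{\psi}$ in the computational basis $N \Defined 8\epsilon^{-2}\log(2d/\delta)$ times, record for each outcome $j$ its empirical frequency $\hat{\Distribution}(j) \Defined X_j/N$ (where $X_j \sim \Binomial(N, \Distribution(j))$ counts the occurrences of $j$), and return $\sqrt{\smash{\hat{\Distribution}(j)}}$ as the estimate of $\abs{\alpha_j} = \sqrt{\smash{\Distribution(j)}}$. The heart of the argument is to show that, with probability at least $1-\delta$, the bound $\abs{\sqrt{\smash{\hat{\Distribution}(j)}} - \sqrt{\smash{\Distribution(j)}}} \leq \epsilon$ holds simultaneously for all $j \in [d]$, which is exactly the claimed $\infty$-norm guarantee.

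For a single fixed $j$ (writing $p \Defined \Distribution(j)$ and $\hat p \Defined \hat{\Distribution}(j)$ as scalar shorthands) I would translate the square-root error into a deviation on $\hat p$ itself by squaring: the event $\sqrt{\hat p} > \sqrt{p} + \epsilon$ is equivalent to $\hat p - p > 2\epsilon\sqrt{p} + \epsilon^2$, while $\sqrt{\hat p} < \sqrt{p} - \epsilon$ (possible only when $\sqrt{p} > \epsilon$) is equivalent to $p - \hat p > 2\epsilon\sqrt{p} - \epsilon^2 \geq \epsilon\sqrt{p}$. The crucial observation is that the deviation threshold scales like $\epsilon\sqrt{p}$, which matches the standard deviation $\sqrt{p(1-p)/N}$ of $\hat p$; this is precisely what lets a variance-sensitive (Bernstein/Chernoff) tail bound deliver the $\epsilon^{-2}$ sample complexity, rather than the $\epsilon^{-4}$ one would obtain from a naive additive Hoeffding bound combined with $\abs{\sqrt{a}-\sqrt{b}} \leq \sqrt{\abs{a-b}}$.

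Applying the lower-tail Chernoff bound $\Prob[X_j \leq Np - s] \leq \exp(-s^2/(2Np))$ with $s = N\epsilon\sqrt{p}$ controls the undershoot by $\exp(-N\epsilon^2/2)$, uniformly in $p$. For the overshoot I would use a Bernstein-type upper tail, splitting into the regimes $p \geq \epsilon^2$ and $p < \epsilon^2$; in both, the threshold $2\epsilon\sqrt{p} + \epsilon^2$ and the variance proxy $Np$ combine to yield a bound of the form $\exp(-cN\epsilon^2)$ with an absolute constant $c \geq 1/4$. A union bound over the $d$ indices then gives total failure probability at most $2d\exp(-N\epsilon^2/4)$, and substituting $N = 8\epsilon^{-2}\log(2d/\delta)$ drives this well below $\delta$, establishing correctness.

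Finally I would account for the stated resources. The procedure performs $N = \BigO[\log(d/\delta)\epsilon^{-2}]$ independent prepare-and-measure rounds, each costing time $T$ and using only $T$ quantum computation between consecutive measurements, for a total of $\BigO[\log(d/\delta)\epsilon^{-2}T]$ time. Rather than maintaining a length-$d$ table of counts, I would store only the observed outcomes (at most $N$ of them) in a dictionary keyed on the $\log_2 d$-bit index (e.g.\ a trie), using $\BigO[\log(d/\delta)\epsilon^{-2}]$ space; a queried entry is then read---returning $0$ for any never-observed index---in $\BigO(\log d)$ time. The main obstacle is the concentration step of the second and third paragraphs: everything hinges on exploiting the $p$-dependence of the binomial variance to secure the tight $\epsilon^{-2}$ scaling, whereas the runtime, space, and read-time accounting is routine bookkeeping.
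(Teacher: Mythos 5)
The paper does not prove this statement itself---it is imported verbatim (with runtime and storage bookkeeping added) from \cite[Proposition 13]{vanApeldoorn2023}. Your argument is correct and is essentially the same as that reference's: estimate $\abs{\alpha_j}$ by the square root of the empirical frequency, and use a variance-sensitive Chernoff/Bernstein tail so that the relevant deviation threshold $\epsilon\sqrt{p}$ matches the binomial standard deviation $\sqrt{p/N}$, which is precisely what secures the $\epsilon^{-2}$ (rather than $\epsilon^{-4}$) sample complexity before the union bound over the $d$ indices.
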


The above Theorem does not provide phase information on the entries of the state vector.
On the other hand, a global phase on the state is physically irrelevant. If one admits a
procedure performing tomography of the state up to a global phase, then just the ability
to prepare the state suffices to perform such a tomography (up to $\infty$-norm error).

\begin{theorem} \TheoremName{Estimation of state vector with phases by prepare and measure \cite[Proposition 19]{vanApeldoorn2023}}
    \label{thm:vad-tomography}
    Let $\epsilon > 0$ and $\delta \in [1/2, 1)$. Let $\ket{\psi}$ be a $d$-dimensional
    quantum state, which is prepared in black-box fashion in time $T$. Then,
    $\tilde\BigO[T \epsilon^{-4} \log^2(d/\delta)]$ time is sufficient to compute a sparse
    representation of the quantum state (in the computational basis), up to $\epsilon$
    $\infty$-norm error, with success probability at least $1-\delta$, and using at most
    $\BigO[T + \epsilon^{-2}\log(d/\delta)\log^3 d]$ quantum time between measurements.
    The quantum state is stored in $\BigO[\epsilon^{-2}\log(d/\delta)\log^2d]$ space, and
    each entry can be read in $\BigO(\log d)$ time.
\end{theorem}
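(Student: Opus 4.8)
The plan is to combine Theorem~\ref{thm:smpl-estim}, which already yields the \emph{magnitudes} $\abs{\alpha_j}$, with a phase-recovery stage that interferes $\ket\psi$ against a reference state assembled from those magnitudes. Write $\ket\psi=\sum_{j\in[d]}\alpha_j\ket j$. First I would invoke Theorem~\ref{thm:smpl-estim} with precision $\Theta(\epsilon)$ and failure probability $\delta/2$, obtaining estimates $\abs{\hat\alpha_j}$ accurate to $\epsilon/c$ in $\infty$-norm for a suitable constant $c$. Since $\sum_j\abs{\alpha_j}^2=1$, at most $\BigO(\epsilon^{-2})$ indices satisfy $\abs{\alpha_j}\gtrsim\epsilon$; setting every lighter entry to zero costs at most $\epsilon$ in the final $\infty$-norm. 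This isolates a sparse ``heavy'' support $S$ (of size $\BigO[\epsilon^{-2}\log(d/\delta)]$ once the estimation overhead is included) and, since only a global phase is physically meaningful, lets me fix that freedom by declaring the reference amplitudes real and nonnegative.

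Second, I would recover the phases of the heavy entries by a Hadamard-test-style interference experiment carried out \emph{simultaneously} over all indices. I would prepare the reference $\ket\phi\Defined\sum_{j}\abs{\hat\alpha_j}\ket j$ (approximately normalized, and preparable by a sparse state-preparation routine from the stored magnitudes), adjoin an ancilla, and form $\tfrac1{\sqrt2}(\ket0\ket\phi+\ket1\ket\psi)$ using the black-box preparation of $\ket\psi$ and the preparation of $\ket\phi$, each controlled on the ancilla. Applying a Hadamard to the ancilla and measuring in the computational basis gives, for outcome $(b,j)$, the exact relation $\Prob[0,j]-\Prob[1,j]=\abs{\hat\alpha_j}\,\Re\alpha_j$. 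Repeating with the reference phased by $\I$ yields $\abs{\hat\alpha_j}\,\Im\alpha_j$. Because every retained index has $\abs{\hat\alpha_j}\gtrsim\epsilon$, dividing the two estimated interference signals by $\abs{\hat\alpha_j}$ reconstructs $\hat\alpha_j\approx\Re\alpha_j+\I\,\Im\alpha_j$.

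The error and complexity bookkeeping is where I expect the real work to lie. The key observation is that each interference shot returns an outcome drawn across all of $[d]$ at once, so what governs the cost is the \emph{uniform} accuracy of the empirical probabilities, not a per-entry sampling budget. By a Bernstein-plus-union-bound argument over the $\BigO(d)$ outcomes, $N$ shots pin down every $\Prob[b,j]$ to additive error $\BigO[\sqrt{\log(d/\delta)/N}]$. To reconstruct each $\alpha_j$ to error $\epsilon$ after dividing by $\abs{\hat\alpha_j}\gtrsim\epsilon$, the interference signals must be known to additive error $\BigO(\epsilon^2)$, which forces $N=\tilde\BigO[\epsilon^{-4}\log(d/\delta)]$; folding in the absolute-value stage and the union bounds gives the stated $\tilde\BigO[T\epsilon^{-4}\log^2(d/\delta)]$ total time. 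The delicate points are (i) showing that the heavy support $S$, and the lower bound $\abs{\hat\alpha_j}\gtrsim\epsilon$ on the retained entries, hold with high probability, so that the division by $\abs{\hat\alpha_j}$ cannot amplify the estimation error, and (ii) propagating the $\infty$-norm guarantee simultaneously through the zeroing of light entries and through the phase reconstruction.

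Finally, the quantum-time-between-measurements and space bounds fall out of the interference circuit. Between two consecutive measurements one runs the black-box preparation of $\ket\psi$ (time $T$) together with the controlled preparation of $\ket\phi$; the latter is a sparse state preparation over the $\BigO[\epsilon^{-2}\log(d/\delta)]$ stored amplitudes whose multiply-controlled rotations contribute the extra $\log^3 d$ factor, giving $\BigO[T+\epsilon^{-2}\log(d/\delta)\log^3 d]$. Storing the sparse reference together with the accumulated statistics over $S$, with each index costing $\BigO(\log d)$ bits, yields the $\BigO[\epsilon^{-2}\log(d/\delta)\log^2 d]$ space bound, and reading a reconstructed entry is a single $\BigO(\log d)$ lookup.
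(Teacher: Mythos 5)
This theorem is imported from Ref.~\cite{vanApeldoorn2023} rather than proved in the paper, but the intended technique is visible in the proof of Lemma~\ref{lemma:estm-phase}, which the paper describes as a modification of it: after learning the magnitudes via Theorem~\ref{thm:smpl-estim}, one fixes a heavy reference index $m$ and, for each target index $j$, applies a short basis-relabeling unitary ($\ket{m}\mapsto\ket{0}$, $\ket{j}\mapsto\ket{1}$, then a Hadamard, optionally preceded by a phase gate) to the \emph{single} prepared copy of $\ket{\psi}$, so that the interference needed to recover $\Re\alpha_j$ and $\Im\alpha_j$ relative to $\alpha_m$ happens between amplitudes of $\ket{\psi}$ itself. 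Your first stage (magnitude estimation, thresholding at $\Theta(\epsilon)$, isolating an $\BigO(\epsilon^{-2})$-sized heavy support, fixing the global phase via a real nonnegative reference) matches this, and your $\epsilon^{-4}$ accounting for the phase stage is the right order.

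The genuine gap is in your interference primitive. Preparing $\tfrac{1}{\sqrt2}(\ket{0}\ket{\phi}+\ket{1}\ket{\psi})$ requires invoking the state-preparation black box \emph{controlled} on the ancilla, and that is a strictly stronger access model than the one the theorem assumes. The setting here (cf.\ the ``singly coherent copy tomography'' framing and the budget of $\BigO[T+\epsilon^{-2}\log(d/\delta)\log^3 d]$ quantum time between measurements, with a single uncontrolled $T$) is: prepare $\ket{\psi}$ once, apply a short $\psi$-independent circuit, measure. No fixed unitary applied after an uncontrolled preparation can map $\ket{1}\ket{\psi}$ to $\tfrac{1}{\sqrt2}(\ket{0}\ket{\phi}+\ket{1}\ket{\psi})$ for an unknown $\ket{\psi}$ (linearity forbids it), so your Hadamard test cannot be realized in this model; the distinction between controlled and uncontrolled preparation access is precisely one of the points of Ref.~\cite{vanApeldoorn2023}. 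To repair the argument you must make the reference \emph{internal} to $\ket{\psi}$ --- interfere each heavy amplitude $\alpha_j$ against the largest amplitude $\alpha_m$ via the pairing unitaries $V,V'$ of Lemma~\ref{lemma:estm-phase} --- rather than against an externally prepared state $\ket{\phi}$. A secondary consequence is that your claim of recovering all phases ``simultaneously over all indices'' in one experiment does not survive this repair as stated: the pairing construction handles the $\BigO(\epsilon^{-2})$ heavy indices through separate (or batched) pairings, which is where the $\epsilon^{-4}$ in the total time actually comes from.
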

See also Ref.~\cite{Shende2002} for the time necessary to perform some of the steps
outlined in the proof of \cite[Proposition 19]{vanApeldoorn2023}.

\subsection{Simulation of low T-gate count states}
\label{sec:low-t}

Circuits dominated by Clifford operations (Hadamard, phase, and controlled \textsc{not}
gates) admit efficient classical simulation, in the sense that one may approximately
sample from the distribution of measurements in the computational basis, and estimate the
probability of a selected output \cite{Bravyi2016}. In particular:
\begin{theorem} \TheoremName{Simulation of low $T$-gate count states \cite{Bravyi2016}}
    \label{thm:t-gate-sim}
    Suppose a state $\ket{\psi}$ of $\QubitsN$ qubits is (classically) given by a quantum
    circuit description of a unitary $U$, such that $U \ket{0}^{\otimes \QubitsN} =
    \ket{\psi}$. This description is taken to be of size polynomial in $\QubitsN$, and by
    the gates of the circuit, which are either Hadamard, phase, controlled \textsc{not},
    or $T$ quantum gates. Let $x$ be the representation of $\ket{\psi}$ in the
    computational basis. Then, if there are $t$ T-gates in this description, and $c$
    Clifford gates in this description, one may sample from a distribution
    $\AltDistribution$ supported on $[n]$, satisfying
    \begin{equation}
        \sum_{i \in [n]} \abs{\AltDistribution(i) - \Distribution_x(i)} \leq \Delta
    \end{equation}
    in time $\tilde\BigO[\QubitsN(\QubitsN + t)(c + t) + \QubitsN(\QubitsN + t)^3 + 2^t
    t^3 \QubitsN^3 \Delta^{-4}] = \tilde\BigO[\Poly(\QubitsN)(1 + 2^t \Delta^{-4})]$;
    and, for choice of $i \in [n]$, one may estimate $\Distribution_x(i)$ to error
    $\epsilon$ with failure probability $\delta$ in time $\BigO[(1+t)(c+t) + (\QubitsN +
    t)^3 + 2^t t^3 \epsilon^{-2} \log \delta^{-1}] = \BigO[\Poly(\QubitsN)(1 + 2^t
    \epsilon^{-2} \log \delta^{-1})]$.
\end{theorem}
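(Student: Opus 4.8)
The plan is to follow the stabilizer-decomposition strategy of Bravyi and Gosset~\cite{Bravyi2016} and to account for the runtime of each subroutine, since the statement merely augments their measurement-count bounds with explicit time bounds. First I would \emph{gadgetize} the $T$ gates: each $T$ is simulated by injecting a magic state $\ket{A} = T\ket{+}$ through a Clifford gadget (a controlled-\textsc{not}, a measurement, and a measurement-conditioned $S$ correction). This replaces the circuit by a Clifford unitary $V$ on $\QubitsN + t$ qubits, with $\BigO(c+t)$ gates, acting on $\ket{0}^{\otimes\QubitsN}\otimes\ket{A}^{\otimes t}$. A single Born probability $\Distribution_x(i) = \abs{x(i)}^2$ then reduces, by folding $V$ and the computational-basis projector $\dyad{i}$ into the stabilizer formalism, to estimating $\norm{\Pi\ket{A}^{\otimes t}}^2$ for a stabilizer projector $\Pi$ on the $t$-qubit magic register; this is where the $t^3$, rather than $(\QubitsN+t)^3$, inner-product cost will originate.

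Second, I would decompose $\ket{A}^{\otimes t}$ into $\chi = 2^t$ stabilizer states through the trivial identity $\ket{A} = \tfrac{1}{\sqrt 2}\big(\ket{0} + e^{\I\pi/4}\ket{1}\big)$. The two primitives I then need are standard: propagating $V$ (and building $\Pi$) through a tableau or $\mathrm{CH}$-form representation at $\BigO(c+t)$ gate updates, and computing the overlap of two stabilizer states on $m$ qubits in $\BigO(m^3)$ time by Gaussian elimination over $\mathbb{F}_2$. An \emph{exact} entry query $x(i)$ is then just the sum of $\chi$ such overlaps.

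Third, for estimating $\Distribution_x(i)$ I would invoke Bravyi--Gosset's Monte Carlo norm estimator rather than expanding $\norm{\Pi\ket{A}^{\otimes t}}^2$ directly, which would cost $\chi^2 = 4^t$ cross terms. Drawing a random stabilizer state $\ket{\theta}$, the quantity $\abs{\langle\theta\vert\Pi\vert A^{\otimes t}\rangle}^2$, suitably rescaled, is an unbiased estimator of $\norm{\Pi\ket{A}^{\otimes t}}^2$ whose variance is controlled by the chosen ensemble; evaluating each sample costs the $\chi = 2^t$ overlaps at $\BigO(t^3)$ apiece, and $\BigO(\epsilon^{-2}\log\delta^{-1})$ samples (boosted by a median of means) bring the additive error to $\epsilon$ with failure probability $\delta$. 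Adding the $\BigO((1+t)(c+t))$ propagation and $\BigO((\QubitsN+t)^3)$ setup recovers the stated estimation runtime.

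Fourth, to produce a sample from $\AltDistribution$ with $\sum_i \abs{\AltDistribution(i) - \Distribution_x(i)} \le \Delta$, I would fix the $\QubitsN$ output bits one at a time, estimating at each step the conditional marginal (a ratio of two projected squared norms) with the same estimator. I expect the error analysis here to be the main obstacle: the total-variation budget $\Delta$ must be apportioned across the $\QubitsN$ conditionals, and since each conditional is a ratio whose denominator may be as small as $\sim\Delta$, one needs additive precision $\sim\Delta^2/\QubitsN$ on the underlying norms, i.e.\ $\epsilon^{-2}\sim\QubitsN^2\Delta^{-4}$ samples per bit and $\QubitsN^3\Delta^{-4}$ overall. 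The care is in showing that the per-step estimation errors accumulate to at most $\Delta$ in total variation and that the estimator's variance stays bounded when applied to the crude, un-sparsified $2^t$-term decomposition; the remaining factors in the stated bound are routine bookkeeping of the tableau operations.
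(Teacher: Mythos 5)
The paper offers no proof of this statement: Theorem~\ref{thm:t-gate-sim} is imported directly from Ref.~\cite{Bravyi2016} (merely rephrased to make the runtimes of each subroutine explicit), so there is no internal argument to compare against. Your sketch is a faithful reconstruction of the Bravyi--Gosset proof --- gadgetization of the $T$ gates, stabilizer decomposition of $\ket{A}^{\otimes t}$, Monte Carlo norm estimation against random stabilizer states, and bit-by-bit conditional sampling --- and the trivial $2^t$-term decomposition you invoke does suffice for the bound as stated here, since the theorem quotes the conservative $2^t$ rather than Bravyi--Gosset's sparsified $2^{\gamma t}$ with $\gamma\approx 0.23$. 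The one place your outline stops short of a proof, namely the apportionment of the total-variation budget $\Delta$ across the $\QubitsN$ conditional marginals and the control of the ratio estimators when denominators are small, is precisely the part worked out in detail in Ref.~\cite{Bravyi2016}; since the paper defers entirely to that reference, your level of detail already matches or exceeds what the paper itself provides.
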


\subsection{Pauli sampling and nonstabilizerness}
\label{sec:pauli-samp}

The $4^\QubitsN$-qubit phaseless Pauli string operators form a basis to the space of
$\Complex^{2^\QubitsN\times 2^\QubitsN}$. That is, if $X$, $Y$, and $Z$ denote the
corresponding single-qubit Pauli operators, then for an arbitrary\footnote{Without loss
of generality, but for concreteness, the ordering of the operators may be fixed in a
``tableau'' format, as introduced in \cite{Aaronson2004}. This method maps $(x_1, x_2,
\ldots, x_{2\QubitsN}) \in \{0,1\}^{2\QubitsN}$ to
$(X^{x_1}Z^{x_2})\otimes(X^{x_3}Z^{x_4})\otimes (\cdots) \otimes
(X^{x_{2\QubitsN-1}}Z^{x_{2\QubitsN}})$, modulo the resulting phases.} ordering of the
operators%%
\begin{equation}
    P_i \in \{I, X, Y, Z\}^{\otimes\QubitsN}\mathrlap{\, , \qquad i=1,2,\ldots,4^\QubitsN,}
\end{equation}
the set of operators $\{P_i\}_i$ forms a basis of $\Complex^{2^\QubitsN \times
2^\QubitsN}$. This implies that every density matrix $\rho$ of dimension $d$ admits a
decomposition of the form
\begin{equation} \label{eq:pauli-decomp}
    \rho \mkern7mu = \sum_{i \in [d^2]} \frac{\Tr(\rho P_i)}{\sqrt{d}} P_i
\end{equation}
where the factor of $d^{-1/2}$ is included to ensure orthonormality. Henceforth we will
denote the representation of $\rho$ in this basis as $\pi_\rho$, such that
\begin{equation} \label{eq:pauli-vec}
    \pi_\rho(i) = \frac{\Tr(\rho P_i)}{\sqrt{d}}\mathrlap{\, , \qquad i=1,2,\ldots,d^2.}
\end{equation}
Note that $\pi_\rho$ is a real-valued vector and that it has unit $2$-norm.

Pauli sampling corresponds to sampling $i \in [d^2]$ with probability $\pi_\rho(i)^2$
\cite{Leone2022}. While effectively performing this sampling can be challenging
\cite{Chen2022,Hinsche2024}, the ability to do so can be powerful, enabling, for example,
fidelity estimation in a constant number of samples \cite{Flammia2011}.

Furthermore, the representation of a pure state $\dyad{\psi}$ in a Pauli basis
($\pi_\psi$) relates closely to the ``nonstabilizerness'' of $\ket{\psi}$ (sometimes also
referred to as ``magic''). Recall that a state $\ket{\phi}$ is a stabilizer state (or
Clifford state) if there exists a unitary $U$ composed of controlled \textsc{not},
Hadamard and phase gates, such that $U\ket{0} = \ket{\psi}$. Equivalently, stabilizer
states of $\QubitsN$ qubits are those stabilized by a group of $2^\QubitsN$ Pauli
strings. Stabilizer states can be classically simulated and are less powerful than
classical circuits \cite{Aaronson2004, VandenNest2010}.  If a state has a large overlap
with few stabilizer states, then we say that it has low ``nonstabilizerness''.
This fact can be exploited to, for example, classically compute and simulate measurements of
the state \cite{Bravyi2016}. As shown in Ref.~\cite{Leone2022}, the $\alpha$-R\'{e}nyi
entropies of $\pi_\psi$ directly relate to various measures of ``nonstabilizerness'',
called stabilizer $\alpha$-R\'{e}nyi entropies, or simply $\alpha$-magic:
\begin{equation}
    M_\alpha(\ket{\psi}) = (1-\alpha)^{-1} \log[\sum_{i \in [d^2]} \pi_\psi(i)^{2\alpha}\mkern2mu] - \log d.
\end{equation}
If $H_\alpha(\cdot)$ denotes the $\alpha$-R\'{e}nyi entropy,
$M_\alpha(\ket{\psi}) \equiv H_\alpha(\Distribution_{\pi_\psi}) - \log d$. Furthermore,
from the properties of the R\'{e}nyi entropies, $M_\alpha(\ket{\psi}) \in [0, \log d]$,
with zero magic if and only if $\ket{\psi}$ is a stabilizer state. Also, this definition
unifies other definitions of nonstabilizerness in the literature. In particular we
highlight the connection made in Ref.~\cite{Leone2022} regarding ``stab-norm'':
\begin{equation}
    M_{1/2}(\ket{\psi}) \equiv 2 \log \StabNorm(\ket{\psi})
\end{equation}
where $\StabNorm(\ket{\psi}) \Defined d^{-1} \sum_{i\in[d^2]} \abs{\Tr(P_i \dyad{\psi})}$
is the ``stab-norm'' of $\ket{\psi}$ as defined in Ref.~\cite{Campbell2011}.

In Ref.~\cite{Hinsche2024}, Hinsche \et al.\ describe how to perform approximate Pauli
sampling on a pure state provided that its $M_0$ magic is sufficiently bounded. (Approximate Pauli sampling means that the sampling follows a distribution not too
different from the Pauli sampling distribution, in $\ell_1$ distance.) Their procedure
requires Bell measurements on $\rho\otimes\rho$, and has the following performance
guarantees:
\begin{theorem} \TheoremName{Performance guarantee for approximate Pauli sampling \cite[Theorem 8]{Hinsche2024}}
    \label{thm:hinsche-pauli}
    Let $\Delta \in (0,1)$, and $\delta \in (0,1)$. Let $\rho$ be a pure, $\QubitsN$
    qubit state with magic $M_0$, with no more than $\chi$ entanglement across any of its
    bipartitions. Then, there exists an algorithm sampling from a distribution
    $\AltDistribution$ such that, with overall probability $1-\delta$
    \begin{equation}
        \sum_{i\in[4^\QubitsN]} \abs{\AltDistribution(i) - \Distribution_{\pi_\rho}(i)} \leq \Delta,
    \end{equation}
    provided that one can perform $N$\! Bell measurements on $\rho\otimes\rho$, with
    \begin{equation}
        N = \BigO\mathclose{}\Big(\frac{\QubitsN^3 2^{4\chi} 2^{2M_0}}{\Delta^4} \log\frac{1}{\delta}\Big).
    \end{equation}

    Producing a single sample takes time $\BigO(N \QubitsN^2)$ time in classical
    postprocessing of the obtained Bell measurement data.
\end{theorem}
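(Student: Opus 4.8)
The plan is to reduce approximate Pauli sampling to Bell sampling on pairs of copies, and then to use the entanglement and magic promises to control, respectively, the cost of the unavoidable classical reconciliation and the number of copies consumed. First I would pin down the target and the raw resource. Pauli sampling asks for a label $x \in \{0,1\}^{2\QubitsN}$ drawn from $\Distribution_{\pi_\rho}(x) = \pi_\rho(x)^2 = \frac{1}{d}\abs{\bra{\psi}P_x\ket{\psi}}^2$, where $d = 2^\QubitsN$ and $P_x$ is the Hermitian Pauli string labelled by $x$. A Bell measurement on $\rho\otimes\rho$, pairing the $j$th qubit of each copy and reading off the Bell basis, returns such a label $x$; writing $W_x$ for the phaseless Weyl operator, a short computation using $\ket{\Phi^+}^{\otimes\QubitsN} = d^{-1/2}\sum_z \ket{z}\ket{z}$ shows the outcome distribution is $\mathcal{B}(x) = \frac{1}{d}\abs{\bra{\psi^*}W_x\ket{\psi}}^2$, which is a genuine distribution: it sums to one by the Pauli twirl $\sum_x W_x \rho W_x^\dagger = d\,\Identity$.

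The crux is that $\mathcal{B}$ is \emph{not} the target: it carries the conjugate bra $\bra{\psi^*}$ where $\Distribution_{\pi_\rho}$ carries $\bra{\psi}$, and for complex $\ket{\psi}$ these differ. Note that in the application of interest, where one samples from the real vector $\pi_\rho$ itself, this reconciliation is benign whenever $\ket{\psi}$ can be taken real, since then $\bra{\psi^*}=\bra{\psi}$ and $W_x$ differs from $P_x$ only by a phase of unit modulus, so $\mathcal{B} \equiv \Distribution_{\pi_\rho}$ exactly. In general I would cast the problem as importance/rejection sampling: accept a raw Bell sample $x$ with probability proportional to $\Distribution_{\pi_\rho}(x)/\mathcal{B}(x)$, so that accepted samples follow $\Distribution_{\pi_\rho}$. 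Both quantities are expectation values of fixed Pauli-type operators on $\ket{\psi}$ (and its conjugate), so under the promise of at most $\chi$ entanglement across every cut I can represent $\ket{\psi}$ by a matrix-product state of bond dimension $2^\chi$ and evaluate each factor by contracting the relevant (doubled) tensor network; because the squared amplitudes live on $\rho\otimes\rho$, the contraction runs over bond dimension $2^{2\chi}$, which is the natural origin of the $2^{4\chi}$ factor, with $\QubitsN^3$ from the $2\QubitsN$-bit tableau bookkeeping.

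With the generator in place I would prove the $\ell_1$ guarantee by a concentration argument rather than by learning the (exponential-support) distribution empirically. The magic promise enters through the support, $M_0 = \log\abs{\operatorname{supp}\pi_\rho} - \log d$, so that only $2^{M_0}d$ Paulis carry weight; this is what must control the effective rejection overhead, contributing the $2^{2M_0}$ factor, so that the number of raw Bell samples consumed per accepted output stays dimension-free. The approximation enters because the acceptance test and the amplitudes are realized from finitely many measurements rather than exactly: estimating each relevant amplitude to additive error $\sim\Delta$ costs $\Delta^{-2}$ samples, and propagating this through the squared amplitudes and then through the $\ell_1$ distance of the resulting sampler to $\Distribution_{\pi_\rho}$ contributes a second $\Delta^{-2}$, giving the overall $\Delta^{-4}$. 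Collecting these factors yields $N = \BigO(\QubitsN^3 2^{4\chi} 2^{2M_0} \Delta^{-4})$ Bell measurements, with the $\log\delta^{-1}$ failure dependence absorbed by a median-boosting step, while producing one output sample requires sweeping the $N$ raw samples with $\BigO(\QubitsN^2)$ tableau arithmetic each, i.e.\ $\BigO(N\QubitsN^2)$ classical time.

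The main obstacle I expect is exactly this reconciliation step and its cost accounting: showing cleanly that the conjugate-twisted distribution $\mathcal{B}$ can be reweighted to $\Distribution_{\pi_\rho}$ using only the available measurement access, that the resulting overhead is genuinely governed by the magic $M_0$ (and not secretly by the exponentially large dimension $d$), and that the finite-sample errors in the acceptance test compose into an $\ell_1$ error of precisely the claimed order. Everything else — the Bell-distribution calculation, the Pauli-twirl normalization $\sum_x W_x\rho W_x^\dagger = d\,\Identity$, and the tableau manipulations — is routine once this core estimate is secured.
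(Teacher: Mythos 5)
First, a point of order: the paper does not prove this statement. Theorem~\ref{thm:hinsche-pauli} is imported with citation from Ref.~\cite{Hinsche2024} as a preliminary, so there is no in-paper proof to compare against; the closest the paper comes to its internals is Corollary~\ref{corollary:pauli-samp}, whose proof exposes the skeleton of the original argument --- a threshold $\gamma$, a ``mass below threshold'' function $\mathfrak{f}(\gamma)$, a per-outcome estimation error $\epsilon$, and the total-variation bound $\mathfrak{f}(\gamma) + \exp(4\epsilon\log d/\gamma) - 1$. Measured against that skeleton, your reconstruction gets the correct starting point (Bell measurements on $\rho\otimes\rho$ yield the conjugate-twisted distribution $\frac{1}{d}\abs{\bra{\psi^*}W_x\ket{\psi}}^2$, which coincides with $\Distribution_{\pi_\rho}$ only for real states) and the correct high-level strategy (use Bell outcomes as proposals and reweight/filter them toward $\Distribution_{\pi_\rho}$).

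The genuine gap is in how you compute the acceptance ratio. You propose to evaluate $\Distribution_{\pi_\rho}(x)/\mathcal{B}(x)$ by contracting a matrix-product-state representation of $\ket{\psi}$ of bond dimension $2^\chi$, but the theorem grants no classical description of the state: the only resources are $N$ Bell measurements and classical postprocessing of their outcomes, which is why the per-sample cost is $\BigO(N\QubitsN^2)$ symplectic arithmetic on $2\QubitsN$-bit strings rather than any contraction cost. In the actual argument both promises enter \emph{statistically}, not computationally. The entanglement bound yields the pointwise domination $\Distribution_{\pi_\rho}(x)\leq e^{2\chi}\,\mathcal{B}(x)$ --- this is exactly what the inequality $\mathfrak{f}(\gamma)\leq F_\rho(\gamma e^{2\chi})$ in Corollary~\ref{corollary:pauli-samp} encodes --- so the Bell distribution oversamples the target with overhead $e^{2\chi}$. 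The magic bound gives the tail estimate $F_\rho(\tau)\lesssim\tau\,2^{M_0}$ (the quantity Lemma~\ref{lemma:fadap} re-bounds via $M_{1/2}$), which controls the target mass lost when candidates with small empirical frequency are discarded; and every probability entering the acceptance test is estimated from the same pool of $N$ Bell outcomes. Choosing $\gamma\sim\Delta^{2}2^{-2\chi}2^{-M_0}$ and $\epsilon\sim\Delta\gamma/\log d$ then forces $N\sim\gamma^{-2}\cdot\Polylog(d)$, which is the true origin of $2^{4\chi}2^{2M_0}\Delta^{-4}$ and of the $\QubitsN^3$ factor --- not a doubled bond dimension and not tableau bookkeeping. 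The step you yourself flag as the main obstacle, namely showing that the reweighting can be carried out with measurement access alone at dimension-independent cost, is precisely the part your proposal does not supply.
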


\section{Approximate Sample and Query}

We define the \textit{approximate sample and query} model as a modification of the \textit{sample and query} model (sec.~\ref{sec:sq}). The main observations motivating these modifications are
that, even given arbitrarily many copies of some general state, one cannot cheaply
compute a given amplitude to exponential precision; that the required precision will
dictate the cost of the operation; and that quantum operations may fail depending on the
operation, with one-sided or two-sided error.

\begin{definition}{Approximate query access} \label{def:aq}
    We say that we have Approximate Query (AQ) access to a vector $x \in \Complex^d$ if,
    for any choice of $\epsilon \in (0, 1]$, and for any $i \in [d]$, one may obtain an
    estimate of $x(i)$ to absolute error $\epsilon$, with two-sided error probability of
    at most $1/3$, and runtime $\Query(\epsilon)$.
\end{definition}

\begin{definition}{Approximate sample access} \label{def:as}
    We say that we have Approximate Sample (AS) access to a vector $x \in \Complex^d$ if
    we may sample from a probability distribution $\Distribution_{x}$, defined by its
    probability mass function as
    \begin{equation}
        \Distribution_x(i) = \frac{\abs{x(i)}^2}{\norm{x}^2} \mathrlap{\,,\qquad i=1,2,\ldots,d}
    \end{equation}
    with one-sided error probability at most $1/3$ and runtime $\Sample$, and
    $\norm{x}^2$ may be obtained to additive error $\epsilon$, with at most two-sided
    error probability $1/3$, and runtime $\QNorm(\epsilon)$. (By one-sided error probability
    over the sampling, we mean that either a sample is successfully produced, or the
    algorithm emits a flag indicating failure.)
\end{definition}

The definition of Approximate \textit{sample and query} then combines these two forms of access:
\begin{definition}{Approximate sample and query} \label{def:asq}
    Say that we have \textit{approximate sample and query} (ASQ) access to a vector $x \in
        \Complex^d$ if we have AS and AQ access to it.
\end{definition}

The corresponding definition for oversampling follows naturally:
\begin{definition}{\textit{Approximate $\phi$-oversample and query} accces} \label{def:aosq}
    For $\phi \geq 1$, say that we have Approximate $\phi$-Oversample and Query
    ($\ASQ[\phi]$) access to a vector $x \in \Complex^d$ if we have AQ access to $x$
    (Definition~\ref{def:aq}) and we have AS access to $\tilde{x} \in \Complex^d$
    (Definition~\ref{def:as}) satisfying that, for all $i \in [d]$, $\abs{\tilde{x}(i)}
        \geq \abs{x(i)}$ and $\norm{\tilde{x}}^2 \leq \phi \norm{x}^2$. [Recall that the time
            to query an entry of $x$ to error $\epsilon$ is $\Query(\epsilon)$, the time to query
            $\norm{\tilde{x}}^2$ to error $\epsilon$ is $\QNorm(\epsilon)$, and that the time
            to sample from $\Distribution_{\tilde{x}}$ is $\Sample$.]
\end{definition}
Compare this definition with that of Def.~\ref{def:osq}: we are careful to require only
\emph{sample} access to $\tilde{x}$, whereas \textit{oversample and query} asks for sample
\emph{and} query access to $\tilde{x}$.

\subsection{Satisfying approximate sample and query access}
\label{sec:satsf-asq}

The results of van~Apeldoorn \et al.~\cite{vanApeldoorn2023}, discussed in section~\ref{sec:single-copy-tom},
together with the standard properties of quantum states and measurements suffice to see
that access to single-copy preparations of a quantum state suffices to satisfy ASQ access
to the underlying quantum state vector (when represented in the computational basis).
Because we only require one amplitude at a time, we may reduce the necessary quantum time
after the state preparation:

\begin{lemma} \label{lemma:estm-phase}
    Let $\ket{\psi}$ be an $\QubitsN$ qubits state, prepared in time $T$ in black-box
    fashion. Then, $\BigO[T \epsilon^{-2} \QubitsN]$ time suffices to estimate
    $\braket{i}{\psi}$ for any choice of $i \in [2^\QubitsN]$ (up to a global phase, but
    with phases consistent across estimates), to error $\epsilon$ and with two-sided error
    probability of at most $1/3$. Furthermore, at most $\BigO(T + \QubitsN)$ quantum time
    [and a $\BigO(\log\QubitsN)$ depth overhead in the quantum circuits] is needed between
    measurements.
\end{lemma}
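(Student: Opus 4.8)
The plan is to separate the estimation of $\braket{i}{\psi}$ into its absolute value and its phase relative to a fixed reference index, so that the global phase of $\ket{\psi}$ --- which cannot be observed from black-box preparations alone, since we never control the preparation unitary --- is the only remaining ambiguity. First I would invoke the absolute-value tomography of Theorem~\ref{thm:smpl-estim} \emph{once}, measuring $\ket{\psi}$ in the computational basis $\BigO(\epsilon^{-2}\log d)$ times (with only $T$ quantum time between measurements) to learn every $\abs{\braket{j}{\psi}}$ to additive error $\BigO(\epsilon)$, and in particular to identify a reference index $m$ with $\abs{\alpha_m} = \max_j \abs{\alpha_j}$. Fixing the global phase convention $\alpha_m \in \Real_{>0}$ renders all subsequent estimates mutually consistent. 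With $\log d = \QubitsN$ and a constant failure probability, this step costs $\BigO(T\epsilon^{-2}\QubitsN)$ and already accounts for the $\QubitsN$ factor in the claim.

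For the phase, given a queried index $i$, I would prepare $\ket{\psi}$ and then apply a two-level unitary acting only on the $\{\ket{m}, \ket{i}\}$ subspace --- the Hadamard-like rotation sending $\ket{m}, \ket{i}$ to $\tfrac{1}{\sqrt 2}(\ket{m}\pm\ket{i})$ --- measure in the computational basis, and record the empirical frequencies of outcomes $m$ and $i$. Their difference is an unbiased estimator of $2\Re(\alpha_m^*\alpha_i)$; inserting a phase gate on the subspace before the rotation yields $2\Im(\alpha_m^*\alpha_i)$ instead. Because $\alpha_m$ is real and positive, dividing by $\abs{\alpha_m}$ recovers $\Re(\alpha_i)$ and $\Im(\alpha_i)$, hence $\alpha_i$. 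A two-level unitary on a $2^\QubitsN$-dimensional space is implementable with $\BigO(\QubitsN)$ gates (cf.\ Ref.~\cite{Shende2002}), so each circuit is applied to the already-prepared state and uses only $\BigO(T+\QubitsN)$ quantum time between measurements, as required.

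The main obstacle is obtaining sample complexity $\BigO(\epsilon^{-2})$ rather than $\BigO(\epsilon^{-4})$ for the phase, and this is exactly where choosing $m$ as the maximiser is essential. A naive additive-error argument would demand estimating $\Re(\alpha_m^*\alpha_i)$ to absolute error $\epsilon\abs{\alpha_m}$, which looks expensive when $\abs{\alpha_m}$ is small. Instead I would bound the variance of the single-shot estimator by $\abs{\alpha_m}^2 + \abs{\alpha_i}^2 \le 2\abs{\alpha_m}^2$, so that its error after $N$ shots is $\BigO(\abs{\alpha_m}/\sqrt{N})$; the factor $\abs{\alpha_m}$ then cancels upon dividing by $\abs{\alpha_m}$, giving error $\BigO(1/\sqrt{N})$ in $\Re(\alpha_i)$ \emph{uniformly} in $\abs{\alpha_m}$. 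The additional error from using the estimate $\abs{\hat\alpha_m}$ in the denominator contributes a term proportional to $\abs{\Re(\alpha_i)}/\abs{\alpha_m} \le 1$, which is bounded precisely because $m$ is the maximiser; hence $N = \BigO(\epsilon^{-2})$ shots suffice. The degenerate case in which $\max_j \abs{\alpha_j}$ is itself below $\BigO(\epsilon)$ is handled by simply reporting $0$, which is then within additive error $\epsilon$ of every entry and avoids an unstable division.

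Finally, I would assemble the failure probabilities: each of the $\BigO(1)$ estimators (the absolute values, and the real and imaginary overlaps) is a two-sided Monte Carlo estimate whose failure probability can be driven below any constant at a multiplicative $\BigO(\log(1/\delta))$ cost, so a union bound gives overall two-sided error at most $1/3$. Collecting costs, the one-time reference step dominates at $\BigO(T\epsilon^{-2}\QubitsN)$ while the per-index phase step is $\BigO(\epsilon^{-2}(T+\QubitsN)) \subseteq \BigO(T\epsilon^{-2}\QubitsN)$, with quantum time between measurements never exceeding $\BigO(T+\QubitsN)$, establishing the claim.
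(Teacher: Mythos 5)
Your proposal is correct and follows essentially the same route as the paper's proof: a single invocation of Theorem~\ref{thm:smpl-estim} for the absolute values, selection of the maximal-amplitude index $m$ as a phase reference, and per-query interference circuits on the $\{\ket{m},\ket{i}\}$ pair costing $\BigO(\QubitsN)$ extra gates and $\BigO(\epsilon^{-2})$ shots, with small amplitudes thresholded to zero. The only (cosmetic) difference is in the post-processing: you read off $2\Re(\alpha_m^*\alpha_i)$ directly as the difference of the two outcome frequencies and control the error via the variance bound $\abs{\alpha_m}^2+\abs{\alpha_i}^2\le 2\abs{\alpha_m}^2$, whereas the paper reconstructs it from estimates of $\abs{\alpha_m-\alpha_j}$ and $\abs{\alpha_m-\I\alpha_j}$ via the polarization identity; your variant makes the $\epsilon^{-2}$ shot count and the stability of the division by $\abs{\hat\alpha_m}$ slightly more transparent.
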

\begin{proof}
    The proof is a modification of that of Theorem~\ref{thm:vad-tomography}. Employing
    Theorem~\ref{thm:smpl-estim}, retrieve an estimate of $\abs{\braket{i}{\psi}}$ for all
    $i \in [2^\QubitsN]$ up to $\infty$-norm error $\epsilon/16$ and with error
    probability $1/9$. This takes time $\BigO[T\epsilon^{-2}\QubitsN]$, and the resulting
    vector is $k$-sparse, where $k = \BigO(\epsilon^{-2}\QubitsN)$. Denote this vector by
    $r$. In time $\BigO(k)$, round any entry of this vector below $\epsilon/2$ to zero.
    Then, in time $\BigO(k)$, determine $m \in [2^\QubitsN]$ such that $r(m)$ is the
    greatest entry of $r$. Now, for given $j \in [2^\QubitsN]$, if $r(j) = 0$, estimate
    $\braket{j}{\psi}$ as $0$. If $j = m$, estimate $\braket{j}{\psi}$ as $r(m)$.
    Otherwise, write unitaries $V$ and $V'$ that map $\ket{m} \mapsto \ket{0}$ and
    $\ket{i} \mapsto \ket{1}$ and that then apply a Hadamard gate on the least significant
    bit, in the case of $V$, or then apply a phase gate followed by a Hadamard gate on the
    least significant bit, in the case of $V'$. This is done in time $\BigO(\QubitsN)$ and
    produces a circuit with $\BigO(\QubitsN)$ gates. Measuring $V\ket{\psi}$ and
    $V'\ket{\psi}$ $\BigO(\epsilon^{-2})$ times (for a total $\BigO[(T + \QubitsN)
    \epsilon^{-2}]$ runtime) suffices to estimate both $\abs{\braket{m}{\psi} -
    \braket{j}{\psi}}$ and $\abs{\braket{m}{\psi} - \I\braket{j}{\psi}}$ to error
    $\epsilon/16$ with joint success probability at least $2/9$. Let these estimates be
    denoted by $s$ and $t$ respectively. Then, $[r(m)^2 + r(j)^2 - s^2]/[2r(j)] + \I
    [r(m)^2 + r(j)^2 - t^2]/[2r(j)]$ is an $\epsilon$-error estimate of $\braket{j}{\psi}$
    with two-sided error probability at most $1/3$. The depth overhead is given by the
    depth of the circuits implementing $V$ and $V'$, which are the same up to a difference
    of $1$. The depth of $V$ is the same as that of a circuit mapping
    $\ket{0}\mapsto\ket{0}, \ket{m\oplus i}\mapsto\ket{1}$, up to a difference of $1$ (and
    where $\oplus$ denotes bitwise \textsc{xor}). Since $\BigO(1)$ layers of controlled
    \textsc{not} gates can map $\ket{u}$ to $\ket{v}$ such that $\abs{v} = \abs{u}/2$
    (while mapping $\ket{0}$ to $\ket{0}$), and since $\abs{m\oplus i}\leq n$, the depth
    of $\BigO(\log n)$ follows.
\end{proof}

Then, the following remark follows:
\begin{remark} \label{remark:asq-prepmeasure}
    Given the ability to prepare a quantum state $\ket{\psi}$ of $\QubitsN$ qubits in time
    $T$ (in black-box fashion), and the ability to run quantum circuits of up to time
    $\BigO(T + \QubitsN)$ [or the ability to run quantum circuits $\BigO(\log\QubitsN)$
    deeper], one has ASQ access (Definition~\ref{def:asq}) to $x \in
    \Complex^{2^\QubitsN}$, where $x(i) = \braket{i}{\psi}$, up to a global phase on
    $\ket{\psi}$. The runtime costs are $\QNorm(\epsilon) = 0$, $\Sample = T$,
    $\Query(\epsilon) = \BigO(T \epsilon^{-2} \QubitsN)$. More generally, given the
    ability to prepare an $\QubitsM$-block encoding of a (subnormalized) state
    $\ket{\phi}$ of $\QubitsN$ qubits in time $T$ (in black-box fashion), and the ability
    to run circuits of up to time $\BigO(T + \QubitsN)$, one has ASQ access to $x \in
    \Complex^{2^\QubitsN}$, where $x(i) = \braket{i}{\phi}$, up to a global phase on
    $\ket{\phi}$. In this case, the runtime costs are $\QNorm = \BigO(\epsilon^{-2} T)$,
    $\Sample = \BigO(T\norm{x}^{-2})$, $\Query(\epsilon) = \BigO(T\epsilon^{-2}\QubitsN)$.
\end{remark}

Note that the procedures outlined in Theorem~\ref{thm:smpl-estim} and
Lemma~\ref{lemma:estm-phase} apply unchanged to the block encoding setting, from whence
the above remark specifies $\Query(\epsilon)$ for block encodings. On the other hand, the
sample time $\Sample$ for block encodings corresponds to a single successful
post-selection. Because successful post-selection happens with probability $\norm{x}^2$
(from the standard properties of quantum measurement), and from a Chebyshev inequality,
the stated sample time $\Sample$ follows. (Further, these do not require that $\norm{x}$
is known.) From this follows also the norm query time $\QNorm$.

The remark above does not consider $(\QubitsM, \epsilon)$-block encodings: clearly,
access to a $(\QubitsM, \epsilon)$-block encoding of a vector $x$ gives \textit{approximate
sample and query} access to $x'$ satisfying $\norm{x' - x} \leq \epsilon$. While the
consequences of this for query access are obvious --- one has query access to $x$ up to
error $\epsilon$ --- the effects on \textit{approximate sample access} are made more meaningful by
the following lemma (also observed in \cite[Lemma 4.1]{Tang2019} for the particular case
of the reals, and for which the proof proceeds identically):
\begin{lemma} \label{lemma:approx-tvd}
    Let $x,y \in \Complex^d$ be two vectors such that $\norm{x - y} \leq \epsilon$ for
    some $\epsilon > 0$.
    Let $\Distribution_x$ and $\Distribution_y$ be the distributions with probability mass
    function $\Distribution_x(i) = {\abs{x(i)}}/{\norm{x}^2}$, and $\Distribution_y(i) =
        {\abs{y(i)}^2}/{\norm{y}^2}$. Then, the distance between these two distributions is
    bounded as
    \begin{equation}
        \sum_{i \in [d]} \abs{\Distribution_x(i) - \Distribution_y(i)} \leq \frac{4\epsilon}{\norm{x}}.
    \end{equation}
\end{lemma}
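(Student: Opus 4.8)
The plan is to reduce the claim to a bound on the $\ell_2$ distance between the \emph{normalized} vectors, and then to control that distance in terms of $\epsilon/\norm{x}$. I would write $\hat{x} \Defined x/\norm{x}$ and $\hat{y} \Defined y/\norm{y}$, so that $\Distribution_x(i) = \abs{\hat{x}(i)}^2$ and $\Distribution_y(i) = \abs{\hat{y}(i)}^2$, and the left-hand side is exactly $\sum_{i} \big\lvert \abs{\hat{x}(i)}^2 - \abs{\hat{y}(i)}^2 \big\rvert$. Factoring each summand as a difference of squares, $\abs{\hat{x}(i)}^2 - \abs{\hat{y}(i)}^2 = (\abs{\hat{x}(i)} - \abs{\hat{y}(i)})(\abs{\hat{x}(i)} + \abs{\hat{y}(i)})$, I would then apply Cauchy--Schwarz over $i$ to split this into the product of $\sqrt{\sum_i (\abs{\hat{x}(i)} - \abs{\hat{y}(i)})^2}$ and $\sqrt{\sum_i (\abs{\hat{x}(i)} + \abs{\hat{y}(i)})^2}$.

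Each of these two factors is estimated separately. For the first, the entrywise reverse triangle inequality $\big\lvert \abs{a} - \abs{b} \big\rvert \leq \abs{a - b}$ gives $\sum_i (\abs{\hat{x}(i)} - \abs{\hat{y}(i)})^2 \leq \norm{\hat{x} - \hat{y}}^2$. For the second, the elementary bound $(a+b)^2 \leq 2(a^2 + b^2)$ together with $\norm{\hat{x}} = \norm{\hat{y}} = 1$ gives $\sum_i (\abs{\hat{x}(i)} + \abs{\hat{y}(i)})^2 \leq 2(\norm{\hat{x}}^2 + \norm{\hat{y}}^2) = 4$. Combining, the target sum is at most $2\norm{\hat{x} - \hat{y}}$.

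It remains to bound the normalized distance by $2\epsilon/\norm{x}$, which is the one genuinely delicate step, since $\norm{x}$ and $\norm{y}$ need not coincide. I would use the decomposition $\hat{x} - \hat{y} = (x - y)/\norm{x} + y\,(\norm{x}^{-1} - \norm{y}^{-1})$: the first term contributes $\norm{x-y}/\norm{x} \leq \epsilon/\norm{x}$, while the second has norm $\big\lvert \norm{y} - \norm{x} \big\rvert / \norm{x}$, which by the reverse triangle inequality for norms is again at most $\norm{x-y}/\norm{x} \leq \epsilon/\norm{x}$. Hence $\norm{\hat{x} - \hat{y}} \leq 2\epsilon/\norm{x}$, and substituting into the previous paragraph yields the claimed $4\epsilon/\norm{x}$. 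The factor of $4$ thus arises transparently as a product of two factors of $2$: one from the Cauchy--Schwarz / difference-of-squares step, and one from the renormalization. The main obstacle is precisely this last estimate: one must avoid naively bounding $\norm{\hat{x} - \hat{y}}$ by $\norm{x-y}/\norm{x}$ (which ignores the mismatch in norms) and instead account for the correction from renormalizing $y$, which is exactly what the reverse triangle inequality on $\big\lvert \norm{x} - \norm{y} \big\rvert$ controls.
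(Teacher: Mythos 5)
Your proof is correct and follows essentially the same route the paper relies on: the paper defers to Tang's Lemma~4.1 and states the proof ``proceeds identically,'' which is precisely the difference-of-squares factorization, Cauchy--Schwarz, and the two-term renormalization bound you give, with the entrywise reverse triangle inequality $\bigl\lvert\,\abs{a}-\abs{b}\,\bigr\rvert\leq\abs{a-b}$ supplying the (minor) adaptation needed for complex entries. No gaps.
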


Thus, we have shown that \textit{sample and query} access captures at least some of the power of
the ability to prepare and measure (singly coherent) copies of a block encoding of a
state.

However, the definition allows us to satisfy ASQ access in other contexts.
For example, leveraging Theorem~\ref{thm:t-gate-sim} (cf.~section~\ref{sec:low-t}), we
note that, given a low $T$-gate count circuit describing a quantum state $x$, we obtain a
situation quite similar to that of $(\QubitsM, \epsilon)$-block encodings:
\begin{remark} \label{remark:asq-simt}
    Given a classical description of a quantum state $\ket{\psi}$ in the terms of
    Theorem~\ref{thm:t-gate-sim}, thus by a circuit with at most $t$ $T$-gates, and
    denoting $x$ as the representation of $\ket{\psi}$ in the computational basis, one has
    approximate sample access to $x'$ satisfying $\norm{x - x'} \leq \Delta/4$ at runtime
    cost $\QNorm(\epsilon) = 0$ and $\Sample = \tilde\BigO[\Poly(n)(1 + 2^t
            \Delta^{-4})]$. Furthermore, for choice of $i \in [2^\QubitsN]$, the circuit
    performing a swap test between $U\ket{0}^{\otimes\QubitsN}$ and $\ket{i}$ has
    $\BigO(\QubitsN)$ qubits and $\BigO(t)$ $T$-gates; thus also one has approximate query
    access to $x$ at runtime cost $\BigO[\Poly(n)(1 + 2^t \epsilon^{-2})]$. This does not
    require quantum computation.
\end{remark}

Finally, the \textit{approximate sample and query} access model is sufficiently general that it
also captures some of the power of measuring a quantum state in a completely different
basis; namely, the Pauli string basis:\footnote{We note here the parallel to
Ref.~\cite{vdNest2011}'s ``rotated bases computationally tractable states''.}

\begin{remark} \label{remark:asq-pauli}
    Given the ability to prepare copies of a quantum state $\ket{\psi}$ of dimension
    $2^\QubitsN \equiv d$ in time $T$, and if performing Pauli sampling on $\ket{\psi}$
    takes time $T'$, then one has ASQ access to $\pi_{\dyad{\psi}}$
    [cf.~eq.~\eqref{eq:pauli-vec}], at runtime costs $\QNorm(\epsilon) = 0$, $\Sample =
        T'$, and $\Query(\epsilon) = \BigO(Td^{-1}\epsilon^{-2})$.
\end{remark}

Note that the query runtime cost follows from the fact that, having chosen a given Pauli
string $P_i$, rotated single-qubit measurements suffice to estimate
$\ExpVal{\psi}{P_i\mkern1mu}$. From Ref.~\cite{Okamoto1959}, we may attain error
$\epsilon$ with $\BigO(\epsilon^{-2})$ samples; and, from eq.~\eqref{eq:pauli-vec}, error
$\epsilon/\sqrt{d}$ suffices. This procedure requires at most $\BigO(T + \QubitsN)$
quantum time between measurements.

\subsection{Composing approximate oversample and query access}

One significant property of the \textit{sample and query} model is that it can be composed,
meaning that \textit{sample and query} access to separate objects (say, two vectors $x$ and $y$)
can be leveraged to obtain \textit{(over-)sample and query} access to arithmetic over those
objects (say, $x+y$). This is the property enabling partial dequantization of linear
algebra algorithms in Ref.~\cite{Chia2020}. While
the proofs in Ref.~\cite{Chia2020} for composition do not translate to the \textit{approximate
oversample and query} model (as they rely on the ability to query to arbitrary
precision), we now illustrate that \textit{approximate oversample and query} access retains
composability. We do not show a result as strong as those stated for \textit{sample and
query}, but take a first step in this direction. Since the ability to perform linear
combinations of quantum states given the ability to prepare each state individually is
already a significant operation in quantum computation \cite{Childs2012}, we show that
$\ASQ[\phi]$ composes over linear combinations.

\begin{lemma} \TheoremName{Linear combinations from $\ASQ[\phi]$}
    \label{lemma:lin-comb}
    Let $\phi \geq 1$. Given $\ASQ[\phi]$ to $\tau$ vectors $x_1, x_2, \ldots, x_\tau \in
    \Complex^d$, and $\tau$ nonzero complex coefficients $\lambda_1, \ldots, \lambda_\tau
    \in \Complex$, one may construct $\ASQ[\phi']$ access to the linear combination $u
    \Defined \sum_{j \in [\tau]} \lambda_j x_j$, where $\phi' = \phi \tau^2 \kappa^2$,
    and $\kappa$ is the spectral condition number of the matrix whose columns are the
    vectors $x_1, \ldots, x_\tau$. The costs of access are:
    \begin{gather}
        \Query(\epsilon) = \BigO[\tau \log \tau \Query_x(\epsilon/\norm{\lambda})] \\
        \QNorm(\epsilon) = \BigO\{ \tau \log \tau \QNorm_x[\epsilon/(\tau^2 \norm{\lambda}^2)] \} \\
        \Sample = \BigO(\tau + \Sample_x),
    \end{gather}
    where $\lambda = (\lambda_1, \ldots, \lambda_\tau)$, and $\Query_x$, $\QNorm_x$, and
    $\Sample_x$ are upper bounds to the costs defined in Definition~\ref{def:aosq} for
    all $x_1, \ldots, x_\tau$.
\end{lemma}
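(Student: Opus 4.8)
The plan is to verify the three access guarantees demanded by $\ASQ[\phi']$ (Definitions~\ref{def:aq}--\ref{def:aosq}) in turn: query access to $u$, construction of an oversampling vector $\tilde{u}$, and estimation of its norm. For \emph{query access}, fix an index $i$ and a target error $\epsilon$. Using the AQ access to each $x_j$, I would estimate $x_j(i)$ to error $\epsilon/\norm{\lambda}_1$ (where $\norm{\lambda}_1 = \sum_j \abs{\lambda_j}$), boosting each estimate's success probability to $1 - \BigO(1/\tau)$ by a median-of-means over $\BigO(\log\tau)$ runs so that a union bound makes all $\tau$ estimates simultaneously good with probability at least $2/3$. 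The combination $\sum_j \lambda_j \hat{x}_j(i)$ then has error at most $\sum_j \abs{\lambda_j}\cdot\epsilon/\norm{\lambda}_1 = \epsilon$, and since $\norm{\lambda}_1 \leq \sqrt{\tau}\,\norm{\lambda}$ the per-query precision is consistent with the stated $\Query(\epsilon) = \BigO[\tau\log\tau\,\Query_x(\epsilon/\norm{\lambda})]$.

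The crux is the \emph{oversampling vector}. I would define $\tilde{u}$, in terms of the vectors $\tilde{x}_j$ underlying $\ASQ[\phi]$ access to each $x_j$, by
\begin{equation}
    \abs{\tilde{u}(i)}^2 \Defined \tau \sum_{j\in[\tau]} \abs{\lambda_j}^2 \abs{\tilde{x}_j(i)}^2 .
\end{equation}
The pointwise oversampling bound follows from the triangle inequality, Cauchy--Schwarz, and $\abs{\tilde{x}_j(i)}\geq\abs{x_j(i)}$:
\begin{equation}
    \abs{u(i)}^2 = \Big\lvert\sum_j \lambda_j x_j(i)\Big\rvert^2 \leq \tau\sum_j \abs{\lambda_j}^2\abs{x_j(i)}^2 \leq \abs{\tilde{u}(i)}^2 .
\end{equation}
For the norm, let $X$ be the matrix with columns $x_1,\ldots,x_\tau$, so that $u = X\lambda$; then $\norm{\tilde{u}}^2 = \tau\sum_j\abs{\lambda_j}^2\norm{\tilde{x}_j}^2 \leq \tau\phi\,\sigma_{\max}(X)^2\norm{\lambda}^2$, whereas $\norm{u}^2 = \norm{X\lambda}^2 \geq \sigma_{\min}(X)^2\norm{\lambda}^2$, and dividing gives $\norm{\tilde{u}}^2 \leq \tau\phi\kappa^2\norm{u}^2$. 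It is precisely this lower bound on $\norm{u}$, which may be as small as $\sigma_{\min}(X)\norm{\lambda}$, that forces the spectral condition number $\kappa$ into $\phi'$.

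To \emph{sample} from $\Distribution_{\tilde{u}}$, I would exploit that its mass function decomposes as a mixture $\Distribution_{\tilde{u}}(i) = \sum_j p_j\,\Distribution_{\tilde{x}_j}(i)$ with weights $p_j \propto \abs{\lambda_j}^2\norm{\tilde{x}_j}^2$: one draws $j$ according to $p_j$ and then draws $i$ from $\Distribution_{\tilde{x}_j}$ through the AS access to $\tilde{x}_j$, raising a (one-sided) failure flag whenever that inner draw fails. For the \emph{norm query}, I would estimate each $\norm{\tilde{x}_j}^2$ via $\QNorm_x$ and combine, distributing the target additive error $\epsilon$ across the $\tau$ terms, each carrying a weight $\abs{\lambda_j}^2$; this is what produces the precision $\epsilon/(\tau^2\norm{\lambda}^2)$ and the accompanying $\tau\log\tau$ overhead, and drawing the mixture index costs $\BigO(\tau)$ on top of $\Sample_x$.

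The main obstacle, and the source of the \emph{extra} factor of $\tau$ beyond the $\tau\kappa^2$ above, is that the mixture weights $p_j$ require the component norms $\norm{\tilde{x}_j}^2$, which the access model furnishes only approximately and with two-sided error, rather than exactly --- unlike the \SQ\ setting of Chia~\et al., where one may query $\tilde{x}_j$ and its norm exactly. I would resolve this by fixing conservative over-estimates $\hat{N}_j \geq \norm{\tilde{x}_j}^2$ once and for all and \emph{redefining} $\tilde{u}$ in terms of the $\hat{N}_j$, so that sampling $j$ with the now exactly known weights $\propto\abs{\lambda_j}^2\hat{N}_j$ reproduces $\Distribution_{\tilde{u}}$ exactly and the pointwise bound is preserved; the looseness of these over-estimates inflates $\norm{\tilde{u}}^2$ by the remaining factor of $\tau$, yielding $\phi' = \tau^2\phi\kappa^2$. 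The rest is bookkeeping: boosting the two-sided norm estimates and propagating the one-sided sampling failures so that each operation keeps its failure probability below $1/3$.
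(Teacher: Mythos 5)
Your overall architecture is the paper's: query $u(i)$ by combining boosted entry estimates, realize $\Distribution_{\tilde{u}}$ as a mixture over the $\Distribution_{\tilde{x}_j}$, prove the pointwise oversampling bound by Cauchy--Schwarz, and bound $\phi'$ via $\norm{X\lambda}^2 \geq \sigma_{\min}(X)^2\norm{\lambda}^2$ together with $\norm{X}_F^2 \leq \tau\,\sigma_{\max}(X)^2$. You also correctly identify the crux: the natural mixture weights $p_j \propto \abs{\lambda_j}^2\norm{\tilde{x}_j}^2$ are not available, because the ASQ model furnishes $\norm{\tilde{x}_j}^2$ only to additive error and with \emph{two-sided} (undetectable) failure.

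The gap is in your resolution of that crux. You propose to ``fix conservative over-estimates $\hat{N}_j \geq \norm{\tilde{x}_j}^2$ once and for all,'' but you do not say how such guaranteed over-estimates are obtained, and in this access model they cannot be obtained with certainty: any $\hat{N}_j$ produced from $\QNorm$ queries satisfies $\hat{N}_j \geq \norm{\tilde{x}_j}^2$ only with some probability, and when it fails the pointwise bound $\abs{\tilde{u}(i)} \geq \abs{u(i)}$ silently breaks. That route yields exactly the \emph{probabilistic} composition result, which is the content of the separate Lemma~\ref{lemma:lin-comb-rand} (where indeed one accepts failure probability $\delta$ and gets the better $\phi' \sim \phi\tau$). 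The present lemma is specifically the deterministic construction, and the reason it pays $\tau^2\kappa^2$ rather than $\tau$ is precisely to avoid ever needing the norms. The paper's fix is to weight the mixture by $\abs{\lambda_j}^2/\norm{\lambda}^2$ alone --- equivalently, to take the \emph{uniform} over-estimate $\hat{N}_j = \sum_{k}\norm{\tilde{x}_k}^2$ for every $j$, which is a valid over-estimate by construction and whose (unknown) numerical value cancels out of the mixture weights; it survives only in $\norm{\tilde{u}}^2 = \tau\,\norm{\lambda}^2\sum_k\norm{\tilde{x}_k}^2$, which is where the second factor of $\tau$ enters via $\norm{X}_F^2 \leq \tau\sigma_{\max}(X)^2$. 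So either your $\hat{N}_j$ are estimated (and the lemma's deterministic guarantee is lost), or they are all this one common quantity (and you have reproduced the paper's proof); as written, the step in between is not justified. Two minor bookkeeping points: requesting precision $\epsilon/\norm{\lambda}_1$ per entry is \emph{finer} than the $\epsilon/\norm{\lambda}_2$ appearing in the stated $\Query$ cost (so your cost bound does not immediately match the one claimed), and for the norm query the target per-term precision should be derived from whichever $\tilde{u}$ you finally adopt.
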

\begin{proof}
    Query access to $u$ is attained simply by querying each entry to sufficient precision;
    namely, for $j\in[\tau]$, $i\in[n]$, if $x_j(i)$ is known to precision
    $\epsilon/\abs{\lambda_j}$, $u(i)$ is known to precision $\epsilon$. With the median
    of $18\log(6\tau)$ copies of each estimate (considering the real and imaginary part of
    each estimate separately), one also ensures that all estimates are correct with
    overall probability $2/3$. [Note that this requires each $x_j(i)$ to be estimated to
            precision $\epsilon/(\sqrt{2}\abs{\lambda_j})$, but from this results only a constant
            factor.]

    To show that oversampling in the stated conditions is also possible, we
    propose an explicit procedure and then argue that the oversampling factor $\phi'$ is as
    defined. Let $\tilde{u}$ be the ``oversampling vector'' for $u$, such that we should
    satisfy sample access to $\tilde{u}$ (cf.\ also Definition~\ref{def:aosq}). Then
    sample from $\tilde{u}$ by first choosing $j$ with probability
    $\abs{\lambda_j}^2/\norm{\lambda}^2$ and outputting $i$ as sampled from approximate
    $\phi$-oversampling of $x_j$. This corresponds to sampling from
    $\Distribution_{\tilde{u}}$, where
    \begin{equation}
        \tilde{u}(i) = \sqrt{\tau (\sum_{j\in[\tau]}\norm{\tilde{x}_j}^2) \sum_{k\in[\tau]} \abs{\lambda_k}^2 \frac{\abs{\tilde{x}_k(i)}^2}{\norm{\tilde{x}_k}^2}}.
    \end{equation}
    For all $i\in[d]$, $\abs{\tilde{u}(i)} \geq \abs{u(i)}$, by a triangle inequality and
    a geometric mean--harmonic mean inequality, and because $\tau(\sum_{j\in[\tau]}
    \norm{\tilde{x}_j}^2) \geq \tau \norm{\tilde{x}_k}^2$ for all $k\in[\tau]$. Thus this
    procedure samples from $\Distribution_{\tilde{u}}$, which is oversampling
    $\Distribution_u$. The oversampling factor $\phi'$ is bounded, by definition, by an
    upper bound to $\norm{\tilde{u}}^2/\norm{u}^2$. Write $X$ as the ($d\times\tau$)
    matrix whose columns are $x_1, \ldots, x_\tau$, and note that $\sum_{j\in[\tau]}
    \norm{x_j}^2 \equiv \norm{X}_F^2 \equiv \Tr(X^\dag X)$. Then, we have that
    \begin{equation}
        \phi' \leq \frac{\phi \tau \norm{X}_F^2 \norm{\lambda}^2}{\norm{X \lambda}^2} \leq \frac{\phi\tau}{\sigma_{\text{min}}(\frac{X}{\norm{X}_F})^2} \leq \phi\tau^2\kappa^2
    \end{equation}
    where $\sigma_{\text{min}}(A)$ denotes the smallest singular value of $A$, and
    $\kappa$ denotes the ratio of the largest to the smallest singular value of $X$. The
    first inequality follows from $\norm{\tilde{x}_j}^2 \leq \phi \norm{x_j}^2$, the
    second inequality by a ``min-max'' formulation of the singular values, and the third
    inequality follows from from the identity $\Tr(X^\dag X) = \sum_{i \in [\Rank X]}
    \sigma_i^2$, if $\sigma_i$ is the $i$th singular value of $X$. Since the outlined
    procedure requires a single sample, it correctly fails with one-sided $1/3$
    probability error.

    Finally, estimating the norm $\norm{\tilde{u}}^2$ requires only that every norm
    $\norm{\tilde{x}_j}^2$ is known to error $\epsilon/(\tau^2\norm{\lambda}^2)$. With
    the median of $18\log(3\tau)$ copies of each estimate for $\norm{\tilde{x}_j}$ (as
    enabled by the $\ASQ[\phi]$ access to each $\tilde{x}_j$), one can ensure that every
    estimate is correct with two-sided $1/3$ probability error.
\end{proof}

\bigglue

The dependence on the condition number $\kappa$ in the lemma above is not a problem if
the various $x_1, \ldots, x_\tau$ are approximately orthogonal and not too different in
magnitude. But this is a rather restrictive condition. Indeed, the conditions stated by
the ASQ model regarding the error ``sidedness'' for each operation prevent an approach
exactly like that of Ref.~\cite[Lemma 3.9]{Chia2020}.\footnote{ One finds that many
proofs regarding the SQ model rely on exactly querying a given entry or the norm of the vector to cancel out undesirable terms. The introduction of error \emph{and}
randomness impedes the same approach.} If we are willing to fail the construction with
some probability, then an analogous result is recovered, and the dependence on such a
$\kappa$ vanishes.%%

We first require a lemma on obtaining relative error estimates from absolute error:

\begin{lemma} \label{lemma:rel-err}
    For choice of $\rho\in(0,1]$, and given the ability to estimate a quantity $x \in
        \Complex$ to absolute error $\epsilon$ in time $\Query(\epsilon)$ with probability
    $2/3$, algorithm~\ref{alg:rel-err-estim} computes an estimate of $x$ to relative error
    $\rho$ with probability $1-\delta$. The algorithm halts with overwhelming probability
    in time $\BigO[\Query(\rho \abs{x}) (1 + \log\frac{1}{\abs{x}}) \log
            \frac{1}{\delta}]$ [under the assumption that $\Query\{\Omega(\epsilon)\} =
                \BigO\{\Query(\epsilon)\}$].
\end{lemma}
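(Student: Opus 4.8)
The obstacle is that relative error $\rho$ would be immediate if we knew $\abs{x}$ (just invoke the absolute estimator at precision $\rho\abs{x}$ and boost), but $\abs{x}$ is unknown, so its scale must be discovered adaptively. The plan is a doubling search on the precision, which is what Algorithm~\ref{alg:rel-err-estim} implements. At stage $k$ we posit a scale $g_k \Defined 2^{-k}$, run the given absolute estimator at precision $\epsilon_k \Defined \rho\, g_k / 2$ (boosted, see below) to obtain $\hat{x}_k$, and halt at the first stage for which $\abs{\hat{x}_k} \geq g_k$, returning $\hat{x}_k$.

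First I would establish that a \emph{good} estimate at the halting stage already certifies relative error $\rho$, so that no separate refinement step is needed. Indeed, if the halting-stage estimate is good, i.e.\ $\abs{\hat{x}_k - x} \leq \epsilon_k$, then the stopping test $\abs{\hat{x}_k} \geq g_k$ gives $\abs{x} \geq g_k - \epsilon_k = g_k(1 - \rho/2) \geq g_k/2$ (using $\rho \leq 1$), whence $\epsilon_k = \rho\, g_k/2 \leq \rho\abs{x}$ and therefore $\abs{\hat{x}_k - x} \leq \epsilon_k \leq \rho\abs{x}$. The point is that \emph{any} halt backed by a good estimate is automatically correct, regardless of at which stage it occurs.

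Next I would control the halting time. Conditioned on good estimates, once $g_k \leq \tfrac{2}{3}\abs{x}$ we have $\abs{\hat{x}_k} \geq \abs{x} - \epsilon_k \geq \abs{x} - g_k/2 \geq g_k$, so the stopping test necessarily fires; this happens by some stage $k^\ast = \BigO(1 + \log\tfrac{1}{\abs{x}})$. Hence the number of stages is $\BigO(1 + \log\tfrac{1}{\abs{x}})$ and the finest precision ever requested is $\epsilon_{k^\ast} = \Theta(\rho\abs{x})$; invoking the regularity hypothesis $\Query\{\Omega(\epsilon)\} = \BigO\{\Query(\epsilon)\}$, every stage costs at most $\BigO[\Query(\rho\abs{x})]$, since coarser stages are no more expensive than the finest. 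Summing over the stages yields the $(1 + \log\tfrac{1}{\abs{x}})$ factor.

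Finally, the probabilities. For correctness it suffices that \emph{all} executed stages produce good estimates, which rules out a premature halt on a spuriously large $\hat{x}_k$; I would enforce this by boosting each absolute estimate to small failure probability via median-of-means (applying a Hoeffding bound separately to real and imaginary parts) and union-bounding over the $\BigO(1 + \log\tfrac{1}{\abs{x}})$ stages, contributing the $\log\tfrac{1}{\delta}$ factor and giving overall success $1-\delta$. \textbf{The main obstacle is the probability bookkeeping induced by the random halting index:} one must separate the \emph{correctness} event (the estimate at the — random — stage where we halt is good, guarded against early false halts by the union bound over the at-most-$k^\ast$ relevant stages) from the \emph{runtime} event (halting by stage $k^\ast$, which I would argue holds with overwhelming probability), so that the failure budget enters the stated bound as a single clean $\log\tfrac{1}{\delta}$ rather than as a per-stage reboosting; the remaining estimates (the algebra of the stopping constant and the geometric stage count) are routine.
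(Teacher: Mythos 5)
Your proposal matches the paper's proof in all essentials: an exponentially refining search that halts once the (median-boosted) estimate exceeds a fixed multiple of the current precision, thereby certifying a constant-factor lower bound on $\abs{x}$, with the correctness event and the halting-time event analysed separately. The only structural difference is that you fold the final refinement into the loop by scaling each stage's precision by $\rho$ and returning the halting-stage estimate, whereas the paper's loop runs at the $\rho$-independent precision $2^{-k}/\sqrt{2}$ and makes one last call at precision $\rho\bot$ with $\bot = \mu - \epsilon$; both yield the stated runtime under the regularity assumption on $\Query$. One detail to repair: you cannot literally ``union-bound over the $\BigO(1+\log\frac{1}{\abs{x}})$ stages'' with a per-stage budget of $\delta$ divided by that count, since the number of stages is not known to the algorithm a priori; you need a summable per-stage allocation such as $\delta/2^{k+1}$ at stage $k$, which is exactly what the paper's repetition count $18\log[10^4\,2^{k+1}/\delta]$ implements.
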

\begin{proof}
    Algorithm~\ref{alg:rel-err-estim} computes increasingly accurate estimates of $x$,
    looking to halt when the estimate's absolute value (say, $\abs{\hat{x}}$) is
    twice as large as the estimation error (say $\epsilon$). This ensures that $\epsilon
        \leq \abs{x}/3$, and that $\frac{2}{3}\abs{x} \leq \abs{\hat{x}} \leq 2\abs{x}$,
    i.e., that $\abs{\hat{x}}$ is of the order of $\abs{x}$. Once this bound for
    $\abs{x}$ is obtained (say, $\bot$), it suffices to output an estimate of error
    $\rho\bot$.

    To deal with the randomization in the estimations of $\abs{\hat{x}}$, we use the fact
    that, for $w \in \Real$ and $\hat{w}$ such that $\abs{w - \hat{w}} \leq \epsilon$
    with probability $2/3$, the median of $18\log(1/\delta)$ independent copies of
    $\hat{w}$ is within $\epsilon$ error of $w$ with probability $1-\delta$ (from
    Hoeffding's inequality). This allows the algorithm to guarantee that: false positives
    do not occur within the loop with more than (overall) probability $\delta/2$; and that
    the loop is not kept running by bogus estimates for too long. Let $k_{\textsc{max}}$
    be the maximum number of iterations of the loop needed to achieve a sufficiently small
    value of $\epsilon$: from the considerations above, $k_{\textsc{max}} = \lceil
    \log_2(3/\abs{x}) \rceil$. After $k$ exceeds this value, the probability of repeatedly
    looping from bad estimates vanishes overwhelmingly: the probability of reaching the
    $(k_{\textsc{max}} + m)$th iteration of the loop is $(\delta/2^{k_{\textsc{max}} + 1})
    (\delta /2^{k_{\textsc{max}} + 2})\cdots(\delta/2^{k_{\textsc{max}}+m})$, from which
    it is easy to see that with probability at least $1-\omega$, the loop halts at the
    $(k_{\textsc{max}} + m)$th iteration, where $m = \log_2(1/\omega)/[1 +
    k_{\textsc{max}} + \log_2(1/\delta)]$. Setting $\delta = \delta'\omega$, this ensures
    that the loop halts in $k_{\textsc{max}}$ iterations with probability $1-\omega$. (If
    the loop does not halt in $k_{\textsc{max}}$, it halts soon thereafter with
    overwhelming probability, from the same considerations.) From this observation, the
    explicit form of the algorithm, and the value of $k_{\textsc{max}}$ follows the stated
    runtime.
\end{proof}

\begin{figure*}
    \begin{algorithm}[H]
        \caption{
            Calculating a relative error estimate via online access to randomized queries
            to absolute error estimates.}
        \label{alg:rel-err-estim}

        \def\Loop{\Statex \!\texttt{loop:}}
        \def\Final{\Statex \!\texttt{final:}}
        \algnewcommand{\GoTo}[1]{\State \textbf{goto} \texttt{#1}}
        \algblockdefx{DoTimes}{EndDoTimes}[1]{\textbf{do} #1 \textbf{times:}}{\textbf{end}}

        \begin{algorithmic}[1]
            \Require relative precision $\rho$
            \Require failure probability $\delta$
            \Require online access to $\hat{x}(\epsilon)$ such that $\abs{\hat{x}(\epsilon) - x} \leq \epsilon$ with probability $2/3$
            \algstore{alg:rel-err-estim-a}
        \end{algorithmic}
        \begin{varwidth}[t]{.5\linewidth}
            \begin{algorithmic}[1]
                \algrestore{alg:rel-err-estim-a}
                \Statex\vskip-2pt
                \State $k \gets 1$
                \Loop
                \State $\epsilon \gets 2^{-k}/\sqrt{2}$
                \State $l \gets [\textsc{empty list}]$
                \DoTimes{$18\log[10^4 \, 2^{k+1}/\delta]$}
                \State Sorted insert $\abs{\hat{x}(\epsilon)}$ into $l$
                \EndDoTimes
                \State $\mu \gets \textsc{median}(l)$
                \If{$\epsilon \leq \frac{1}{2}\mu$}
                \GoTo{final}
                \EndIf
                \algstore{alg:rel-err-estim-b}
            \end{algorithmic}
        \end{varwidth}
        \hfil
        \begin{varwidth}[t]{.5\linewidth}
            \begin{algorithmic}[1]
                \algrestore{alg:rel-err-estim-b}
                \Statex
                \State $k \gets k+1$
                \GoTo{loop}
                \Final
                \State $\bot \gets \mu - \epsilon$
                \State $r \gets [\textsc{empty list}]$
                \State $c \gets [\textsc{empty list}]$
                \DoTimes{$18 \log[8/\delta]$}
                \State Sorted insert $\Re[\hat{x}(\rho\bot)]$ into $r$
                \State Sorted insert $\Im[\hat{x}(\rho\bot)]$ into $c$
                \EndDoTimes
                \State\Return $\textsc{median}(r) + \I \, \textsc{median}(c)$
            \end{algorithmic}
        \end{varwidth}
        \vskip 2pt
    \end{algorithm}
\end{figure*}

With this lemma, we may conclude an alternative linear combination composition lemma that
does not have a condition number dependence and is very similar to \cite[Lemma
3.9]{Chia2020}, at the cost of succeeding only probabilistically:

\begin{lemma} \TheoremName{Probabilistic linear combinations from $\ASQ[\phi]$}
    \label{lemma:lin-comb-rand}
    Let $\phi \geq 1$ and $\delta \in (0, 1)$. Given $\ASQ[\phi]$ to $\tau$ vectors $x_1,
    \ldots, x_\tau \in \Complex^d$, and $\tau$ nonzero complex coefficients $\lambda_1,
    \ldots, \lambda_\tau$, one may construct $\ASQ[\phi']$ access to the linear
    combination $u \Defined \sum_{j\in[\tau]} \lambda_j x_j$ with probability at least
    $1-\delta$, where $\phi' = 4\tau\phi (\sum_{k\in[\tau]} \norm{\lambda_k
    x_k}^2)/\norm{u}^2$. The costs of access are:
    \begin{samepage}
        \begin{gather}
            \Query(\epsilon) = \BigO[\tau \log \tau \Query_x(\epsilon/\norm{\lambda})] \\
            \QNorm(\epsilon) = \BigO(1) \\
            \Sample = \BigO(\tau + \Sample_x),
        \end{gather}
    \end{samepage}
    where $\lambda = (\lambda_1, \ldots, \lambda_\tau)$, and $\Query_x$, $\QNorm_x$, and
    $\Sample_x$ are upper bounds to the costs defined in Definition~\ref{def:aosq} for
    all $x_1, \ldots, x_\tau$. For $\norm{x}^2$ the smallest $\norm{x_j}^2$, $j=1,
    \ldots, \tau$, this construction requires a one-time preliminary procedure that runs
    in time $\BigO[\tau \QNorm(\norm{x}^2) (1 + \log \frac{1}{\norm{x}})
    \log\frac{\tau}{\delta}]$ [with overwhelming probability and under the assumption
    that $\QNorm\{\Omega(\epsilon)\} = \BigO\{\QNorm(\epsilon)\}$].
\end{lemma}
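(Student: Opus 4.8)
The plan is to keep the three-part structure (query, sample, norm) of Lemma~\ref{lemma:lin-comb}, reusing its query argument verbatim but replacing the sampling rule so as to shed the condition-number factor. For query access I would query each $x_j(i)$ to precision $\BigO(\epsilon/\norm{\lambda})$, boost with the median of $\BigO(\log\tau)$ copies (treating real and imaginary parts separately) so that all $\tau$ estimates hold simultaneously with probability $2/3$, and return $\sum_j\lambda_j\widehat{x_j(i)}$; this gives $\Query(\epsilon)=\BigO[\tau\log\tau\,\Query_x(\epsilon/\norm{\lambda})]$. The essential change is in the sampler: rather than drawing the summand index $j$ with probability $\propto\abs{\lambda_j}^2$ (as in Lemma~\ref{lemma:lin-comb}, which forces the oversampling vector to carry a $\norm{\tilde{x}_k}^{-2}$ weight that only a min--max bound tames, at the price of $\kappa^2$), I would draw $j$ with probability $\propto\abs{\lambda_j}^2\norm{\tilde{x}_j}^2$. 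This weighting cancels the offending factor and leaves $\phi'\sim\tau\phi(\sum_k\norm{\lambda_k x_k}^2)/\norm{u}^2$ with no dependence on $\kappa$.

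Drawing $j\propto\abs{\lambda_j}^2\norm{\tilde{x}_j}^2$ requires the norms $N_j\Defined\norm{\tilde{x}_j}^2$, which $\ASQ[\phi]$ only lets me \emph{estimate}. The one-time preliminary step is therefore to apply Lemma~\ref{lemma:rel-err}, with the oracle $\QNorm$ in place of $\Query$, to estimate each $N_j$ to a fixed constant relative error, each with failure probability at most $\delta/\tau$; a union bound then gives overall success $1-\delta$, which is the source of the lemma's probabilistic failure. Summing the cost of Lemma~\ref{lemma:rel-err} over the $\tau$ vectors and bounding each term by the worst case $N_j=\norm{x}^2$ (the smallest norm, maximising both $\QNorm(\rho N_j)$ and $\log\tfrac{1}{N_j}$) yields the stated $\BigO[\tau\,\QNorm(\norm{x}^2)(1+\log\tfrac{1}{\norm{x}})\log\tfrac{\tau}{\delta}]$, where the assumption $\QNorm\{\Omega(\epsilon)\}=\BigO\{\QNorm(\epsilon)\}$ absorbs the constant $\rho$. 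I would then freeze the resulting estimates $\widehat{N_j}$ once and for all, so that the remaining construction is deterministic in these numbers.

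With $\widehat{N_j}$ fixed, I would sample by drawing $j$ from the precomputed categorical law $\Prob[j]\propto\abs{\lambda_j}^2\widehat{N_j}$ (cost $\BigO(\tau)$, never failing) and then $i\sim\Distribution_{\tilde{x}_j}$ through the $\ASQ[\phi]$ sampler (cost $\Sample_x$, one-sided failure $1/3$), giving $\Sample=\BigO(\tau+\Sample_x)$ with one-sided error. This realises $\Distribution_{\tilde{u}}$ for the implicit vector $\abs{\tilde{u}(i)}^2=2\tau\sum_j\abs{\lambda_j}^2(\widehat{N_j}/N_j)\abs{\tilde{x}_j(i)}^2$, and I would check the two oversampling conditions separately: the lower bracket $\widehat{N_j}\geq\tfrac12 N_j$, together with $\abs{\tilde{x}_j(i)}\geq\abs{x_j(i)}$ and the Cauchy--Schwarz step $\abs{u(i)}^2\leq\tau\sum_j\abs{\lambda_j}^2\abs{x_j(i)}^2$, gives $\abs{\tilde{u}(i)}\geq\abs{u(i)}$; the upper bracket $\widehat{N_j}\leq 2N_j$ with $N_j\leq\phi\norm{x_j}^2$ gives $\norm{\tilde{u}}^2\leq 4\tau\phi\sum_j\norm{\lambda_j x_j}^2$, from which $\phi'=\norm{\tilde{u}}^2/\norm{u}^2$ takes the advertised value. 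Since $\tilde{u}$ is built from the frozen estimates, $\norm{\tilde{u}}^2=2\tau\sum_j\abs{\lambda_j}^2\widehat{N_j}$ is a known scalar, so $\QNorm(\epsilon)=\BigO(1)$.

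The main obstacle I expect is the tension between the \emph{two-sided} error of $\QNorm$ and the need for the frozen $\widehat{N_j}$ to bracket $N_j$ on \emph{both} sides at once --- the lower side for the oversampling inequality, the upper side for the $\phi'$ bound. A raw two-sided norm estimate cannot be relied upon, so the detour through relative-error boosting (Lemma~\ref{lemma:rel-err}) is unavoidable, and the delicate part is tracking how its constant relative error propagates into exactly the factor of $4$ in $\phi'$ while simultaneously keeping the failure budget $\delta$ correctly divided across the $\tau$ preliminary estimates.
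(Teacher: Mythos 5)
Your proposal is correct and follows essentially the same route as the paper's proof: a one-time application of Lemma~\ref{lemma:rel-err} to get constant-relative-error norm estimates $\hat{n}_j^2$ (failure budget $\delta/\tau$ each, union bound), reweighting the choice of summand by $\abs{\lambda_j}^2\hat{n}_j^2$ so the implicit oversampling vector $\tilde{u}(i)^2 = 2\tau\sum_k\abs{\lambda_k}^2\hat{n}_k^2\abs{\tilde{x}_k(i)}^2/\norm{\tilde{x}_k}^2$ loses the condition-number dependence, and verifying the two oversampling inequalities from the two sides of the relative-error bracket exactly as you describe. Your Cauchy--Schwarz step is the paper's ``triangle plus geometric mean--harmonic mean inequality'' in different clothing, and the remaining bookkeeping (query boosting, $\BigO(1)$ norm lookup, $\BigO(\tau+\Sample_x)$ sampling) coincides with the paper's.
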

\begin{proof}
    Using Lemma~\ref{lemma:rel-err}, start by computing a $1/2$-relative error estimate of
    each norm $\norm{\tilde{x}_j}^2$, $j=1,\ldots,\tau$, with overall error probability
    $\delta$, and store these estimates as $\hat{n}_1^2, \ldots, \hat{n}_\tau^2$. The
    runtime for this follows directly from Lemma~\ref{lemma:rel-err} and is as
    stated therein. The construction fails only if any of these estimates are
    wrong; henceforth we assume these are correct.

    We otherwise proceed similarly to \ref{lemma:lin-comb}: estimating each $x_j(i)$ to
    error $\epsilon/\abs{\lambda_j}$ suffices, with the median of means being used to ensure
    every estimate is correct with probability $2/3$.

    Sampling is done by first choosing $j$ with probability $\abs{\lambda_j}^2
    \hat{n}_j^2 / (\sum_{k\in[\tau]} \abs{\lambda_k}^2 \hat{n}_k^2)$ and outputting $i$
    as sampled from approximate $\phi$-oversampling of $x_j$. This corresponds to
    sampling from $\Distribution_{\tilde{u}}$, where $\tilde{u}$ is defined by its
    entries as
    \begin{equation}
        \tilde{u}(i) = \sqrt{2\tau \sum_{k\in[\tau]} \abs{\lambda_k}^2 \hat{n}_k^2 \frac{\abs{\tilde{x}_k(i)}^2}{\norm{\tilde{x}_k}^2}}.
    \end{equation}
    This is for all $i \in [d]$, $\abs{\tilde{u}(i)}^2 \geq \abs{u(i)}^2$, by a triangle
    and a geometric mean--harmonic mean inequality, and making use of the fact that, by
    hypothesis and for all $j \in [\tau]$, $\frac{1}{2}\norm{\tilde{x}}^2 \leq \hat{n}_j
    \leq \frac{3}{2}\norm{\tilde{x}}^2$. To bound $\phi'$, note that $\norm{\tilde{u}}^2
    = 2\tau \sum_{k\in[\tau]} \abs{\lambda_k}^2 \hat{n}_k^2$, and again make use of the
    bounds of $\hat{n}_j$, and the fact that $\norm{\tilde{x}_j}^2 \leq
    \phi\norm{x_j}^2$. Furthermore, this explicit norm can be calculated once as a
    preliminary step, after which computing it corresponds to a lookup.
\end{proof}

\bigglue

\subsection{Computation with approximate oversample and query access}
\label{sec:computation}

An important characterization of the ASQ model is its computational power: it is
only meaningful to say that one can satisfy the ``promises'' of ASQ if one knows how to
perform useful computation by leveraging those promises. Here, we focus on the task of
performing an inner product estimation between two vectors, given ASQ access to either
one or both of the vectors. The computational cost of computing the inner product between
two states, given the ability to prepare singly coherent copies of each, is recognized as
significant (for example, for fidelity estimation) and is under ongoing study
\cite{Anshu2022, Hinsche2024, Arunachalam2024, Gong2024}. On the other hand, the inner
product can be regarded as the simplest nontrivial ``reduction'' of two vectors into a
scalar to estimate and is also considered in Refs.~\cite{Tang2019,Chia2020,LeGall2023}.
This task will find immediate application in the application of \ASQ\space to distributed
inner product estimation (Sec.~\ref{sec:dist-inprod})

We begin with the asymmetric case, i.e., \textit{approximate sample and query} access to one
vector, and classical oracle access to the entries of another:

\begin{lemma} \TheoremName{Asymmetric inner product estimation}
    \label{lemma:inprod-asym}
    Let $x,y \in \Complex^d$, and $\epsilon \in (0, 1]$. Suppose $\ASQ[\phi]$ access to
    $x$ and query access to $y$ (in the language of Ref.~\cite{Chia2020};
    cf.~Def.~\ref{def:sq}). Then, one may estimate $x^\dagger y$ to additive error
    $\epsilon\norm{y}_1$ and success probability $2/3$ by
    algorithm~\ref{alg:inprod-asym}. The runtime of the algorithm is
    $$\tilde\BigO[\phi\epsilon^{-2}\norm{x}^2\log d + \phi\epsilon^{-2}\norm{x}^2\Query(\epsilon) + \phi^2\epsilon^{-4}\norm{x}^4 \log d \,\Sample + \QNorm(\norm{x}^2)], $$
    with very high probability, and under the assumption that $\Query[\Omega(\epsilon)] =
    \BigO[\Query(\epsilon)]$ and likewise for $\QNorm$.
\end{lemma}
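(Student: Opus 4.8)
The plan is to build an importance-sampling estimator for $x^\dagger y = \sum_{i\in[d]} \overline{x(i)}\,y(i)$ driven by the sample access that $\ASQ[\phi]$ grants to the oversampling vector $\tilde{x}$. Writing $z(i) \Defined \overline{x(i)}\,y(i)$, the natural unbiased estimator draws an index $i \sim \TXDist$ and returns $Z \Defined z(i)/\TXDist(i)$; since $\TXDist(i) = \abs{\tilde{x}(i)}^2/\norm{\tilde{x}}^2$ one checks $\Expectation Z = \sum_i \overline{x(i)}\,y(i) = x^\dagger y$. For the second moment I would use $\abs{\tilde{x}(i)} \geq \abs{x(i)}$ to bound
\begin{equation}
  \Expectation\abs{Z}^2 = \sum_{i\in[d]} \frac{\abs{x(i)}^2\abs{y(i)}^2\,\norm{\tilde{x}}^2}{\abs{\tilde{x}(i)}^2} \leq \norm{\tilde{x}}^2\norm{y}_2^2 \leq \phi\norm{x}^2\norm{y}_2^2 ,
\end{equation}
so averaging $N = \BigO(\phi\epsilon^{-2}\norm{x}^2)$ independent copies gives additive error $\BigO(\epsilon\norm{y}_2) \le \BigO(\epsilon\norm{y}_1)$. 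This $N$ accounts for the terms $\phi\epsilon^{-2}\norm{x}^2\log d$ (per-sample bookkeeping, $\BigO(\log d)$ each) and $\phi\epsilon^{-2}\norm{x}^2\Query(\epsilon)$ (one query per term). The $\norm{y}_1$-flavoured error also has a second, cleaner source: replacing $x(i)$ by an approximate query $\hat{x}(i)$ with $\abs{\hat{x}(i)-x(i)}\le\epsilon$ biases the estimator by $\sum_i \overline{(\hat{x}(i)-x(i))}\,y(i)$, of magnitude at most $\epsilon\norm{y}_1$, so querying each entry only to error $\BigO(\epsilon)$ suffices — this is exactly where the $1$-norm enters, and why $\Query(\epsilon)$ rather than $\Query$ of an exponentially small precision appears.

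The main obstacle is that $\ASQ[\phi]$ grants only \emph{sample} (not query) access to $\tilde{x}$ (Def.~\ref{def:aosq}), so the weight $\TXDist(i)$ needed to evaluate $Z$ is not directly available and must itself be estimated from samples. I would estimate $\TXDist(i_k)$ for each of the $N$ drawn query indices by its empirical frequency over a dedicated batch of fresh samples. An index is drawn as a query index essentially only if $\TXDist(i)\gtrsim 1/N$, and pinning down such a weight to constant relative error \emph{with high confidence} (needed for a union bound over the $N$ query indices, contributing a $\log d$) requires observing it $\BigO(\log(1/\delta))$ times, i.e.\ a batch of $\BigO(N\log d)$ samples per index. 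Over all $N$ query indices this yields a sample budget of $\BigO(N^2\log d) = \BigO(\phi^2\epsilon^{-4}\norm{x}^4\log d)$, which is precisely why $\Sample$ enters quadratically in $N$ relative to the query cost. A single $\QNorm(\norm{x}^2)$ call estimating $\norm{\tilde{x}}^2$ to constant relative accuracy is then enough to fix the sample sizes $N$ and the batch lengths, since these scale with $\norm{\tilde{x}}^2 \le \phi\norm{x}^2$; this is the lone additive $\QNorm$ term.

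It then remains to propagate the weight-estimation error into $Z$ and to boost the success probability to $2/3$ (and to $1-\delta$ by a median over $\BigO(\log\tfrac1\delta)$ repetitions). The hard part is exactly this propagation: the empirical weights $\hat{p}_{i_k}$ are correlated with the sampled indices and can be small, so a rare underestimate $\hat{p}_{i_k}\ll\TXDist(i_k)$ makes the term $z(i_k)/\hat{p}_{i_k}$ blow up. I would control this by noting that the \emph{ideal} contribution stays bounded, $\abs{z(i)}^2/\TXDist(i)\le\norm{\tilde{x}}^2\abs{y(i)}^2$, even as $\TXDist(i)\to 0$, and that indices with genuinely tiny weight are almost never sampled; clipping $\hat{p}_{i_k}$ at the scale $1/N$ (or using a median-of-means across repetitions) then tames the rare bad weights while contributing only an $\BigO(\epsilon\norm{y}_1)$ bias. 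The bulk of the rigorous work is to verify simultaneously that the empirical weights are accurate enough, that the clipped estimator is robust enough to the occasional bad weight, and that the combined sampling, query, and weight-estimation errors all stay within $\BigO(\epsilon\norm{y}_1)$ at the stated sample and query budgets.
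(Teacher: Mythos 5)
Your proposal is correct and follows essentially the same route as the paper's proof: the same importance-sampling estimator $\hat{x}_i^* y(i)/\hat{p}_i$ with empirically estimated weights, rejection of indices whose estimated probability falls below a threshold $\sim\epsilon^2/(\phi\norm{x}^2)$, the rejected mass and the query error each bounded by $\BigO(\epsilon\norm{y}_1)$ via $\abs{x(i)}\leq\norm{\tilde{x}}\sqrt{\TXDist(i)}$, a Chebyshev bound on $\BigO(\phi\epsilon^{-2}\norm{x}^2)$ samples, and a single relative-error norm estimate. The only cosmetic difference is that the paper estimates all weights at once from one shared batch of $\BigO(\gamma^{-2}\log d)$ samples (via Theorem~\ref{thm:smpl-estim}, which also sidesteps the correlation worry you raise, since that batch is independent of the query samples), rather than a fresh batch per drawn index, but the total sample budget and the resulting error analysis are the same.
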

\begin{figure*}
    \begin{algorithm}[H]
        \caption{
            Calculating an inner product estimate via $\ASQ[\phi]$ access to one of the
            vectors ($x$), and classical RAM access to the other ($y$), with success
            probability at least $2/3$, and absolute error at most $\epsilon\norm{y}_1$.}
        \label{alg:inprod-asym}

        \algblockdefx{DoTimes}{EndDoTimes}[1]{\textbf{do} #1 \textbf{times:}}{\textbf{end}}

        \begin{varwidth}[t]{.5\textwidth}
            \begin{algorithmic}[1]
                \Require $\ASQ[\phi]$ access to $x \in \Complex^d$
                \Require Query access to $y \in \Complex^d$
                \Require Precision $\epsilon \in (0, 1]$
                \Statex
                \Function{Improve}{$x$, $\epsilon$, $\delta$}
                \State $r \gets []$
                \State $c \gets []$
                \State $\epsilon \gets \epsilon/\sqrt{2}$
                \DoTimes{$\lceil 18\log(1/\delta)\rceil$}
                \State Sorted insert $\Re[\hat{x}(\epsilon)]$ into $r$
                \State Sorted insert $\Im[\hat{x}(\epsilon)]$ into $c$
                \EndDoTimes
                \State\Return $\textsc{median}(r) + \I\, \textsc{median}(c)$
                \EndFunction
                \Statex
                \State $\hat{n}^2 \gets \Call{[algorithm~\ref{alg:rel-err-estim}]}{\norm{x}^2, \rho=\frac{1}{4}, \delta=\frac{1}{9}}$
                \State $\gamma \gets \epsilon^2/(135\,{\hat{n}^2})$
                \algstore{alg:inprod-asym}
            \end{algorithmic}
        \end{varwidth}
        \hfil
        \begin{varwidth}[t]{.5\textwidth}
            \begin{algorithmic}[1]
                \algrestore{alg:inprod-asym}
                \State $h \gets [\textsc{empty histogram}]$
                \DoTimes{$\lceil 512 \, \gamma^{-2} \log(18d) \rceil$}
                \Repeat
                \State $i \gets \Call{oversample}{x}$
                \Until{sample is valid.}
                \State Insert $i$ into $h$.
                \EndDoTimes
                \Statex
                \State $m \gets []$
                \DoTimes{$\lceil 7\hat{n}^2 \epsilon^{-2} \rceil$}
                \Repeat
                \State $i \gets \Call{oversample}{x}$
                \Until{sample is valid.}
                \State $\hat{p} \gets \Call{frequency}{h; i}$
                \If{$\hat{p} \geq 3\gamma/2$}
                \State $\hat{x} \gets \textsc{improve}[x(i), \epsilon/4, \epsilon^2/(127\,\hat{n}^2)]$
                \State Insert $(\hat{p} + \gamma/2)^{-1} \hat{x}^* y(i)$ into $m$.
                \EndIf
                \EndDoTimes
                \State \Return $\Call{average}{m}$
            \end{algorithmic}
        \end{varwidth}
        \vskip 2pt
    \end{algorithm}
\end{figure*}
\begin{proof}
    Broadly speaking, Algorithm~\ref{alg:inprod-asym} computes an approximation to every
    probability using Theorem~\ref{thm:smpl-estim}. Then, it filters the values of
    $i$, as sampled from the ASQ access to the input vector $x$, excluding entries for
    which $x(i)$ is small. By using a Chebyshev inequality, and showing that the
    bias and variance of the considered point estimation are sufficiently bounded, one can
    ensure that the average of a sufficient number of samples is $\epsilon\norm{y}_1$
    close to the true inner product. This is in line with Refs.~\cite{Tang2019, Chia2020,
        LeGall2023}; but one now needs to adapt the filtering technique and the bias and
    variance bounding to the presence of error in the knowledge of the entries of $x$.

    We now discuss the proof in detail.

    Denote the ``oversampling vector'' of $x$ as $\tilde{x}$ (cf.~Def.~\ref{def:aosq}).
    Let $\hat{n}^2$ be an estimate of $\norm{\tilde{x}}^2$ to relative error $1/4$ with
    probability $8/9$, as made possible by $\ASQ[\phi]$ to $x$ and
    Lemma~\ref{lemma:rel-err}. This is attained, accordingly, in time
    $\BigO\{\QNorm(\norm{\tilde{x}}^2) [1 + \log(1/\norm{\tilde{x}}^2)]\}$, and we
    henceforth assume the estimate is correct. Define $\gamma\Defined\epsilon^2/(135 \,
    \hat{n}^2)$. Taking also $\BigO(\gamma^{-2}\log d)$ samples from
    $\Distribution_{\tilde{x}}$, and from Theorem~\ref{thm:smpl-estim}, one knows also
    every $\Distribution_{\tilde{x}}(i)$ to error $\gamma/8$, with overall probability
    $8/9$. [Looking up each estimate of $\Distribution_{\tilde{x}}(i)$, say, $\hat{p}_i$,
    takes time $\BigO(\log d)$, per the Theorem.]

    We also note the following preliminary observations:
    \begin{enumerate}[(a),wide,nosep] \DontBreakHereButBreakBetweenItems
        \item $\norm{\tilde{x}}/\hat{n} \leq 2$;
        \item $\abs{x(i)} \leq \abs{\tilde{x}(i)} = \norm{\tilde{x}} \sqrt{\TXDist(i)}$, for all $i\in[d]$; and
        \item $\abs{a/(\hat{a} + \epsilon) - 1} \leq 2\epsilon/a$, for all positive $a$
              and $\hat{a}$ satisfying $\abs{a - \hat{a}} \leq \epsilon$, as a direct
              consequence of this fact.
    \end{enumerate}

    Now, consider the following point estimator, if $i \sim \Distribution_{\tilde{x}}$,
    and $\hat{x}_i$ is an $\epsilon/4$ estimate of $x(i)$:
    \begin{equation}
        \widehat{x^\dagger y}(i) = \begin{cases}
            0                                               & \text{if } \hat{p}_i \leq \frac{3}{2} \gamma \\
            \frac{1}{\hat{p}_i + \gamma/8} \hat{x}_i^* y(i) & \text{otherwise.}
        \end{cases}
    \end{equation}
    Let us say that the first case corresponds to the ``rejection'' of $i$, and the second
    corresponds to the ``acceptance'' of $i$. From the error bounds on $\hat{p}_i$, for
    all rejected $i$, one has that $\Distribution_{\tilde{x}}(i) \leq 2\gamma$, and for
    all accepted $i$, $\Distribution_{\tilde{x}}(i) \geq \gamma$. This lets us bound the
    bias of the estimator as
    \begin{align}
        \abs{ \Expectation_{i \sim \TXDist} \widehat{x^\dagger y}(i) - x^\dagger y \mkern 3mu} & \leq \abs{ \mkern10mu \smashoperator{\sum_{i:{\TXDist(i) \leq 2\gamma}}} x(i)^* y(i) \mkern 5mu } + \abs{ \mkern 10mu \smashoperator{\sum_{i:{\TXDist(i) \geq \gamma}}} \mkern 5mu \frac{\TXDist(i)}{\hat{p}_i} \hat{x}_i^* y(i) - x(i)^* y(i) \mkern 5mu }                 \\
                                                                                               & \leq \sqrt{2\gamma} \norm{\tilde{x}} \norm{y}_1 + \smashoperator{\sum_{i:{\TXDist(i) \geq \gamma}}} \mkern 3mu \abs{y(i)} \{\abs{x(i)}\,\abs{\frac{\TXDist(i)}{\hat{p}_i + \gamma/8} - 1} + \abs{\hat{x}_i - x(i)}[1 + \abs{\frac{\TXDist(i)}{\hat{p}_i + \gamma/8} - 1}]\} \\
                                                                                               & \leq \frac{1}{4} \epsilon \norm{y}_1 + \smashoperator{\sum_{i:{\TXDist(i)\geq\gamma}}} \mkern 3mu \abs{y(i)} [\norm{\tilde{x}}\frac{\gamma}{4\sqrt{\TXDist(i)}} + \frac{5}{16}\epsilon]                                                                                     \\
                                                                                               & \leq \frac{2}{3} \epsilon \norm{y}_1.
    \end{align}
    The first inequality follows from the definition of the estimator, the triangle
    inequality, and possibly double counting, the second inequality from observations (b)
    and (c), and the third inequality from the definition of $\gamma$ and all of the
    preliminary observations.

    The variance is bounded as
    \begin{align} \hskip-2cm
        \Variance_{i \sim \Distribution_{\tilde{x}}} \widehat{x^\dagger y}(i) & \leq \Expectation_{i\sim\Distribution_{\tilde{x}}} \abs{\widehat{x^\dagger y}(i)}^2                                                                                                                                                                                                                          \\
                                                                              & \leq \sum_{i: {\hat{p}_i \geq \frac{3}{2}\gamma}} \frac{\Distribution_{\tilde{x}}(i)}{\hat{p}_i + \gamma/8} \frac{1}{\hat{p}_i + \gamma/8} \abs{\hat{x}_i}^2 \abs{y(i)}^2                                                                                                                                    \\
                                                                              & \leq \sum_{i: {\hat{p}_i \geq \frac{3}{2}\gamma}} \frac{1}{\Distribution_{\tilde{x}}(i)}[\abs{x(i)} + \frac{\epsilon}{4}]^2 \abs{y(i)}^2                                                                                                                                                                     \\
                                                                              & = \sum_{i: {\hat{p}_i \geq \frac{3}{2}\gamma}} \frac{\abs{x(i)}^2}{\Distribution_{\tilde{x}}(i)} \abs{y(i)}^2 + \frac{(\epsilon/4)^2}{\Distribution_{\tilde{x}}(i)}\abs{y(i)}^2 + \frac{\epsilon/2}{\sqrt{\Distribution_{\tilde{x}}(i)}} \frac{\abs{x(i)}}{\sqrt{\Distribution_{\tilde{x}}(i)}} \abs{y(i)}^2 \\
                                                                              & \leq \norm{\tilde{x}}^2 \norm{y}^2 + \frac{3}{2} 4 \hat{n} \norm{\tilde{x}} \norm{y}^2 + 2 \hat{n} \norm{\tilde{x}} \norm{y}^2                                                                                                                                                                               \\
                                                                              & \leq 10 \norm{\tilde{x}}^2 \norm{y}^2.
    \end{align}
    The first inequality follows from the definition of variance for complex random
    variables, the second inequality from a triangle inequality (and the definition of
    the estimator), and the third inequality from the observation that $\hat{p}_i \geq
    \Distribution_{\tilde{x}}(i) - \gamma/2$. The second to last inequality follows from
    the definition of $\Distribution_{\tilde{x}}(i)$ as
    $\abs{\tilde{x}(i)}^2/\norm{\tilde{x}}^2$ and the imposed condition that
    $\abs{\tilde{x}(i)} \geq \abs{x(i)}$, for all $i$, as well as the summing index
    condition and the definition of $\gamma$.

    With these bounds, it follows from the Chebyshev inequality that the average of
    $\BigO(\frac{\norm{\tilde{x}}^2 \norm{y}^2}{\epsilon^2 \norm{y}^2}) =
    \BigO(\hat{n}^2\epsilon^{-2})$ valid samples of this estimator suffice to estimate a
    value that is at most $\frac{1}{3}\epsilon\norm{y}$ away from the estimator's mean
    --- thus, at most $\epsilon\norm{y}_1$ away from $x^\dagger y$ --- with probability
    at least $1/18$.

    From a Hoeffding inequality, taking the median of $\BigO[\log(\hat{n}^2
    \epsilon^{-2})]$ independent estimates ensures that every $\hat{x}_i$ is a
    correct estimate of $x(i)$ to error $\epsilon/4$, with overall error probability
    $1/18$. From a union bound, it follows that the whole procedure estimates $x^\dagger
    y$ to error $\epsilon\norm{y}_1$ with probability $2/3$.

    To find the stated runtime, we first note that with very high probability, sampling
    $k$ valid samples from $\phi$-oversampling access to $x$ takes time
    $\BigO(k\Sample)$. This follows from a Hoeffding inequality, the one-sidedness of the
    error, and the fact that, by definition, the probability of a valid sample is at
    least $2/3$. Then it follows immediately that the total runtime of the procedure
    outlined above and in algorithm~\ref{alg:inprod-asym} is
    \begin{equation}
        \BigO[\QNorm(\norm{\tilde{x}}^2/4)(1 + \log\frac{1}{\norm{\tilde{x}}^2})] + \BigO(\Sample \gamma^{-2} \log d) + \BigO(\Sample \hat{n}^2 \epsilon^{-2}) + \BigO\{[\Query(\epsilon/4) \log(\hat{n}^2 \epsilon^{-2}) + \log d] \hat{n}^2 \epsilon^{-2}\}.
    \end{equation}

    Using the definition of $\gamma$, the fact that $\hat{n}^2 \leq
    \frac{5}{4}\norm{\tilde{x}}^2$, that $\norm{\tilde{x}}^2 \leq \phi \norm{x}^2$ by
    definition, and the assumption that constant factors can be ``moved out'' of the cost
    runtime functions, the stated runtime follows.
\end{proof}

\medglue

\begin{remark} \label{remark:inprod-asymm-approx}
    A restricted form of this lemma extends to the case where sampling is not exact in the
    distribution. In particular: if $x \in \Complex^d$ such that $\norm{x}\leq 1$, and
    $x' \in \Complex^d$ such that
    \begin{equation}
        \sum_{i\in[d]} \abs{\Distribution_{\tilde{x}}(i) - \Distribution_{x'}(i)} \leq \frac{C\epsilon}{\sqrt{\phi}}
    \end{equation}
    for $\Distribution_{\tilde{x}}$ a distributed $\phi$-oversampling $\Distribution_x$,
    and for sufficient choice of constant $C$, then $\phi$-Approximate Sample access to
    $x'$ (Def.~\ref{def:as}) and Approximate Query access to $x$ (Def.~\ref{def:aq})
    suffices to estimate $x^\dagger y$ to additive error $\epsilon\norm{y}_1$, with
    runtime
    \begin{equation}
        \tilde\BigO[\phi\epsilon^{-2}\log d + \phi\epsilon^{-2}\Query(\epsilon) + \phi^2\epsilon^{-4} \log d \Sample]
    \end{equation}
    with very high probability and under the same assumptions as those in the lemma
    above. This requires that $\phi$ or an upper bound thereof is known (whereas
    Lemma~\ref{lemma:inprod-asym} does not). The proof proceeds like that of
    Lemma~\ref{lemma:inprod-asym} with technical complications, and so is deferred to
    Appendix~\ref{app:inprod-asymm-approx}.
\end{remark}

\bigglue

While the dependence on the $1$-norm of $y$ provides for a weaker result than that of,
for example, Ref.~\cite{Tang2019}, where it is possible to attain absolute error in time
completely independent of the dimension of the underlying vectors, we will nonetheless
find that such a dependence is sufficient to prove, for example,
Theorem~\ref{thm:pauli-inprod}. On the other hand, this $1$-norm has a simple geometric
interpretation: the more $y$ is ``aligned'' with a single measurement basis, the easier
it is to estimate the inner product. At the same time, we note that there is no
dependence on the $1$-norm of $x$, such that only the $1$-norm of the classically queried
vector plays a role in increasing the runtime. This means that even if $x$ has maximal
$1$-norm (e.g., $\abs{x(i)} = d^{-1/2}\norm{x}$), the procedure is still efficient
provided that $y$ has a small $1$-norm (for example, that $y$ is known to be sufficiently
sparse, even if which elements are nonzero is unknown).

Now, we analyze the symmetric case, in which one has \textit{approximate sample and query} access to two
vectors. In light of the remarks of section~\ref{sec:satsf-asq}, this case can be taken
to translate a restricted view of the case where one knows how to prepare and measure two
states, and wishes to estimate the inner product of the underlying state vectors. We find
that if, per the above Theorem, $1$-norm reflects ``peakedness'' in the sampling, only one of the provided vectors needs to be sufficiently ``peaked'' in
the symmetric case.

\begin{figure*}
    \begin{algorithm}[H]
        \caption{Calculating an inner product estimate via $\ASQ[\phi]$ access to both
            vectors ($x$ and $y$), with success probability at least $2/3$ and absolute
            error at most $\epsilon [1 + \min(\frac{\norm{x}_1}{\norm{x}},
                    \frac{\norm{y}_1}{\norm{y}})]$.}
        \label{alg:inprod-sym}

        \algblockdefx{DoTimes}{EndDoTimes}[1]{\textbf{do} #1 \textbf{times:}}{\textbf{end}}

        \begin{varwidth}[t]{.5\textwidth}
            \begin{algorithmic}[1]
                \Require \ASQ[\phi] access to $x, y \in \Complex^d$
                \Require Precision $\epsilon \in (0, 1]$
                \Statex
                \Function{Improve}{$x$, $\epsilon$, $\delta$}
                \State $r \gets []$
                \State $c \gets []$
                \State $\epsilon \gets \epsilon/\sqrt{2}$
                \DoTimes{$\lceil 18\log(1/\delta)\rceil$}
                \State Sorted insert $\Re[\hat{x}(\epsilon)]$ into $r$
                \State Sorted insert $\Im[\hat{x}(\epsilon)]$ into $c$
                \EndDoTimes
                \State\Return $\textsc{median}(r) + \I\, \textsc{median}(c)$
                \EndFunction
                \Statex
                \Function{Sample}{}
                \State $z\gets\textsc{fair coin}$
                \If{$z$ is \textsc{heads}}
                \Repeat \; $i \gets \Call{oversample}{x}$
                \Until{sample is valid.}
                \Else
                \Repeat \; $i \gets \Call{oversample}{y}$
                \Until{sample is valid.}
                \EndIf
                \State\Return $i$
                \EndFunction
                \algstore{alg:inprod-sym}
            \end{algorithmic}
        \end{varwidth}
        \hfil
        \begin{varwidth}[t]{.5\textwidth}
            \begin{algorithmic}[1]
                \algrestore{alg:inprod-sym}
                \State $\hat{n}_x^2 \gets \Call{[algorithm~\ref{alg:rel-err-estim}]}{\norm{x}^2, \rho=\frac{1}{2}, \delta=\frac{1}{18}}$
                \State $\hat{n}_y^2 \gets \Call{[algorithm~\ref{alg:rel-err-estim}]}{\norm{y}^2, \rho=\frac{1}{2}, \delta=\frac{1}{18}}$
                \Statex
                \State $\gamma \gets \min(1, \hat{n}_x^{-2}) \min(1, \hat{n}_y^{-2}) \epsilon^2/100$
                \State $\epsilon_x \gets \epsilon \sqrt{\gamma} \min(1, \hat{n}_y^{-1}) / 100$
                \State $\epsilon_y \gets \epsilon \sqrt{\gamma} \min(1, \hat{n}_x^{-1}) / 100$
                \Statex
                \State $l \gets [\textsc{empty histogram}]$
                \DoTimes{$32 \gamma^{-2} \log(18 d)$}
                \State $i \gets \Call{sample}$
                \State Insert $i$ into $h$.
                \EndDoTimes
                \Statex
                \State $m \gets 864 \, (1 + 2\hat{n}_x^2) (1 + 2\hat{n}_y^2) \, \epsilon^{-2}$
                \State $l \gets []$
                \DoTimes{$m$}
                \State $i \gets \Call{sample}$
                \State $\hat{p}_i \gets \Call{frequency}{h; i}$
                \If{$\hat{p}_i \leq \frac{3}{2}\gamma$}
                \State $v \gets 0$
                \Else
                \State $\hat{x}_i \gets \Call{improve}{x(i), \epsilon_x, 1/(18m)}$
                \State $\hat{y}_i \gets \Call{improve}{y(i), \epsilon_y, 1/(18m)}$
                \State $v \gets (\hat{p}_i + \gamma/2)^{-1} \hat{x}_i^* \hat{y}_i$
                \EndIf
                \State Sorted insert $v$ into $l$
                \EndDoTimes
                \State \Return $\Call{median}{l}$
            \end{algorithmic}
        \end{varwidth}
        \vskip 2pt
    \end{algorithm}
\end{figure*}

\begin{lemma} \TheoremName{Symmetric inner product estimation}
    \label{lemma:inprod-sym}
    Let $x, y \in \Complex^d$, and $\epsilon \in (0, 1]$. Suppose $\ASQ[\phi]$ access to
    $x$ and $y$ is given. One may estimate $x^\dagger y$ to absolute error $\epsilon[1 +
    \min(\norm{x}_1/\norm{x}, \norm{y}_1/\norm{y})]$ with probability $2/3$ by
    algorithm~\ref{alg:inprod-sym}. Assuming, for simplicity, that $\norm{x},\norm{y}\leq
    1$, the runtime of the algorithm is
    $$ \tilde\BigO[\phi^2 \epsilon^{-2} \log d + \phi^2 \epsilon^{-2} \Query(\epsilon^2 \norm{x}\norm{y}) + \epsilon^{-4} \log d \Sample + \QNorm(\norm{x}^2) + \QNorm(\norm{y}^2)] $$
    with very high probability, and under the assumption that $\Query[\Omega(\epsilon)] =
    \BigO[\Query(\epsilon)]$, and likewise for $\QNorm$.
\end{lemma}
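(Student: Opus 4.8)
The plan is to mirror the proof of Lemma~\ref{lemma:inprod-asym}, replacing the single oversampling distribution of the asymmetric case by a \emph{balanced mixture} of the two oversampling distributions, and then to exploit the fact that this mixture simultaneously lower-bounds both $\XDist$ and $\YDist$. Concretely, I would sample an index $i$ by flipping a fair coin and drawing from the $\phi$-oversampling of $x$ or of $y$ accordingly, as in the \textsc{Sample} routine of Algorithm~\ref{alg:inprod-sym}; this realizes the distribution $q(i) = \tfrac12 \TXDist(i) + \tfrac12 \TYDist(i)$. The point estimator is the importance-sampling ratio $x(i)^* y(i)/q(i)$, which is unbiased for $x^\dagger y$ in the ideal case, and which in practice I replace by $(\hat{p}_i + \gamma/2)^{-1}\hat{x}_i^*\hat{y}_i$, where $\hat{p}_i$ estimates $q(i)$ from a histogram of $\BigO(\gamma^{-2}\log d)$ mixture samples (via Theorem~\ref{thm:smpl-estim}), $\hat{x}_i,\hat{y}_i$ are AQ estimates of the entries, $\gamma$ is a rejection threshold, and small entries ($\hat p_i \le \tfrac32\gamma$) are zeroed.

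The conceptual crux --- and the reason the mixture is used rather than a single distribution --- is that $q(i) \ge \tfrac12\TXDist(i) \ge \tfrac12\abs{x(i)}^2/\norm{\tilde x}^2$ and symmetrically $q(i)\ge \tfrac12\abs{y(i)}^2/\norm{\tilde y}^2$; multiplying these gives $q(i)\ge \tfrac12\abs{x(i)}\abs{y(i)}/(\norm{\tilde x}\norm{\tilde y})$, so every individual point estimate is bounded in magnitude by $2\norm{\tilde x}\norm{\tilde y}\le 2\phi\norm{x}\norm{y}$. This boundedness (absent in the asymmetric case, where one factor was an exact but unbounded query) is what lets $\BigO(\hat n_x^2\hat n_y^2\epsilon^{-2})$ samples concentrate the mean to additive error $\BigO(\epsilon)$, supplying the leading ``$1$'' in the error budget; I would obtain this through a Chebyshev/Hoeffding bound on the sample mean (using the variance and range estimates) and then a median over batches to reach success probability $2/3$.

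For the bias I would split the deviation of the estimator's mean from $x^\dagger y$ into a \emph{rejection} part and an \emph{estimation} part, exactly as in Lemma~\ref{lemma:inprod-asym}. The rejection part is where the $\min(\norm{x}_1/\norm{x},\norm{y}_1/\norm{y})$ factor is born: a rejected index has $q(i)\lesssim\gamma$, hence both $\abs{x(i)}\lesssim\sqrt{\gamma}\,\norm{\tilde x}$ and $\abs{y(i)}\lesssim\sqrt{\gamma}\,\norm{\tilde y}$, so $\sum_{\text{rej}}\abs{x(i)}\abs{y(i)}$ is at most $\sqrt{\gamma}\,\norm{\tilde x}\norm{y}_1$ and also at most $\sqrt{\gamma}\,\norm{\tilde y}\norm{x}_1$; taking whichever bound is smaller yields the $\min$ of the two $1$-norm ratios. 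With $\gamma$ chosen as in the algorithm (so that $\sqrt{\gamma\phi}\,\norm{x}\norm{y}=\BigO(\epsilon)$), this contributes the $\epsilon\min(\norm{x}_1/\norm{x},\norm{y}_1/\norm{y})$ term. The estimation part requires tracking the now-\emph{two} approximate factors: writing $\hat x_i^*\hat y_i - x(i)^* y(i) = (\hat x_i - x(i))^*\hat y_i + x(i)^*(\hat y_i - y(i))$ and dividing by $\hat p_i+\gamma/2\gtrsim\gamma$, I would show that the query precisions $\epsilon_x,\epsilon_y$ of Algorithm~\ref{alg:inprod-sym} keep this contribution $\BigO(\epsilon)$, reusing the bounds $\abs{x(i)}\le\norm{\tilde x}\sqrt{\TXDist(i)}$ and the regularized-denominator estimate $\abs{a/(\hat a+\gamma/2)-1}\le\BigO(\gamma/a)$.

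The main obstacle I anticipate is precisely this estimation part of the bias: unlike the asymmetric case, both numerator factors are noisy, producing cross terms whose magnitude must be controlled against the regularized denominator while the failure probabilities of the $\BigO(m)$ per-sample \textsc{improve} calls are union-bounded. Getting the interlocking choices of $\gamma$, $\epsilon_x$, and $\epsilon_y$ to simultaneously keep the rejection bias, the numerator-error bias, and the variance all at $\BigO(\epsilon\,[1+\min(\dots)])$ is the delicate bookkeeping; everything else (the norm pre-estimates via Lemma~\ref{lemma:rel-err}, the histogram cost $\BigO(\gamma^{-2}\log d)\,\Sample$, the $\BigO(\log d)$ lookups, and the query cost $\BigO(m)\,\Query(\cdot)$) then assembles into the stated runtime in direct analogy with Lemma~\ref{lemma:inprod-asym}.
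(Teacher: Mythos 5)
Your proposal follows essentially the same route as the paper's proof: the fair-coin mixture $\Distribution = \tfrac12(\Distribution_{\tilde{x}}+\Distribution_{\tilde{y}})$, the histogram-based rejection threshold, the same point estimator $(\hat{p}_i+\gamma/2)^{-1}\hat{x}_i^*\hat{y}_i$, the same rejection-versus-estimation bias split yielding the $\min(\norm{x}_1/\norm{x},\norm{y}_1/\norm{y})$ term, and a Chebyshev-based concentration with $\BigO[\epsilon^{-2}(1+\hat{n}_x^2)(1+\hat{n}_y^2)]$ samples. The only cosmetic difference is that you phrase the second-moment control as a per-sample range bound $\abs{x(i)y(i)}/\Distribution(i)\leq 2\norm{\tilde{x}}\norm{\tilde{y}}$ where the paper bounds the variance directly via a Cauchy--Schwarz argument, which amounts to the same estimate.
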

\begin{proof}
    The proof proceeds similarly to that of Lemma~\ref{lemma:inprod-asym}.

    Denote the oversampling distribution of $\ASQ[\phi](x)$ by
    $\Distribution_{\tilde{x}}$ and that of $\ASQ[\phi](y)$ by
    $\Distribution_{\tilde{y}}$, for ``oversampling vectors'' $\tilde{x}$ and $\tilde{y}$
    for $x$ and $y$, respectively. Let $\Distribution$ be the distribution obtained by
    uniformly randomly outputting from either $\Distribution_{\tilde{x}}$ or
    $\Distribution_{\tilde{y}}$, from which we have the probability mass function
    $\Distribution(i) = [\Distribution_{\tilde{x}}(i) + \Distribution_{\tilde{y}}(i)]/2$,
    for all $i\in[d]$. Note that $\Distribution$ is a $2$-oversampling distribution for
    both $\Distribution_{\tilde{x}}$ and $\Distribution_{\tilde{y}}$. Thus also, for all
    $i\in[d]$, $\TXDist(i) = \abs{\tilde{x}(i)}^2/\norm{\tilde{x}}^2 \leq
    2\Distribution(i)$, implying $\abs{x(i)} \leq \abs{\tilde{x}(i)} \leq
    \sqrt{2\Distribution(i)}\norm{\tilde{x}}$. The analogous statements for $y$ also
    hold.

    Let $\hat{n}_x^2$ and $\hat{n}_y^2$ be relative error $1/2$ estimates of
    $\norm{\tilde{x}}^2$ and $\norm{\tilde{y}}^2$, respectively, with joint probability
    $8/9$. These are obtained by algorithm~\ref{alg:rel-err-estim} and the $\ASQ[\phi]$
    access, and so with high probability in time $\BigO\{\QNorm(\norm{\tilde{x}^2})[1 +
    \log(1/\norm{\tilde{x}}^2)]\}$ for $\hat{n}_x^2$, and
    $\BigO\{\QNorm(\norm{\tilde{y}^2})[1 + \log(1/\norm{\tilde{y}}^2)]\}$ for
    $\hat{n}_y^2$. Note that, with these choices, $1/2 \norm{\tilde{x}}^2 \leq
    \hat{n}_x^2 \leq 3/2 \norm{\tilde{x}}^2$, from which follows
    $\hat{n}_x/\norm{\tilde{x}} < \sqrt{2}$, and likewise for $\hat{n}_y, y$.

    Define $\gamma \Defined \Theta[\min(1, \hat{n}_x^2) \min(1, \hat{n}_y^2)
    \epsilon^2]$. From Theorem~\ref{thm:smpl-estim}, $\BigO(\gamma^{-2} \log d)$ samples
    of $\Distribution$ are sufficient to know, for every $i=1,\ldots,d$,
    $\Distribution(i)$ to error $\gamma/2$ and with overall probability $8/9$. [Looking
    up each estimate of $\Distribution(i)$, say $\hat{p}_i$, takes time $\BigO(\log d)$,
    per the Theorem.]

    Furthermore, for $i \sim \Distribution$, consider $\hat{x}_i$ and $\hat{y}_i$ to be
    estimates of, respectively, $x(i)$ and $y(i)$, to absolute error, respectively,
    $\Theta[\sqrt{\gamma} \epsilon / \max(1,\hat{n}_y)]$ and $\Theta[\sqrt{\gamma}
    \epsilon / \max(1,\hat{n}_x)]$. With these, we define the point estimator
    \begin{equation}
        \widehat{x^\dagger y}(i) = \begin{cases}
            0                                                    & \text{if } \hat{p}_i \leq \frac{3}{2} \gamma \\
            \frac{1}{\hat{p}_i + \gamma/2} \hat{x}_i^* \hat{y}_i & \text{otherwise.}
        \end{cases}
    \end{equation}

    This estimator has bias (with respect to $x^\dagger y$) bounded as (from a triangle
    inequality and double counting)
    \begin{align}
        \abs{\Expectation_{i\sim\Distribution} \widehat{x^\dagger y} - x^\dagger y} & \leq \sum_{i:\Distribution(i)\leq2\gamma} \abs{x(i)}\abs{y(i)} \mkern 5mu + \mkern -5mu \sum_{i:\Distribution(i)\geq\gamma} \abs{\frac{\Distribution(i)}{\hat{p}_i + \gamma/2} \hat{x}_i^* \hat{y}_i - x(i)^* y(i)}.
    \end{align}
    The first term is further bounded as
    \begin{multline}
        \sum_{i:\Distribution(i) \leq 2\gamma} \abs{x(i)} \abs{y(i)} \leq 2\sqrt{\gamma} \min(\norm{\tilde{x}}\abs{y(i)}, \abs{x(i)}\norm{\tilde{y}})
        = \frac{1}{5} \frac{\norm{\tilde{x}}\norm{\tilde{y}}}{\hat{n}_x \hat{n}_y} \epsilon \min(\frac{\norm{x}_1}{\norm{x}}, \frac{\norm{y}_1}{\norm{y}}) \\
        \leq \frac{2}{5} \epsilon \min(\frac{\norm{x}_1}{\norm{x}}, \frac{\norm{y}_1}{\norm{y}})
    \end{multline}
    where we have used $\abs{x(i)} \leq \sqrt{2\Distribution(i)}$ and the condition on
    the summing index. The second term is further bounded as
    \begin{multline}
        \sum_{i:\Distribution(i)\geq\gamma} \abs{\frac{\Distribution(i)}{\hat{p}_i + \gamma/2} \hat{x}_i^* \hat{y}_i - x(i)^* y(i)} \\ \leq \sum_{i:\Distribution(i)\geq\gamma} \abs{\frac{\Distribution(i)}{\hat{p}_i + \gamma/2} - 1} \abs{x(i)}\abs{y(i)} + \abs{\hat{x}_i^* \hat{y}_i - x(i)^* y(i)} + \abs{\frac{\Distribution(i)}{\hat{p}_i + \gamma/2} - 1}\abs{\hat{x}_i^* \hat{y}_i - x(i)^* y(i)}.
    \end{multline}
    which can then be further bounded term-by-term:
    \begin{flalign}
        \sum_{i:\Distribution(i)\geq\gamma} \abs{\frac{\Distribution(i)}{\hat{p}_i + \gamma/2}} \abs{x(i)}\abs{y(i)} & \leq \sum_{i:\Distribution(i)\geq\gamma} \frac{\gamma}{\Distribution(i)}\abs{x(i)}\abs{y(i)}                                                                                                                                                                   & \\
                                                                                                                     & \leq \frac{\sqrt{2}}{10} \frac{\norm{\tilde{x}}\norm{\tilde{y}}}{\hat{n}_x \hat{n}_y} \epsilon \min(\frac{\norm{x}_1}{\norm{x}}, \frac{\norm{y}_1}{\norm{y}}) \leq \frac{\sqrt{2}}{5} \epsilon \min(\frac{\norm{x}_1}{\norm{x}}, \frac{\norm{y}_1}{\norm{y}}); &
    \end{flalign}
    \begin{flalign}
        \sum_{i:\Distribution(i)\geq\gamma} \abs{\hat{x}_i^* \hat{y}_i - x(i)^* y(i)} & \leq \sum_{i:\Distribution(i)\geq\gamma} \abs{\hat{x}_i - x(i)}\abs{\hat{y}_i - y(i)} + \abs{x(i)} \abs{\hat{y}_i - y(i)} + \abs{\hat{x}_i - x(i)}\abs{y(i)}                                                                                              & \\
                                                                                      & \leq \sum_{i:\Distribution(i)\geq\gamma} (\frac{1}{100}\epsilon)^2 \gamma + \frac{\sqrt{2}}{100} \sqrt{\Distribution(i) \gamma} \, [\frac{\norm{\tilde{x}}}{\max(1, \hat{n}_x)} + \frac{\norm{\tilde{y}}}{\max(1, \hat{n}_y)}] \leq \frac{1}{4} \epsilon; &
    \end{flalign}
    \begin{multline}
        \sum_{i:\Distribution(i)\geq\gamma} \abs{\frac{\Distribution(i)}{\hat{p}_i + \gamma/2} - 1}\abs{\hat{x}_i^* \hat{y}_i - x(i)^* y(i)} \leq\sum_{i:\Distribution(i)\geq\gamma} \frac{\gamma}{\Distribution(i)}\abs{\hat{x}_i^* \hat{y}_i - x(i)^* y(i)} \\
                                                                                                                                             \leq \sum_{i:\Distribution(i)\geq\gamma} \abs{\hat{x}_i^* \hat{y}_i - x(i)^* y(i)} \mkern-6mu\mkern12mu\leq \frac{1}{4}\epsilon.
    \end{multline}
    Above, we've used the aforementioned fact that $\abs{\hat{a} - a} \leq \epsilon$
    implies $\abs{a/\hat{a} - 1} \leq 2\epsilon/a$, the relation between $\abs{x(i)}$ and
    $\Distribution(i)$ (and between $\abs{y(i)}$ and $\Distribution(i)$), the condition
    on the summing index (to upper bound $\gamma$ by $\Distribution(i)$ where necessary),
    and the various error bounds.

    Thus we conclude that the bias of the estimator is bounded as $\BigO\{ \epsilon[1 +
    \min(\frac{\norm{x}_1}{\norm{x}}, \frac{\norm{y}_1}{\norm{y}})] \}$.

    Likewise, we bound the variance of the estimator as
    \begin{align}
        \Variance_{i\sim\Distribution} \widehat{x^\dagger y} & \leq \Expectation_{i\sim\Distribution} \abs{\widehat{x^\dagger y}}^2 \leq \sum_{i:\Distribution(i)\geq\gamma} \frac{1}{\Distribution(i)} [\abs{x(i)} + \frac{1}{10} \gamma \max(1, \hat{n}_x)]^2 [\abs{y(i)} + \frac{1}{10} \gamma \max(1, \hat{n}_y)]^2 \\
                                                             & \leq \sum_{i:\Distribution(i)\geq\gamma} \frac{4}{\Distribution(i)} [\abs{x(i)}^2 + \frac{1}{100} \gamma^2 \max(1, \hat{n}_x^2)] [\abs{y(i)}^2 + \frac{1}{100} \gamma^2 \max(1, \hat{n}_y^2)]                                                            \\
                                                             & \leq 10 (1 + \norm{\tilde{x}}^2)(1 + \norm{\tilde{y}}^2).
    \end{align}
    The first inequality uses the observations that the error bounds for $\hat{x}_i$ may
    be written as $\gamma \max(1, \hat{n}_x)/10$ (similarly for $\hat{y}_i$), and that
    the error bounds on $\hat{p}_i$ imply that $(\hat{p}_i + \gamma/2)^{-1} \leq
    \Distribution(i)^{-1}$. The second inequality uses a geometric mean--quadratic mean
    inequality. The third inequality uses a Cauchy-Schwarz inequality, the summing
    index's condition, and the inequality $\norm{x}^2 \leq \norm{\tilde{x}}^2$ (likewise
    for $y$).

    Then it follows from a Chebyshev bound that $\BigO[\epsilon^{-2} (1 +
    \norm{\tilde{x}}^2) (1 + \norm{\tilde{y}}^2)] = \BigO[\epsilon^{-2} (1 + \hat{n}_x^2)
    (1 + \hat{n}_y^2)]$ valid samples have their average at most $\epsilon/2$ away from
    the estimator's mean --- thus are at most $\epsilon [1 +
    \max(\frac{\norm{x}_1}{\norm{x}}, \frac{\norm{y}_1}{\norm{y}})]$ away from $x^\dagger
    y$ --- with probability $1/18$. From a Hoeffding bound, the error probability of each
    sample of $\hat{x}_i$ and $\hat{y}_i$ can be reduced to ensure that every sample is a
    correct estimate of $x(i)$, $y(i)$, respectively, with overall probability $1/18$.
    From a union bound, it follows that the final estimation is correct with probability
    $2/3$.

    From the outlined procedure, and explicitly from algorithm~\ref{alg:inprod-sym}, one
    finds that the runtime is, with high probability,
    \begin{multline}
        \BigO[\QNorm(\norm{\tilde{x}}^2/2)(1 + \log\frac{1}{\norm{\tilde{x}}^2})] + \BigO[\QNorm(\norm{\tilde{y}}^2/2)(1 + \log\frac{1}{\norm{\tilde{y}}^2})] \\+ \BigO(\Sample \gamma^{-2} \log d) + \BigO[\epsilon^{-2} (1 + \norm{\tilde{x}}^2)(1 + \norm{\tilde{y}}^2) \log d] \\+ \BigO\{\epsilon^{-2} (1 + \norm{\tilde{x}}^2)(1 + \norm{\tilde{y}}^2) \log[\epsilon^{-2}(1 + \norm{\tilde{x}}^2)(1 + \norm{\tilde{y}}^2)] \Query[\epsilon^2 \min(1, \norm{\tilde{x}}) \min(1, \norm{\tilde{y}})]\}.
    \end{multline}

    The stated complexity follows.
\end{proof}

\medglue

\begin{remark} \label{remark:inprod-sym-approx}
    A restricted form of this lemma extends to the case where sampling is not exact in
    either distribution. In particular: let $\phi \geq 1$, $\epsilon \in (0, 1]$, and $x,
        y \in \Complex^d$, such that $\norm{x},\norm{y} \leq 1$. Suppose $x', y' \in
        \Complex^d$ satisfy
    \begin{alignat}{2}
        \sum_{i\in[d]} \abs{\Distribution_{\tilde{x}}(i) - \Distribution_{x'}(i)} & \leq \frac{C\epsilon}{\phi}, & \qquad  \sum_{i\in[d]} \abs{\Distribution_{\tilde{y}}(i) - \Distribution_{y'}(i)} & \leq \frac{C\epsilon}{\phi}
    \end{alignat}
    where $\Distribution_{\tilde{x}}$ is a distribution $\phi$-oversampling
    $\Distribution_x$, $\Distribution_{\tilde{y}}$ is a distribution $\phi$-oversampling
    $\Distribution_y$, and $C \leq 1$ is a suitably chosen constant. Then, Approximate
    Sample access to $x'$ and $y'$ and Approximate Query access to $x$ and $y$ suffices
    to estimate $x^\dagger y$ to additive error $\epsilon[1 + \min(\norm{x}_1,
    \norm{y}_1)]$ and success probability $2/3$, halting with high probability in time
    $$\BigO[\phi^2 \epsilon^{-2} \log d + \phi^2 \epsilon^{-2} \Query(\norm{x}\norm{y}\epsilon^2) + \phi^2 \epsilon^{-4} \log d \Sample].$$

    This requires that $\phi$ or an upper bound to $\phi$ is known (whereas the above
    lemma does not). The proof proceeds similarly to that of
    Lemma~\ref{lemma:inprod-sym}, and so is deferred to
    Appendix~\ref{app:inprod-sym-approx}.
\end{remark}

\bigglue

The lemmas above have assumed the generality of complex vectors and oversampling access,
such that if $\kappa$ is an upper bound to $\min(\norm{x}_1, \norm{y}_1)$ is known, and
one wishes to attain absolute error on the estimate of $x^\dagger y$, the runtime scales
with at least $\kappa^4$. If one is willing to restrict the considered setting, and if
querying for entries of $x$ and $y$ is not much more expensive than sampling, then this
dependence on $\kappa$ can be improved:

\begin{lemma} \label{lemma:inprod-real}
    Let $x, y \in \Real^d$ such that $\norm{x} = \norm{y} = 1$. Choose $\epsilon \in
        (0,1]$ and $\Delta \in (0, 1]$. Assume Approximate Query access to $x$ and $y$ and
    Approximate Sample access to $x'$ and $y'$, where $\norm{x - x'} \leq \Delta/2$ and
    $\norm{y - y'} \leq \Delta/2$. Let $\kappa$ be a known upper bound to
    $\min(\norm{x}_1, \norm{y}_1)$. Then, $x^T y$ may be estimated to an absolute error of
    $\epsilon + \Delta$ with probability $2/3$ in time
    $$ \tilde\BigO[\epsilon^{-2} \Sample + \Query(\epsilon^2 \kappa^{-1})]. $$
\end{lemma}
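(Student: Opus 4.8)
The plan is to dequantize the inner product with a \emph{single} importance-sampling estimator, exploiting that for real normalized vectors the natural estimator already has unit variance, so that only the query precision --- and not the sample count --- need carry the dependence on $\kappa$. Concretely, I would assume without loss of generality that $\norm{y}_1 \le \kappa$ (otherwise swap the roles of $x$ and $y$, which is legitimate since we have $\ASQ$ access to both), and importance-sample an index $i$ from $\Distribution_x$, approximately, from $\Distribution_{x'}$. Since $\norm{x} = 1$ we have $\Distribution_x(i) = x(i)^2$, the norm is known exactly (so $\QNorm$ is never invoked), and the idealized point estimator $g(i) = y(i)/x(i)$ satisfies $\Expectation_{i\sim\Distribution_x} g(i) = \sum_i x(i)y(i) = x^T y$ with second moment $\Expectation_{i\sim\Distribution_x} g(i)^2 = \sum_i y(i)^2 = \norm{y}^2 = 1$. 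Thus its variance is at most $1$, and a Chebyshev bound will give an $\epsilon$-accurate average from only $\BigO(\epsilon^{-2})$ samples, with a median-of-means step boosting the confidence to $2/3$ (the logarithmic overhead absorbed into $\tilde\BigO$). This is where the improved $\kappa$-dependence originates.

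The estimator $g$ cannot be computed exactly: I must threshold away small denominators and tolerate finite query precision. I would introduce a threshold $\theta = \Theta(\epsilon/\kappa)$, replacing $g(i)$ by $0$ whenever $\abs{\hat x_i} < \theta$. Rejecting these indices biases the estimate by at most $\sum_{i:\abs{x(i)}\le\theta} \abs{x(i)}\abs{y(i)} \le \theta\norm{y}_1 \le \theta\kappa = \BigO(\epsilon)$. For the surviving indices, querying $x(i)$ and $y(i)$ to absolute precision $\eta = \Theta(\epsilon^2/\kappa) = \Theta(\epsilon\theta)$ --- equivalently, to \emph{relative} precision $\epsilon$ on accepted entries, since there $\abs{x(i)}\ge\theta$ --- controls the estimation error: bounding each sampled term $\hat x_i^{-1}\hat y_i - y(i)/x(i)$ by $(\eta/\theta)[1 + \abs{y(i)}/\abs{x(i)}]$, the induced bias is at most $\sum_i x(i)^2(\eta/\theta)[1 + \abs{y(i)}/\abs{x(i)}] \le (\eta/\theta)[1 + \sum_i \abs{x(i)}\abs{y(i)}] \le 2\eta/\theta = \BigO(\epsilon)$, the two summands being bounded by $\sum_i x(i)^2 \le 1$ and, via Cauchy--Schwarz, $\sum_i\abs{x(i)}\abs{y(i)}\le\norm{x}\norm{y}=1$. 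The same estimates show the thresholded, finite-precision estimator still has $\BigO(1)$ variance, so the sample count is unaffected; it is precisely this relative-precision requirement that forces $\Query$ to be evaluated at $\epsilon^2\kappa^{-1}$.

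The remaining, and most delicate, ingredient is that I sample from $\Distribution_{x'}$ rather than $\Distribution_x$; this is what produces the additive $\Delta$ term. Here I would \emph{not} bound crudely by total variation (Lemma~\ref{lemma:approx-tvd}) against the supremum of the estimator, since the thresholded estimator is only $\BigO(\kappa/\epsilon)$-bounded and would yield a useless $\BigO(\kappa\Delta/\epsilon)$. Instead I would expand $\Distribution_{x'}(i) - \Distribution_x(i) = x'(i)^2/\norm{x'}^2 - x(i)^2$ and pair it with $y(i)/x(i)$, collecting a term $2\,(x'-x)^T y$, a normalization term proportional to $(\norm{x'}^2-1)\,x^T y$, and a quadratic remainder $\sum_i (x'(i)-x(i))^2 y(i)/x(i)$; the first two are $\BigO(\Delta)$ by Cauchy--Schwarz (using $\norm{x-x'}\le\Delta/2$) and by $\norm{x'}^2 = 1 + \BigO(\Delta)$, while the remainder is tamed by the same threshold $\theta$. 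Controlling this last $1/x(i)$ term against arbitrarily small entries is the main obstacle, and is the reason $\theta$ must be chosen compatibly with both $\kappa$ and $\Delta$; this mirrors the approximate-sampling analysis of Remark~\ref{remark:inprod-asymm-approx}.

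With all three error sources bounded by $\BigO(\epsilon) + \BigO(\Delta)$ in bias and $\BigO(1)$ in variance, assembling the Chebyshev estimate, the median-of-means confidence boost, and a union bound over the per-entry query successes yields total error $\epsilon + \Delta$ at confidence $2/3$. The runtime then accounts as $\BigO(\epsilon^{-2})$ samples at cost $\Sample$ each --- using the one-sidedness of the sampling failure and a Hoeffding bound so that valid samples are produced in $\tilde\BigO(\epsilon^{-2})$ attempts --- together with the associated queries at precision $\epsilon^2\kappa^{-1}$, giving the stated $\tilde\BigO[\epsilon^{-2}\Sample + \Query(\epsilon^2\kappa^{-1})]$.
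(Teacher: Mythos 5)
Your overall architecture --- importance sampling, a threshold at $\Theta(\epsilon/\kappa)$ whose rejection bias is paid against the $1$-norm, query precision $\Theta(\epsilon^2\kappa^{-1})$ understood as relative precision on accepted entries, unit variance from $\norm{y}=1$, and $\BigO(\epsilon^{-2})$ samples via Chebyshev plus a median boost --- matches the paper's. But your route is genuinely different in the one place that matters, and the difference opens a gap you flag but do not close. You sample only from $\Distribution_{x'}$ and use the asymmetric estimator $\hat y_i/\hat x_i$; the paper instead samples from the \emph{mixture} $\AltDistribution = (\Distribution_{x'}+\Distribution_{y'})/2$ and uses the self-normalized symmetric estimator $\hat x_i\hat y_i/\hat p_i$ with $\hat p_i = (\hat x_i^2+\hat y_i^2)/2$ computed \emph{from the same queried values} rather than from empirical frequencies. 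By AM--GM this estimator is uniformly bounded by $1$, so the variance bound and the perturbation from sampling $\AltDistribution$ instead of $\Distribution$ both come for free: the latter is just the $\ell_1$ distance (at most $2\Delta$ by Lemma~\ref{lemma:approx-tvd}) times a sup of $1$, giving the additive $\Delta$ with no interaction with $\kappa$ or $\epsilon$. The query error is then handled by the Lipschitz continuity of $f(a,b)=ab/(a^2+b^2)$ on the accepted region.

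Your estimator, by contrast, is only $\BigO(\kappa/\epsilon)$-bounded after thresholding, and your proposed expansion of $\Distribution_{x'}(i)-\Distribution_x(i)$ leaves the quadratic remainder
\begin{equation*}
\sum_{i \text{ accepted}} \bigl(x'(i)-x(i)\bigr)^2\,\frac{\abs{y(i)}}{\abs{x(i)}} \;\leq\; \frac{\norm{x'-x}^2}{\theta} \;=\; \Theta\!\Bigl(\frac{\Delta^2\kappa}{\epsilon}\Bigr),
\end{equation*}
which exceeds $\BigO(\epsilon+\Delta)$ whenever $\Delta \gg \epsilon/\kappa$; no admissible choice of $\theta$ fixes this, since enlarging $\theta$ inflates the rejection bias $\theta\norm{y}_1$ past $\BigO(\epsilon+\Delta)$ in turn. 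The second moment under $\Distribution_{x'}$ has the same defect: the shift $\sum_i[\Distribution_{x'}(i)-\Distribution_x(i)]\,(y(i)/x(i))^2$ can be as large as $\Theta(\Delta\kappa^2/\epsilon^2)$, so the claimed $\BigO(1)$ variance is only established under the ideal distribution $\Distribution_x$. This is precisely the obstacle you name as ``the main obstacle''; naming it is not resolving it, and the resolution the paper uses is structural (make the estimator bounded by construction) rather than a finer choice of $\theta$. To repair your proof you would either need to adopt the paper's self-normalized symmetric estimator, or impose the additional assumption $\Delta = \BigO(\epsilon/\kappa)$, which the lemma does not make.
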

\begin{proof}
    Observe that, for fixed $i \in [d]$, at cost $\tilde\BigO[\Query(\epsilon)]$, both
    $x(i)$ and $y(i)$ can be known, with high probability, to additive error $\epsilon$.
    Define $\Distribution$ as that resulting from uniformly randomly sampling from either
    $\Distribution_x$ or $\Distribution_y$. If $\hat{x}_i$, $\hat{y}_i$ are the
    estimates for, respectively, $x(i)$ and $y(i)$, then by a geometric argument one has
    that
    \begin{equation} \label{eq:sqrt-p-estim}
        \sqrt{\hat{p}_i} \Defined \sqrt{\frac{\hat{x}_i^2 + \hat{y}_i^2}{2}} \implies \abs{\sqrt{\hat{p}_i} - \sqrt{\Distribution(i)}} \leq \epsilon.
    \end{equation}

    Define, similarly, $\AltDistribution$ as the distribution corresponding to uniformly
    randomly sampling from either $\Distribution_{x'}$ or $\Distribution_{y'}$. From
    Lemma~\ref{lemma:approx-tvd}, one has that
    \begin{equation}
        \sum_{i\in[d]} \abs{\AltDistribution(i) - \Distribution(i)} \leq 2\Delta.
    \end{equation}

    Let $C_1$ and $C_2$ be suitably chosen constants, and define $\gamma \Defined
    C_1\epsilon/\kappa$ and $\epsilon_Q \Defined C_2\epsilon^2/\kappa$. By
    eq.~\eqref{eq:sqrt-p-estim}, take $\sqrt{\hat{p}_i}$ and $\hat{p}_i$ be estimates of,
    respectively, $\sqrt{\Distribution(i)}$ and $\Distribution(i)$ to error $\gamma/2$.
    Furthermore let $\hat{x}_i$ and $\hat{y}_i$ be estimates of, respectively, $x(i)$ and
    $y(i)$ to error $\epsilon_Q$. Obtaining these estimates (with high probability) takes
    time $\BigO[\Query(\gamma) + \Query(\epsilon_Q)]$, by the Approximate Query access.

    Now, take a point estimator like that of lemmas~\ref{lemma:inprod-asym} and
    \ref{lemma:inprod-sym}, but ``rejecting'' samples based on the value of
    $\sqrt{\hat{p}_i}$:
    \begin{equation}
        \widehat{x^T y}(i) \leq \begin{cases}
            0                                     & \text{if } \sqrt{\hat{p}_i} \leq \frac{3}{2} \gamma \\
            \frac{\hat{x}_i \hat{y}_i}{\hat{p}_i} & \text{otherwise.}
        \end{cases}
    \end{equation}

    We start by bounding the bias of this estimator under sampling of $\Distribution$,
    rather than under sampling of $\AltDistribution$:
    \begin{equation} \label{eq:bias-under-p}
        \abs{\Expectation_{i\sim\Distribution} \widehat{x^T y}(i) - x^T y} \mkern 10mu \leq \mkern -10mu \sum_{i: \Distribution(i)^{1/2}\leq 2\gamma} \abs{x(i)} \abs{y(i)} \mkern 8mu + \mkern -8mu \sum_{i: \Distribution(i)^{1/2}\geq\gamma} \abs{\Distribution(i) \frac{\hat{x}_i \hat{y}_i}{\frac{\hat{x}_i^2 + \hat{y}_i^2}{2}} - x(i)y(i)}.
    \end{equation}
    Since, from its definition, $2\Distribution(i) \geq x(i)^2$ and $2\Distribution(i)
    \geq y(i)^2$, and from the condition in the summation, it is straightforward to see
    that
    \begin{equation}
        \sum_{i: \Distribution(i)^{1/2}\leq 2\gamma} \abs{x(i)} \abs{y(i)} \leq 2\sqrt{2} \gamma \min(\norm{x}_1, \norm{y}_1) = 2\sqrt{2} C_1 \epsilon.
    \end{equation}
    To bound the right-hand side term of eq.~\eqref{eq:bias-under-p}, we use the fact
    that $\hat{p}_i$ is always ``consistent'' with $\hat{x}_i$ and $\hat{y}_i$, and that
    the sampling is ``exact'' (in the sense that we are not oversampling). Namely, we
    have that
    \begin{align}
        \sum_{i: \Distribution(i)^{1/2}\geq\gamma} \abs{\Distribution(i) \frac{\hat{x}_i \hat{y}_i}{\frac{\hat{x}_i^2 + \hat{y}_i^2}{2}} - x(i)y(i)} \mkern 5mu = \mkern 5mu \smashoperator{\sum_{i: \Distribution(i)^{1/2}\geq\gamma}} \mkern 5mu 2\Distribution(i) \, \abs{f(\hat{x}_i, \hat{y}_i) - f[x(i), y(i)]},
    \end{align}
    where $f(a,b) \Defined \frac{ab}{a^2 + b^2}$. On the other hand, for $\sqrt{a^2 +
    b^2} \geq \sqrt{2}\gamma$, $f$ is Lipchitz continuous, with constant
    $(\sqrt{2}\gamma)^{-1}$, as may be seen:
    \begin{gather}
        \abs*{\frac{\partial f}{\partial a}}(a,b) = \abs{a} \frac{\abs{a^2 - b^2}}{(a^2 + b^2)^2} \leq \frac{\abs{a}}{a^2 + b^2}, \quad
        \abs*{\frac{\partial f}{\partial b}}(a,b) = \abs{b} \frac{\abs{a^2 - b^2}}{(a^2 + b^2)^2} \leq \frac{\abs{b}}{a^2 + b^2}; \\
        \implies \norm{\nabla f} \leq \frac{\sqrt{a^2 + b^2}}{a^2 + b^2} \leq \frac{1}{\sqrt{2}\gamma}.
    \end{gather}

    Moreover, $\norm{(\hat{x}_i, \hat{y}_i) - (x(i), y(i))} \leq \sqrt{2}\epsilon_Q$.
    Thus it follows that
    \begin{align}
        \sum_{i: \Distribution(i)^{1/2}\geq\gamma} \abs{\Distribution(i) \frac{\hat{x}_i \hat{y}_i}{\frac{\hat{x}_i^2 + \hat{y}_i^2}{2}} - x(i)y(i)} \mkern 5mu \leq \mkern -5mu \sum_{i: \Distribution(i)^{1/2}\geq\gamma} \mkern -5mu 2\Distribution(i) \frac{1}{\sqrt{2}\gamma} \sqrt{2}\epsilon_Q \leq 2\frac{C_2}{C_1} \epsilon.
    \end{align}

    The variance of the estimator under $\Distribution$ is also easily bounded by a
    geometric mean--arithmetic mean inequality:
    \begin{align} \label{eq:var-under-p}
        \Variance_{i \sim \Distribution} \widehat{x^T y}(i) \leq \Expectation_{i\sim\Distribution} \widehat{x^T y}(i)^2 \mkern 5mu = \mkern -13mu \sum_{i: \Distribution(i)^{1/2}\geq\gamma} \mkern -5mu \Distribution(i) \Big( \frac{\hat{x}_i \hat{y}_i}{\frac{\hat{x}_i^2 + \hat{y}_i^2}{2}} \Big)^2 \mkern 5mu \leq \mkern -10mu \sum_{i: \Distribution(i)^{1/2}\geq\gamma} \mkern -8mu \Distribution(i) \leq 1.
    \end{align}

    Now we show that because $\Distribution(i)$ is not estimated by sampling, sampling
    from $\AltDistribution$ does not affect the analysis substantially: the variance is
    still bounded by $1$ by exactly the approach of eq.~\eqref{eq:var-under-p}, and the
    bias is now bounded as
    \begin{equation}
        \abs{\Expectation_{i\sim\AltDistribution} \widehat{x^T y}(i) - x^T y} \leq \abs{\Expectation_{i\sim\AltDistribution} \widehat{x^T y}(i) - \Expectation_{i\sim\Distribution} \widehat{x^T y}(i)} + (2 \sqrt{2}C_1 + 2\frac{C_2}{C_1}) \epsilon \leq (4\sqrt{2} C_1 + 2\frac{C_2}{C_1})\epsilon + \Delta,
    \end{equation}
    since, using again a geometric mean--arithmetic mean inequality,
    \begin{align}
        \abs{\Expectation_{i\sim\AltDistribution} \widehat{x^T y}(i) - \Expectation_{i\sim\Distribution} \widehat{x^T y}(i)} & \mkern 5mu \leq \mkern 5mu \smashoperator{\sum_{i: \Distribution(i)^{1/2}\leq2\gamma}} \mkern 10mu \abs{x(i)}\abs{y(i)} \mkern2mu + \mkern 5mu \smashoperator{\sum_{i: \Distribution(i)^{1/2}\geq\gamma}} \mkern 10mu \abs{\AltDistribution(i) - \Distribution(i)} \abs{\frac{\hat{x}_i \hat{y}_i}{\hat{p}_i}} \\&\leq 2\sqrt{2}\gamma \min(\norm{x}_1, \norm{y}_1) + \frac{1}{2} 2\Delta = 2\sqrt{2}C_1 \epsilon + \Delta.
    \end{align}

    The stated runtime follows from an analysis analogous to that of the proof of
    Lemma~\ref{lemma:inprod-sym}.
\end{proof}

\section{Application to distributed inner product estimation}
\label{sec:dist-inprod}

Finally, we show that the access model encoded by \textit{approximate sample and query}
and the computational results derived in Section~\ref{sec:computation} suffice to derive
a result of independent interest. (We refer the reader to Section~\ref{sec:pauli-samp} and
Ref.~\cite{Hinsche2024} for a review of the concepts considered in the following
statement.)

\begin{theorem} \TheoremName{Distributed inner product estimation under entanglement and stabilizer norm constraints}
    \label{thm:pauli-inprod}
    Let $\ket{\psi}$ and $\ket{\phi}$ be pure states of $\QubitsN$ qubits, such that both have no more than
    $\chi$ bipartite entanglement across any bipartition of their qubits, and have
    stabilizer norm (at most) $\StabNorm$. If preparing $\ket{\psi}$ or $\ket{\phi}$ takes time $\BigO(T)$, and
    assuming Bell measurements on $\ket{\psi}$ or $\ket{\phi}$ take time $\BigO(T)$, then there is an
    (explicit) procedure, which does not require simultaneously coherent copies of $\ket{\psi}$
    and $\ket{\phi}$, and that estimates $\abs{\braket{\psi}{\phi}}^2$ to absolute error $\epsilon$ with
    high probability in time
    \begin{equation}
        \tilde\BigO[\QubitsN^5 e^{4\chi} \StabNorm^2 \epsilon^{-6} + T \epsilon^{-4} \QubitsN^3 e^{4\chi} \StabNorm^2 + T\epsilon^{-2}\StabNorm^2].
    \end{equation}
\end{theorem}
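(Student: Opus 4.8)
The plan is to reduce the squared fidelity to a real inner product in the Pauli basis and then invoke Lemma~\ref{lemma:inprod-real}, supplying its two required forms of access through Pauli-expectation queries and approximate Pauli sampling performed \emph{locally} on each state. For pure states $\abs{\braket{\psi}{\phi}}^2 = \Tr(\dyad{\psi}\dyad{\phi})$, and since the Pauli strings form an orthonormal basis [eq.~\eqref{eq:pauli-vec}] this equals $\pi_{\dyad{\psi}}^{T}\pi_{\dyad{\phi}}$, the inner product of two \emph{real} unit-$2$-norm vectors. The observation that ties this to stabilizerness is that $\norm{\pi_{\dyad{\psi}}}_1 = d^{-1/2}\sum_i \abs{\Tr(P_i\dyad{\psi})} = \sqrt{d}\,\StabNorm(\ket{\psi})$, so that $\kappa \Defined \sqrt{d}\,\StabNorm$ is a known upper bound to $\min(\norm{\pi_{\dyad{\psi}}}_1, \norm{\pi_{\dyad{\phi}}}_1)$. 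Thus $\pi_{\dyad{\psi}}$ and $\pi_{\dyad{\phi}}$ are precisely the real, normalized vectors for which Lemma~\ref{lemma:inprod-real} estimates the inner product, with the stab-norm entering as $\kappa$.

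Next I would establish the access, treating each state separately so that no coherent copy of $\ket{\psi}\otimes\ket{\phi}$ is ever required. Query access follows from Remark~\ref{remark:asq-pauli}: an entry $\pi_{\dyad{\psi}}(i)$ is estimated from rotated single-qubit measurements of $\ExpVal{\psi}{P_i}$ at cost $\Query(\eta) = \BigO(T d^{-1}\eta^{-2})$. Approximate sample access comes from Pauli sampling: adapting Hinsche \et al.\ (Theorem~\ref{thm:hinsche-pauli}) through Corollary~\ref{corollary:pauli-samp}, so that the cost is governed by the stab-norm $\StabNorm$ rather than the cruder $M_0$ of Theorem~\ref{thm:hinsche-pauli}, one samples from a distribution within total-variation distance $\Delta'$ of $\Distribution_{\pi_{\dyad{\psi}}}$ using $N = \tilde\BigO(\QubitsN^3 e^{4\chi}\StabNorm^2 {\Delta'}^{-4})$ Bell measurements on $\ket{\psi}^{\otimes 2}$, with $\BigO(N\QubitsN^2)$ classical postprocessing per emitted sample. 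Since the proof of Lemma~\ref{lemma:inprod-real} uses the sampler only through its total-variation distance to the ideal distribution (via Lemma~\ref{lemma:approx-tvd}), this approximate guarantee is exactly the hypothesis needed, and I would take $\Delta' = \Theta(\epsilon)$.

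Finally I would substitute these costs into the lemma's runtime $\tilde\BigO[\epsilon^{-2}\Sample + \Query(\epsilon^2\kappa^{-1})]$ at target error $\Theta(\epsilon)$ and collect terms. The crucial point is that the exponentially large $\kappa = \sqrt{d}\,\StabNorm$ does \emph{not} blow up the runtime: the lemma queries entries to precision $\Theta(\epsilon^2\kappa^{-1}) = \Theta(\epsilon^2 d^{-1/2}\StabNorm^{-1})$, and the $d^{-1}$ prefactor of the Pauli query exactly cancels the $d$ coming from $\kappa^2$, giving $\Query(\epsilon^2\kappa^{-1}) = \BigO(T\StabNorm^2\epsilon^{-4})$, while the coarser threshold query at precision $\Theta(\epsilon\kappa^{-1})$ gives $\BigO(T\StabNorm^2\epsilon^{-2})$. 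The $N$ Bell measurements are collected once and reused, costing $\BigO(NT) = \tilde\BigO(T\QubitsN^3 e^{4\chi}\StabNorm^2\epsilon^{-4})$ of quantum time; drawing the $\BigO(\epsilon^{-2})$ samples the estimator requires then costs $\epsilon^{-2}\cdot\BigO(N\QubitsN^2) = \tilde\BigO(\QubitsN^5 e^{4\chi}\StabNorm^2\epsilon^{-6})$ of classical postprocessing. Assembling the quantum, classical, and query contributions---and noting that the finer $\BigO(T\StabNorm^2\epsilon^{-4})$ query term is dominated by the Bell-measurement term since $\QubitsN^3 e^{4\chi}\geq 1$---yields the three stated summands.

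The main obstacle I expect is not the inner-product reduction, which is immediate, but the sampling side: obtaining a Pauli-sampling cost governed by $\StabNorm$ rather than $M_0$. Because the R\'enyi entropies are monotone, $M_0 \geq M_{1/2} = 2\log\StabNorm$, so invoking Theorem~\ref{thm:hinsche-pauli} verbatim gives a strictly worse bound; recovering the $\StabNorm^2$ scaling requires the refined argument of Corollary~\ref{corollary:pauli-samp}, which directly computes overlaps between Pauli amplitudes. A secondary point to handle carefully is the accounting of the one-time versus per-sample cost of Bell sampling, so that the $N$-measurement overhead is paid once on the quantum side rather than $\epsilon^{-2}$ times, and checking that the distributed structure is preserved throughout: each sampled index and each queried amplitude concerns a single state, so the protocol indeed never needs $\ket{\psi}$ and $\ket{\phi}$ coherently.
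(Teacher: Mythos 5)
Your proposal is correct and follows essentially the same route as the paper: reduce $\abs{\braket{\psi}{\phi}}^2$ to the real inner product $\pi_{\dyad{\psi}}^{T}\pi_{\dyad{\phi}}$, identify $\kappa = \sqrt{d}\,\StabNorm$ via the stab-norm identity, supply query access through Remark~\ref{remark:asq-pauli} and sample access through Corollary~\ref{corollary:pauli-samp}, and conclude by Lemma~\ref{lemma:inprod-real}. Your bookkeeping of the one-time Bell-measurement cost versus per-sample postprocessing, and your observation that the lemma only consumes the sampler through total-variation distance, are more explicit than the paper's ``immediate corollary'' phrasing but amount to the same argument.
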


To prove Theorem~\ref{thm:pauli-inprod}, we begin by adapting \cite[Lemmas 2 and
3]{Hinsche2024} to the following:
\begin{lemma} \label{lemma:fadap}
    Let $\rho$ be a $d$-dimensional pure quantum state, and $\tau \in [0,1]$. Denote
    $\alpha_\rho(i) \Defined \Tr(\rho P_i)$, and let $\Distribution_{\pi_\rho}(i) =
        \alpha_\rho(i)^2/d$ be the probability mass function associated to Pauli sampling from
    $\rho$. Let $F_\rho(\tau) \Defined
        \Probability_{i\sim\Distribution_{\pi_\rho}}[\alpha_\rho(i)^2 < \tau]$, and
    $M_{1/2}(\rho)$ be the stabilizer $1/2$-R\'{e}nyi entropy, as defined in
    section~\ref{sec:pauli-samp} (using natural base logarithms). Then,
    \begin{equation}
        F_\rho(\tau) \leq \sqrt{\tau} e^{\frac{1}{2}M_{1/2}(\rho)}.
    \end{equation}
\end{lemma}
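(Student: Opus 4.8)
The plan is to unpack $F_\rho(\tau)$ directly as a sum over Pauli indices and exploit the truncation condition to extract a factor of $\sqrt{\tau}$. First I would note that, by definition, $F_\rho(\tau) = \sum_{i : \alpha_\rho(i)^2 < \tau} \Distribution_{\pi_\rho}(i) = \frac{1}{d}\sum_{i : \alpha_\rho(i)^2 < \tau} \alpha_\rho(i)^2$, since the Pauli sampling distribution assigns mass $\alpha_\rho(i)^2/d$ to index $i$ (recall $\pi_\rho(i) = \alpha_\rho(i)/\sqrt{d}$, so $\Distribution_{\pi_\rho}(i) = \pi_\rho(i)^2 = \alpha_\rho(i)^2/d$).

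The key step is then to observe that on the summation set one has $\abs{\alpha_\rho(i)} < \sqrt{\tau}$, so that $\alpha_\rho(i)^2 = \abs{\alpha_\rho(i)}\cdot\abs{\alpha_\rho(i)} \leq \sqrt{\tau}\,\abs{\alpha_\rho(i)}$. Substituting this bound and then enlarging the summation to run over all $i \in [d^2]$ yields $F_\rho(\tau) \leq \frac{\sqrt{\tau}}{d}\sum_{i \in [d^2]} \abs{\alpha_\rho(i)}$.

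At this point the proof reduces to identifying $\frac{1}{d}\sum_i \abs{\alpha_\rho(i)}$ with $e^{\frac{1}{2}M_{1/2}(\rho)}$. I would recall from Section~\ref{sec:pauli-samp} that the stab-norm is precisely $\StabNorm(\rho) = d^{-1}\sum_i \abs{\Tr(P_i \rho)} = d^{-1}\sum_i \abs{\alpha_\rho(i)}$, and that this coincides with $e^{\frac{1}{2}M_{1/2}(\rho)}$ via the stated identity $M_{1/2} \equiv 2\log\StabNorm$. Combining these gives the claimed bound $F_\rho(\tau) \leq \sqrt{\tau}\,\StabNorm(\rho) = \sqrt{\tau}\,e^{\frac{1}{2}M_{1/2}(\rho)}$.

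I do not expect any genuine obstacle here: the argument is essentially a one-line truncation (Markov-type) estimate dressed up through the magic-to-stab-norm dictionary. The only care required is bookkeeping of the normalization---keeping straight that the sampling weight carries a factor $1/d$ while the stab-norm is likewise normalized by $1/d$, so that these cancel cleanly against the $\sqrt{\tau}$ extracted from the truncation condition, and verifying that the $\alpha=1/2$ case of the magic definition indeed collapses to $2\log\StabNorm$ as asserted earlier.
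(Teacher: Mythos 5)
Your proposal is correct and is essentially the paper's own argument: the paper phrases the bound as Markov's inequality applied to the random variable $1/\abs{\alpha_\rho(i)}$ under $\Distribution_{\pi_\rho}$, whose expectation is exactly the stab-norm $e^{\frac{1}{2}M_{1/2}(\rho)}$, and your direct truncation $\alpha_\rho(i)^2 \leq \sqrt{\tau}\,\abs{\alpha_\rho(i)}$ on the summation set is precisely that inequality written out. Your use of absolute values is in fact slightly cleaner than the paper's, which writes $\sum_i \alpha_\rho(i)/d$ without the modulus.
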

\begin{proof}
    We have that, by definition,
    \begin{equation}
        F_\rho(\tau) \mkern3mu = \mkern 8mu \smashoperator{\sum_{i : \alpha_\rho(i)^2 < \tau}} \mkern 8mu \frac{\alpha_\rho(i)^2}{d}.
    \end{equation}

    On the other hand,
    \begin{equation}
        M_{1/2}(\rho) = 2 \log[\smashoperator{\sum_{i\in[d^2]}}\frac{\alpha_\rho(i)}{d}] = 2\log[\smashoperator{\sum_{i\in[d]^2}}\frac{\alpha_\rho(i)^2}{d}\frac{1}{\alpha_\rho(i)}]
    \end{equation}
    implying that
    \begin{equation}
        e^{\frac{1}{2}M_{1/2}(\rho)} = \Expectation_{i\sim\Distribution_{\pi_\rho}}[\frac{1}{\alpha_\rho(i)}].
    \end{equation}

    Let $X$ be the random variable $X = \alpha_\rho(Y)$ for
    $Y\sim\Distribution_{\pi_\rho}$. The above implies that $\Expectation X =
    2^{\frac{1}{2}M_{1/2}(\rho)}$, and so we may apply Markov's inequality:
    \begin{equation}
        \Probability(X \geq \tau^{-1/2}) \leq \frac{e^{\frac{1}{2}M_{1/2}(\rho)}}{\tau^{-1/2}} \quad\implies\quad \Probability[\alpha_\rho(Y)^2 \leq \tau] \leq \sqrt{\tau}e^{\frac{1}{2}M_{1/2}(\rho)}.
    \end{equation}

    But $\Probability_{i\sim\Distribution_{\pi_\rho}}[\alpha_\rho(i) \leq \tau] \equiv
    F_\rho(\tau)$, concluding the proof.
\end{proof}

Then we may directly adapt the proof of \cite[Theorem 8]{Hinsche2024}, making use of the
above lemma:
\begin{corollary} \TheoremName{Corollary of \cite[Theorem 8]{Hinsche2024}}
    \label{corollary:pauli-samp}
    Let $\rho$ be a $d$ dimensional pure quantum state. Let $M_{1/2}(\rho)$ be its
    stabilizer $1/2$-R\'{e}nyi entropy, and $\chi$ be the maximum R\'{e}nyi entanglement
    entropy across all possible bipartitions of $\rho$. Finally, let
    $\Distribution_{\pi_\rho}$ be the distribution with support $\{1, \ldots, d^2\}$
    corresponding to Pauli sampling on $\rho$, as defined in
    section~\ref{sec:pauli-samp}. For a choice of $\Delta,\delta \in (0,1)$, one may
    generate $i\sim\AltDistribution$, where, with probability at least $1-\delta$,
    \begin{equation}
        \sum_{i\in[d^2]}\abs{\Distribution_{\pi_\rho}(i) - \AltDistribution(i)} \leq \Delta,
    \end{equation}
    requiring $N$ copies of $\rho\otimes\rho$ Bell measurements, with
    \begin{equation}
        N = \BigO(e^{4\chi} e^{2M_{1/2}(\rho)} \Delta^{-4} \log^3 d \log \frac{1}{\delta}).
    \end{equation}

    Each generated sample requires $\BigO(N \log^2 d)$ time in classical postprocessing
    of the obtained Bell measurement data.
\end{corollary}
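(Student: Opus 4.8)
The plan is to re-run the proof of Theorem~\ref{thm:hinsche-pauli} essentially verbatim, changing only the single step where the zeroth magic $M_0$ is used, and there substituting the sharper tail bound of Lemma~\ref{lemma:fadap}. The mechanism of Ref.~\cite{Hinsche2024} recovers the Pauli-sampling distribution $\Distribution_{\pi_\rho}$ from Bell-measurement data by a reweighting (effectively a rejection step) that is intrinsically sensitive to Pauli strings of small squared amplitude $\alpha_\rho(i)^2$: such strings are expensive or impossible to resolve and must be discarded below some threshold $\tau$. The total-variation error introduced by discarding them is exactly the mass $\Prob_{i\sim\Distribution_{\pi_\rho}}[\alpha_\rho(i)^2 < \tau] = F_\rho(\tau)$. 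In the original analysis this mass is controlled through the size of the Pauli support, i.e. through $M_0$; since $M_{1/2}(\rho) \leq M_0$ always, replacing that estimate is what yields the improved corollary.

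First I would fix $\tau$ by splitting the target budget $\Delta$, charging (say) $\Delta/2$ to the cutoff and $\Delta/2$ to the statistical resolution of the retained strings. By Lemma~\ref{lemma:fadap}, $F_\rho(\tau) \leq \sqrt{\tau}\, e^{M_{1/2}(\rho)/2}$, so demanding $\sqrt{\tau}\, e^{M_{1/2}(\rho)/2} \leq \Delta/2$ fixes $\tau = \Theta[\Delta^2 e^{-M_{1/2}(\rho)}]$. This is the only place the threshold choice differs from Ref.~\cite{Hinsche2024}: the same $\sqrt{\tau}$ scaling appears, but the magic-dependent prefactor is now the stabilizer $1/2$-R\'{e}nyi quantity $e^{M_{1/2}(\rho)/2} \equiv \StabNorm(\rho)$ rather than a support count.

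Next I would carry this $\tau$ through the sample-count bookkeeping of the original proof unchanged. The number of Bell measurements needed to resolve the retained squared amplitudes $\alpha_\rho(i)^2 \geq \tau$ to the required accuracy scales as $\Theta(\tau^{-2})$ up to the dimension and entanglement overheads already present, so the substitution turns the original $\Theta(\tau^{-2}) = \Delta^{-4} e^{2M_0}$ into $\Delta^{-4} e^{2M_{1/2}(\rho)}$. The entanglement factor $e^{4\chi}$ (from performing and post-processing the Bell measurements across bipartitions) and the factor $\log^3 d = \Theta(\QubitsN^3)$ (from a union bound over the $d^2$ Pauli labels) are untouched by the substitution and are simply inherited, the latter merely rewritten using $\log d = \Theta(\QubitsN)$. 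To make the failure probability explicit I would run the constant-failure core routine $\BigO[\log(1/\delta)]$ times and take a majority vote, which supplies the $\log(1/\delta)$ factor; the per-sample postprocessing cost $\BigO(N \log^2 d)$ then follows from the $\BigO(N \QubitsN^2)$ bound of Theorem~\ref{thm:hinsche-pauli} under the same rewriting.

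The main obstacle is confirming that $M_0$ genuinely enters the argument of Ref.~\cite{Hinsche2024} only through the discarding step, so that the replacement is clean. Concretely, one must check that $F_\rho(\tau)$ is the precise error quantity charged by dropping low-amplitude strings, and that the statistical error incurred in estimating each retained squared amplitude is charged against that amplitude (which is $\geq \tau$) rather than against the number of surviving strings; only then does the $\Theta(\tau^{-2})$ scaling survive when the cutoff is justified by the tail bound of Lemma~\ref{lemma:fadap} instead of by an exact support count. A secondary bookkeeping point is reconciling the base and normalization conventions, i.e. matching the $2^{2M_0}$ and $\QubitsN^3$ of Theorem~\ref{thm:hinsche-pauli} against the natural-logarithm quantities $e^{2M_{1/2}(\rho)}$ and $\log^3 d$ of the corollary.
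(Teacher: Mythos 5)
Your proposal matches the paper's proof in both strategy and substance: the paper likewise re-runs the argument of Ref.~\cite{Hinsche2024}, replacing only the $M_0$-based control of the truncated tail mass with the bound $\mathfrak{f}(\gamma) \leq F_\rho(\gamma e^{2\chi}) \leq \sqrt{\gamma}\,e^{\chi}e^{M_{1/2}(\rho)/2}$ from Lemma~\ref{lemma:fadap}, splitting the budget $\Delta$ evenly between the cutoff error and the term $\exp(4\epsilon\log d/\gamma)-1$ from their Theorem~7, and letting the rest of the bookkeeping go through unchanged. The one bookkeeping point you flag --- where the entanglement factor enters --- is resolved in the paper by noting that $e^{2\chi}$ sits inside the argument of $F_\rho$, forcing the threshold $\gamma = \Theta[\Delta^2 e^{-2\chi}e^{-M_{1/2}(\rho)}]$ and hence the $e^{4\chi}$ in $N \sim \gamma^{-2}$, consistent with your final count.
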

\begin{proof}
    From \cite[Lemma 1]{Hinsche2024} and the adapted Lemma~\ref{lemma:fadap}, and for
    $\mathfrak{f}, \gamma, \epsilon$ as defined in Ref.~\cite{Hinsche2024},
    \begin{equation}
        \mathfrak{f}(\gamma) \leq F_\rho(\gamma e^{2\chi}) \leq \sqrt{\gamma} e^\chi e^{\frac{1}{2}M_{1/2}(\rho)}.
    \end{equation}
    Then, from \cite[Theorem 7]{Hinsche2024}, the algorithm therein proposed samples from
    a distribution $\AltDistribution$ such that
    \begin{equation}
        \sum_{i\in[d^2]} \abs{\Distribution_{\pi_\rho}(i) - \AltDistribution(i)} \leq \mathfrak{f}(\gamma) + \exp(\frac{4\epsilon\log d}{\gamma}) - 1.
    \end{equation}

    Thus we wish to choose $\gamma$ and $\epsilon$ such that
    \begin{equation}
        \left\{\,\begin{alignedat}{2}
             & \mathfrak{f}(\gamma)                 &  & \leq \frac{\Delta}{2}                                 \\
             & \exp(\frac{4\epsilon\log d}{\gamma}) &  & \leq \exp(\frac{\Delta}{4}) \leq 1 + \frac{\Delta}{2}
        \end{alignedat}\right.
        \impliedby
        \left\{\;\begin{alignedat}{1}
             & \gamma \leq (\frac{\Delta}{2})^2 (e^\chi)^{-2} e^{-M_{1/2}(\rho)} \\
             & \epsilon = \frac{\Delta}{16} \frac{\gamma}{\log d}
        \end{alignedat}\right.
    \end{equation}
    ensuring that the total variation distance between $\Distribution_{\pi_\rho}$ and
    $\AltDistribution$ is guaranteed to be at most $\Delta$. The rest of the proof
    proceeds exactly like in Ref.~\cite{Hinsche2024}.
\end{proof}

But now note that, from section~\ref{sec:pauli-samp}, $\abs{\braket{\psi}{\phi}}^2 \equiv
\Tr(\dyad{\psi} \dyad{\phi}) \equiv {\pi_{\dyad{\psi}}}^T \pi_{\dyad{\phi}}$, and
\begin{equation}
    e^{\frac{1}{2}M_{1/2}(\rho)} \equiv \StabNorm(\rho) \equiv d^{-1} \sum_{i \in [d]^2} \abs{\Tr(P_i\rho)} \equiv d^{-1/2}\norm{\pi_\rho}_1.
\end{equation}

Thus Theorem~\ref{thm:pauli-inprod} follows as an immediate corollary of
Remark~\ref{remark:asq-pauli}, Lemma~\ref{lemma:inprod-real}, and
Corollary~\ref{corollary:pauli-samp}.

We note that the approach we have outlined above differs from that of
Ref.~\cite{Hinsche2024} in the way Pauli sampling is used to construct an inner
product estimator. As opposed to their symmetric and antisymmetric protocols, we
leverage \ASQ\space to create an estimator that more directly computes an overlap between the ``Pauli
amplitudes'' (square root of the Pauli distributions of the two states). From
this results the polynomial improvements in sample and time complexity.

We also note that by proving Theorem~\ref{thm:pauli-inprod} with the apparatus
of \textit{approximate sample and query} we provide an alternative
interpretation for why Pauli sampling is useful for the task of distributed
inner product estimation, and why only states with low nonstabilizerness admit
efficient procedures. On one hand, we see that the Pauli representation is
especially cheap to query from when one is allowed to measure the expectation
value of Pauli strings. Due to the normalization factor of $d^{-1/2}$ in each
entry of the Pauli representation of a state, one may learn it to error
$\epsilon/d^{1/2}$ by learning the Pauli string's expectation value to only
error $\epsilon$. On the other hand, we find that the classes of states with low
nonstabilizerness measures correspond to states with low $1$-norm when expressed
in the Pauli basis. Thus, we may argue that for some states sampling in the
Pauli basis is useful because the states' representation in that basis is quite
``peaked''.

\section{Discussion and outlook}

We have shown that the \textit{sample and query} model can be adapted to more accurately
reflect the ability to prepare and measure block encoded states, thereby addressing the
comments of Ref.~\cite{Colter2021}. This adaptation preserves the (arguably) two defining
characteristics of the original \textit{sample and query} model: those of composability, and the
ability to perform meaningful computation with this form of access. However, this
adaptation is not without its drawbacks: the randomized nature of the model poses new
problems in deriving results analogous to those of Ref.~\cite{Chia2020}, and, in general,
the proof techniques employed in Ref.~\cite{Tang2019} and onwards do not seem to
translate well without careful adaptation.

The ultimate goal for such a framework would be to achieve a dequantization result akin
to that of Ref.~\cite{Chia2020} for the ASQ model: not only because of the practical
implications of such a result, but also because of the insight regarding quantum
computational power that this would provide. While we hope to have established a first step in
this direction, it is unclear if such a result should be expected. We can extend
the \textit{approximate sample and query} model to block encodings of matrices --- we do
so in Appendix~\ref{app:asq-bemat} --- but we observe that Ref.~\cite{Chia2020}'s results
strongly rely on the classical technique of ``sketching'' \cite{DKLR1995}. This technique
does not translate immediately to a randomized and finite error setting (as is the case
for the other proof techniques, as stated), and so the next steps in this direction will
require either a careful adaptation or a completely original approach.

Nonetheless, we have shown that the ASQ model is both expressive and satisfiable in varied setups.
Faithfully to the kinds of results obtained with the SQ model,
Theorem~\ref{thm:pauli-inprod} establishes not only ``worst-case scenario'' computational
complexities, but rather a full characterization with respect to the input data. This
provides better bounds for ``well behaved'' data, as well as quantifiable overheads for
``badly behaved'' data.

\acknowledgements

The authors would like to thank M.~Szegedy, D.~Magano, S.~Pratapsi, E.~Trombetti,
L.~Bugalho, M.~Casariego, M.~Hinsche, S.~Jerbi, J.~Carrasco, M.~Ioannou, A.~Kontorovich,
and D.~Berend for their help and comments.
We thank the support from FCT --- Funda\c{c}\~{a}o para a Ci\^{e}ncia e a Tecnologia
(Portugal), namely through project {UIDB/04540/2020}, as well as from projects QuantHEP
and HQCC supported by the EU QuantERA ERA-NET Cofund in Quantum Technologies and by FCT
({QuantERA/0001/2019} and {QuantERA/004/2021}, respectively), from the EU Quantum
Flagship project EuRyQa (101070144), from the BMWi (PlanQK), and the BMBF (QPIC-1, HYBRID). MM would like to further acknowledge and thank FCT
--- Funda\c{c}\~{a}o para a Ci\^{e}ncia e a Tecnologia (Portugal) for its support through
the scholarship with identifier {2021.05528.BD} and DOI {10.54499/2021.05528.BD}.

\bibliographystyle{quantum}
\bibliography{citations}

\appendix

\section{Proof of \texorpdfstring{Remark~\ref{remark:inprod-asymm-approx}}{Remark 27}}
\label{app:inprod-asymm-approx}

Let $\epsilon \in (0, 1]$, $\phi \geq 1$, and $x, x', \tilde{x} \in \Complex^d$, such
that for a suitable constant $C_1 > 0$,
\begin{gather}
    \norm{\tilde{x} - x'} \leq \frac{C_1 \epsilon}{4\sqrt{\phi}} \norm{\tilde{x}}, \\
    \norm{\tilde{x}}^2 \leq \phi \norm{x}^2, \\
    \norm{x} \leq 1,
\end{gather}
and $\Distribution_{\tilde{x}}$ is a $\phi$-oversampling distribution for
$\Distribution_x$.

Assume Approximate Sample access to $x'$ and Approximate Query access to $x$. From
Lemma~\ref{lemma:approx-tvd}, the $\ell_1$ distance between $\Distribution_{\tilde{x}}$
and $\Distribution_{x'}$ is at most $C_1\epsilon/\sqrt{\phi} \RDefined \Delta$.
Furthermore, from a triangle inequality, $\norm{x'} \leq (1 + 2\Delta)\norm{\tilde{x}}$,
and from the definition of $\Distribution_{\tilde{x}}$, $\abs{x'(i)} = \norm{x'}
\sqrt{\Distribution_{x'}(i)}$.

From the Approximate Query access to $x$, one has that for a suitable constant $C_2$ and
in time $\Query(C_2\epsilon)$, and with probability $2/3$,
\begin{equation}
    \lforall[i \in [d]] \hat{x}_i : \abs{\hat{x}_i - x(i)} \leq C_2\epsilon.
\end{equation}
Let, for convenience of notation, $\epsilon_Q \Defined C_2\epsilon$.

Finally, for a choice of constant $C_3$, one has from Theorem~\ref{thm:smpl-estim} that
taking $\BigO[\epsilon^{-2}\phi^2 \log(d/\delta)]$ samples of $\Distribution_{x'}$ is
sufficient to estimate
\begin{equation}
    \lforall[i \in [d]] \hat{p}_i : \abs{\hat{p}_i - \Distribution_{x'}(i)} \leq C_3 \frac{\epsilon^2}{2\phi}
\end{equation}
with overall probability $1-\delta$. Again for convenience of notation we define
$\epsilon_P \Defined C_3\epsilon^2/\phi$.

With these choices, define the point estimator
\begin{equation}
    \widehat{x^\dagger y}(i) = \begin{cases}
        0                                                   & \text{if } \hat{p}_i \leq \frac{3}{2}\epsilon_P, \\
        \frac{1}{\hat{p}_i + \epsilon_P/2} \hat{x}_i^* y(i) & \text{otherwise.}
    \end{cases}
\end{equation}

Using the definitions above, the triangle inequality (possibly with double counting), and
H\"{o}lder inequalities, as well as the fact that $\abs{x(i)} \leq \abs{\tilde{x}(i)}$,
and $\ell_p$ norm inequalities, the bias of this estimator is bounded as
\begin{align}
    \abs{\Expectation_{i\sim\Distribution_{x'}} \widehat{x^\dagger y}(i) - x^\dagger y \mkern 3mu} \hspace*{-.5in} & \\
                                                                                                             & \leq \mkern 15mu \smashoperator{\sum_{i : {\Distribution_{x'}(i) \leq 2\epsilon_P}}} \mkern 20mu \abs{x(i)}\abs{y(i)} \mkern 10mu + \mkern-13mu \sum_{i : {\Distribution_{x'}(i) \geq \epsilon_P}} \mkern-2mu \abs{\mkern2mu \frac{\Distribution_{x'}(i)}{\hat{p}_i + \epsilon_P/2} \hat{x}_i^* y(i) - x(i)^* y(i) \mkern2mu}                                                                                                                                                                \\
                                                                                                             & \leq \begin{multlined}[t] \Big[\sum_{i: \Distribution_{x'}(i) \leq 2\epsilon_P} \abs{\tilde{x}(i) - x'(i)}\abs{y(i)} + \abs{x'(i)}\abs{y(i)} \mkern 10mu \Big] \\
                                                                                                                        {}+ \Big[\sum_{i:\Distribution_{x'}(i) \geq \epsilon_P} \abs{\frac{\Distribution_{x'}(i)}{\hat{p}_i + \epsilon_P/2} - 1}\abs{x(i)}\abs{y(i)} + \abs{\hat{x}_i - x(i)}\abs{y(i)} \\{}+ \abs{\frac{\Distribution_{x'}(i)}{\hat{p}_i + \epsilon_P/2} - 1} \abs{\hat{x}_i - x(i)} \abs{y(i)} \mkern8mu \Big] \mkern-18mu \end{multlined}                                                                                                                                                   \\
                                                                                                             & \leq \begin{multlined}[t] \Big[ \Delta \norm{y}_\infty + \mkern5mu \smashoperator{\sum_{i:{\Distribution_{x'}(i) \leq 2\epsilon_P}}} \mkern10mu \norm{x'}\sqrt{\Distribution_{x'}(i)}\abs{y(i)} \Big] \\{} + \Big[ \sum_{i:{\Distribution_{x'}(i) \geq \epsilon_P}} \frac{\epsilon_P}{\Distribution_{x'}(i)} \abs{x(i)} \abs{y(i)} + \epsilon_Q\abs{y(i)} + \frac{\epsilon_P}{\Distribution_{x'}(i)} \epsilon_Q \abs{y(i)} \mkern 8mu \Big] \end{multlined} \\
                                                                                                             & \leq \begin{multlined}[t] \Big[ \Delta \norm{y}_1 + 2\sqrt{\epsilon_P}(1 + 2\Delta) \norm{\tilde{x}}\norm{y}_1 \Big] \\ + \Big[ \sum_{i: \Distribution_{x'}(i) \geq \epsilon_P} \abs{\tilde{x}(i) - x'(i)}\abs{y(i)} + \abs{x'(i)}\abs{y(i)} + \epsilon_Q \norm{y}_1 + \epsilon_Q \norm{y}_1 \mkern8mu \Big] \end{multlined}                                                                             \\
                                                                                                             & \leq [C_1 + \sqrt{C_3}(1 + 2C_1)]\epsilon \norm{y}_1 + \Delta\norm{y}_1 \mkern2mu + \mkern2mu \smashoperator{\sum_{i:\Distribution_{x'}(i)\geq\epsilon_P}} \mkern10mu \norm{x'}\sqrt{\Distribution_{x'}(i)}\abs{y(i)} + 2\epsilon_Q\norm{y}_1                                                                                                                                                                                                                                              \\
                                                                                                             & \leq [C_1 + (1 + 2C_1)\sqrt{C_3} + 2C_2] \; \epsilon\norm{y}_1.
\end{align}

Likewise, we bound the variance of this estimator with the same techniques and a geometric
mean--quadratic mean inequality:
\begin{align}
    \Variance_{i\sim\Distribution_{x'}} \widehat{x^\dagger y}(i) & \leq \Expectation_{i\sim\Distribution_{x'}} \abs{\widehat{x^\dagger y}(i)}^2                                                                                                 \\
                                                                 & \leq \sum_{i:{\Distribution_{x'}(i)\geq\epsilon_P}} \frac{\Distribution_{x'}(i)}{\hat{p}_i + \epsilon_P/2} \frac{1}{\hat{p}_i + \epsilon_P/2} \abs{\hat{x}_i}^2 \abs{y(i)}^2 \\
                                                                 & \leq \sum_{i:{\Distribution_{x'}(i)\geq\epsilon_P}} \frac{1}{\Distribution_{x'}(i)} [\abs{x(i)} + \epsilon_Q]^2 \abs{y(i)}^2                                                 \\
                                                                 & \leq \sum_{i:{\Distribution_{x'}(i)\geq\epsilon_P}} \frac{1}{\Distribution_{x'}(i)} [\abs{x'(i)} + \abs{\tilde{x}(i) - x'(i)} + \epsilon_Q]^2 \abs{y(i)}^2                   \\
                                                                 & \leq 3\norm{x'}^2\norm{y}^2 + 3 \frac{1}{\epsilon_P} \Delta^2\norm{y}^2 + 3\frac{\epsilon_Q^2}{\epsilon_P} \norm{y}^2                                                        \\
                                                                 & \leq [3(1+2C_1)^2 + 3\frac{C_1^2}{C_3} + 3\frac{C_2^2}{C_3}]\;\phi\norm{y}^2.
\end{align}

Therefore we have a bias bounded as $\BigO(\epsilon\norm{y}_1)$, and a variance bounded
as $\BigO(\phi\norm{y}^2)$. It follows from the Chebyshev inequality that by taking the
average of $\BigO(\phi\epsilon^{-2}\delta^{-1})$ valid samples of the point estimator,
the resulting value deviates from $x^\dagger y$ at most $\epsilon\norm{y}_1$ with
probability at least $1-\delta$. The rest of the proof follows identically to that of
Lemma~\ref{lemma:inprod-asym}.

\section{Proof of \texorpdfstring{Remark~\ref{remark:inprod-sym-approx}}{Remark 29}}
\label{app:inprod-sym-approx}

Let $\epsilon \in [0,1)$, and $x$, $y$, $\tilde{x}$, $\tilde{y}$ such that for suitable
constants $C_1$, $C_2$, $C_3$: $\norm{x},\norm{y} \leq 1$, $\norm{\tilde{x}}^2 \leq
\phi\norm{x}^2$, $\norm{\tilde{y}}^2 \leq \phi \norm{y}^2$, and
\begin{alignat}{2}
    \norm{x' - \tilde{x}} & \leq \frac{C_1 \epsilon}{4\sqrt{\phi}} \norm{\tilde{x}}, & \qquad  \norm{y' - \tilde{y}} & \leq \frac{C_1 \epsilon}{4\sqrt{\phi}} \norm{\tilde{y}}.
\end{alignat}
Define, for convenience of notation, $\Delta \Defined C_1\epsilon/\sqrt{\phi}$.
Furthermore assume that $\Distribution_{\tilde{x}}$ is a $\phi$-oversampling distribution
for $\Distribution_x$, and $\Distribution_{\tilde{y}}$ is a $\phi$-oversampling
distribution for $y$. Note that the conditions imposed on $\norm{\tilde{x} - x'}$ and
$\norm{\tilde{y} - y'}$ imply that $\norm{x'} \leq (1+\Delta/4)\norm{\tilde{x}}$ and
likewise $\norm{y'} \leq (1+\Delta/4)\norm{\tilde{y}}$.

Let $\Distribution$ and $\AltDistribution$ be defined by their probability mass
functions; for all $i\in[d]$,
\begin{alignat}{2}
    \Distribution(i) & = \frac{\Distribution_{\tilde{x}}(i) + \Distribution_{\tilde{y}}(i)}{2}, & \qquad  \AltDistribution(i) & = \frac{\Distribution_{x'}(i) + \Distribution_{y'}(i)}{2}.
\end{alignat}
One may sample from $\Distribution$ by outputting uniformly randomly from either
$\Distribution_{\tilde{x}}$ or $\Distribution_{\tilde{y}}$, and from $\AltDistribution$
by outputting uniformly randomly from either $\Distribution_{x'}$ or
$\Distribution_{y'}$.

From Theorem~\ref{thm:smpl-estim}, one has that $\BigO(\epsilon^{-4}\phi^2\log d)$
samples of $\AltDistribution$ suffice to estimate $\AltDistribution(i)$ to error
$C_2\epsilon^2/(2\phi)$ with probability $8/9$. [After this, per the Theorem, reading an
estimate takes time $\BigO(\log d)$.] Let, for ease of notation $\gamma \Defined
C_2\epsilon^2/\phi$, such that each estimate of $\AltDistribution(i)$, say, $\hat{q}_i$
is accurate to error $\gamma/2$ (with error probability $1/9$).

Likewise, if $\epsilon_Q \Defined C_3\epsilon^2/\phi$, one has from the Approximate Query
access that, for all $i=1, \ldots, d$, it is possible to compute an estimate of $x(i)$
and $y(i)$ --- say, $\hat{x}_i$ and $\hat{y}_i$, respectively --- to absolute error
$\epsilon_Q$ and with error probability $1/3$ in time $\Query(\epsilon_Q)$.

To relate $\abs{x(i)}$ to $\AltDistribution(i)$, we will make use of the following bound,
following from the definitions above and a triangle inequality:
\begin{align}
    \abs{x(i)} \leq \abs{\tilde{x}(i)} & \leq \abs{x'(i)} + \abs{x'(i) - \tilde{x}(i)} \leq \norm{x'} \sqrt{\Distribution_{x'}(i)} + \abs{x'(i) - \tilde{x}(i)} \\&\leq (1 + \Delta/4)\norm{\tilde{x}} \sqrt{2\AltDistribution(i)} + \abs{x'(i) - \tilde{x}(i)} \leq (1 + 2\Delta)\sqrt{2\phi} \sqrt{\AltDistribution(i)} + \abs{x'(i) - \tilde{x}(i)}.
\end{align}
An analogous bound holds for $\abs{y(i)}$. We will also make use of the following fact,
which makes use of an arithmetic mean--geometric mean inequality:
\begin{equation}
    \abs{x(i)}\abs{y(i)} \leq \frac{\abs{\tilde{x}(i)}}{\norm{\tilde{x}}} \frac{\abs{\tilde{y}(i)}}{\norm{\tilde{y}}} \norm{\tilde{x}}\norm{\tilde{y}} \leq \phi \Distribution(i) \leq \phi[\AltDistribution(i) + \abs{\Distribution(i) - \AltDistribution(i)}].
\end{equation}
Finally, as elsewhere, we will make use of the fact that if $\hat{q}_i$ is an estimate of
$\AltDistribution(i)$ to error $\epsilon_P/2$, then $(\hat{q}_i + \epsilon_P/2)^{-1} \leq
\AltDistribution(i)^{-1}$.

Now, define the estimator
\begin{equation}
    \widehat{x^\dagger y}(i) = \begin{cases}
        0                                                    & \text{if } \hat{q}_i \leq \frac{3}{2} \gamma \\
        \frac{1}{\hat{q}_i + \gamma/2} \hat{x}_i^* \hat{y}_i & \text{otherwise.}
    \end{cases}
\end{equation}
The bias of this estimator (with respect to $x^\dagger y$) is bounded as
\begin{equation}
    \abs{\Expectation_{i\sim\AltDistribution} \widehat{x^\dagger y} - x^\dagger y \mkern 2mu } \leq \sum_{i:\AltDistribution(i)\leq2\gamma} \abs{x(i)} \abs{y(i)} \mkern 5mu + \mkern-5mu \sum_{i:\AltDistribution(i)\geq\gamma} \abs{\frac{\AltDistribution(i)}{\hat{q}_i + \gamma/2} \hat{x}_i^* \hat{y}_i - x(i)^* y(i)}
\end{equation}
as follows from a triangle inequality and some potential double counting. We further
bound these two terms separately, making use of the Cauchy-Schwarz inequality and the
identities above:
\begin{align}
    \sum_{i:\AltDistribution(i)\leq2\gamma} \abs{x(i)}\abs{y(i)}                                                                      & \leq (1 + \frac{\Delta}{4}) 2 \sqrt{\phi\gamma}\norm{y}_1 + \mkern -15mu \sum_{i:\AltDistribution(i)\leq2\gamma} \mkern -5mu \abs{x'(i) - \tilde{x}(i)} \abs{y(i)}                                                                                                                                                                         \\
                                                                                                                                      & \leq 2(1 + \frac{\Delta}{4}) \sqrt{\gamma \phi} \norm{y}_1 + \norm{\tilde{x} - x'} \norm{y} \leq 2(1 + \frac{\Delta}{4})\sqrt{\gamma\phi} \norm{y}_1 + \frac{\Delta}{4} \sqrt{\phi}.
    \intertext{But, by a symmetry argument, we have}
    \sum_{i:\AltDistribution(i)\leq2\gamma} \abs{x(i)}\abs{y(i)}                                                                      & \leq 2(1 + \frac{\Delta}{4}) \sqrt{\gamma\phi} \min(\norm{x}_1, \norm{y}_1) + \frac{\Delta}{4} \sqrt{\phi}.
\end{align}
The second term is bounded as
\begin{align}
    \sum_{i:\AltDistribution(i)\geq\gamma} \abs{\frac{\AltDistribution(i)}{\hat{q}_i + \gamma/2} \hat{x}_i^* \hat{y}_i - x(i)^* y(i)} & \leq \sum_{i:\AltDistribution(i)\geq\gamma} \begin{multlined}[t] \abs{\frac{\AltDistribution(i)}{\hat{q}_i + \gamma/2} - 1} \abs{x(i)} \abs{y(i)} + \abs{\hat{x}_i^* \hat{y}_i - x(i)^* y(i)} \\+ \abs{\frac{\AltDistribution(i)}{\hat{q}_i + \gamma/2} - 1} \abs{\hat{x}_i^* \hat{y}_i - x(i)^* y(i)} \end{multlined} \\
                                                                                                                                      & \hskip-3cm \leq \sum_{i:\AltDistribution(i)\geq\gamma} \frac{\gamma}{\AltDistribution(i)} \abs{x(i)}\abs{y(i)} + 2[\epsilon_Q \abs{x(i)} + \epsilon_Q \abs{y(i)} + \epsilon_Q^2]                                                                                                                                                  \\
                                                                                                                                      & \hskip-3cm \leq (1 + \frac{\Delta}{4}) \sqrt{2\phi\gamma} \min(\norm{x}_1, \norm{y}_1) + \frac{\Delta}{4} \sqrt{\phi} + \Big[4\sqrt{\phi} \frac{\epsilon_Q}{\sqrt{\gamma}} + 2\frac{\epsilon_Q}{\gamma}\epsilon_Q\Big].
\end{align}

Therefore, the bias is overall bounded as
\begin{equation}
    (2\sqrt{2C_2}) \epsilon\min(\norm{x}_1, \norm{y}_1) + (\frac{C_1}{4} + \frac{4C_3}{\sqrt{C_2}} + \frac{2C_3^2}{C_2}) \epsilon = \BigO\{\epsilon\,[1 + \min(\norm{x}_1, \norm{y}_1)]\}.
\end{equation}

We now bound the variance of the estimator:
\begin{align}
    \Variance_{i\sim\AltDistribution} \widehat{x^\dagger y}(i) & \leq \Expectation_{i\sim\AltDistribution} \abs{\widehat{x^\dagger y}(i)}^2 \leq \sum_{i:\AltDistribution(i)\geq\gamma} \frac{1}{\AltDistribution(i)} \abs{\hat{x}_i}^2 \abs{\hat{y}_i}^2 \leq 4\sum_{i:\AltDistribution(i)\geq\gamma} \frac{1}{\AltDistribution(i)} (\abs{x(i)}^2 + \epsilon_Q^2) (\abs{y(i)}^2 + \epsilon_Q^2)                                    \\
                                                               & \leq \mkern3mu 4\mkern-8mu\sum_{i:\AltDistribution(i)\geq\gamma}\frac{1}{\AltDistribution(i)}\abs{x(i)}^2 \abs{y(i)}^2 + \mkern3mu 4\mkern-8mu\sum_{i:\AltDistribution(i)\geq\gamma}\frac{1}{\AltDistribution(i)}\epsilon_Q^2(\abs{x(i)}^2 + \abs{y(i)}^2) + \mkern3mu 4\mkern-8mu\sum_{i:\AltDistribution(i)\geq\gamma}\frac{1}{\AltDistribution(i)}\epsilon_Q^4.
\end{align}
Again further bounding term by term:
\bgroup\def\Tab{\hphantom{\sum_{i:\AltDistribution(i)\geq\gamma}\frac{1}{\AltDistribution(i)}\abs{x(i)}^2 \abs{y(i)}^2}}
\begin{align}
     & \sum_{i:\AltDistribution(i)\geq\gamma}\frac{1}{\AltDistribution(i)}\abs{x(i)}^2 \abs{y(i)}^2 \leq 4\phi \sum_{i:\AltDistribution(i)\geq\gamma} \frac{\Distribution(i)}{\AltDistribution(i)} \abs{x(i)}\abs{y(i)} \leq 4\phi + 4\phi \sum_{i:\AltDistribution(i)\geq\gamma} \frac{\abs{\Distribution(i) - \AltDistribution(i)}}{\AltDistribution(i)} \abs{x(i)}\abs{y(i)} \\
     & \Tab \leq 4\phi + 4\phi^2 \Delta + 4\phi^2 \sum_{i:\AltDistribution(i)\geq\gamma} \abs{\Distribution(i) - \AltDistribution(i)}^2 \frac{1}{\AltDistribution(i)} \leq 4\phi + 4\phi^2 \Delta + 4\phi^2 \frac{\Delta^2}{\gamma}                                                                                                                                             \\
     & \Tab \leq 4(1 + C_1 + \frac{C_1^2}{C_2}) \phi^2;                                                                                                                                                                                                                                                                                                                         \\
     & 4 \mkern-8mu \sum_{i:\AltDistribution(i)\geq\gamma}\frac{1}{\AltDistribution(i)}\epsilon_Q^2(\abs{x(i)}^2 + \abs{y(i)}^2) \leq 8\phi\sum_{i:\AltDistribution(i)\geq\gamma} \frac{\Distribution(i)}{\AltDistribution(i)} \epsilon_Q^2 \leq 8\phi \frac{\epsilon_Q^2}{\gamma} = \frac{8 C_3^2}{C_2};                                                                       \\
     & 4\mkern-8mu\sum_{i:\AltDistribution(i)\geq\gamma}\frac{1}{\AltDistribution(i)}\epsilon_Q^4 \leq 4\frac{\epsilon_Q^4}{\gamma^2} = \frac{256\, C_3^4}{C_2^2}.                                                                                                                                                                                                              \\
\end{align}
\egroup

Thus, the variance is bounded as $\BigO(\phi^2)$, and the rest of the proof proceeds identically to that of Lemma~\ref{lemma:inprod-sym}.

\section{Approximate sample and query for block encoded matrices}
\label{app:asq-bemat}

\begin{definition}{\textit{Approximate $\phi$-oversample and query} access --- matrices}
    For $\phi\geq 1$, say that one has Approximate $\phi$-Oversample and Query
    ($\ASQ[\phi]$) access to a matrix $A \in \Complex^{n\times m}$ if, for a matrix
    $\tilde{A}$, satisfying $\norm{\tilde{A}}_F^2 \leq \phi\norm{A}_F^2$ and
    $\abs{\tilde{A}(i, j)}^2 \geq \abs{A(i, j)}$, for all $i,j \in [m]\times[n]$, one has
    query access to every column of $A$; AS access (cf.~Def.~\ref{def:as}) to every column
    of $\tilde{A}$; and one has AS access to the vector of column norms of $\tilde{A}$,
    i.e., $\tilde{a} \in \Real^m$ with entries
    $$ \tilde{a}(j) = \sqrt{\sum_{i\in[n]}\abs{\tilde{A}(i,j)}^2} $$
    with one-sided error probability at most $1/3$ and runtime $\textbf{s}_A$.
\end{definition}

Let $U$ be a $\QubitsM$-block encoding of a $2^\QubitsN\times2^\QubitsN$ matrix $A$
(cf.~section~\ref{sec:block-enc}). Given access to singly coherent
applications of both $U$ and $U^\dagger$, we claim that it is possible to satisfy $\ASQ[\phi]$ access
to $A$. The first two conditions, as outlined above, are easy to satisfy: if one prepares
$U\ket{0}^{\otimes\QubitsM}\ket{j}$, then one has a block encoding of the $j$th column of
$A$, and we have seen that this is sufficient to satisfy ASQ access to the $j$th column
of $A$. Therefore, it remains to show that the third condition can also be satisfied.
Consider the following procedure:
\begin{enumerate}
    \item Choose uniformly randomly a string of $\QubitsN$ bits.
    \item Prepare the corresponding computational basis state.
    \item Run the computational basis state through $U^\dagger$ (with the block register
          suitably initialized to all-zeroes).
    \item Postselecting for the correct block, output the readout of the remaining
          register.
\end{enumerate}

One finds that the output is the index of a column of $A$ (say, $k$), with probability
$\norm{A(*,k)}^2/\norm{A}_F^2$, where $A(*,k)$ is the $k$th column of $A$. Moreover, for the procedure to succeed with probability $2/3$, one
finds that $\BigO(d \norm{A}_F^2)$ rounds of post-selection are expected to be required.

\end{document}